\newtheorem{theorem}{Theorem}
\newtheorem{fact}[theorem]{Fact}
\newtheorem{claim}[theorem]{Claim}
\newtheorem{observation}[theorem]{Observation}
\newtheorem{corollary}[theorem]{Corollary}
\newtheorem{lemma}[theorem]{Lemma}
\newtheorem{definition}[theorem]{Definition}
\newtheorem{openproblem}[theorem]{Problem}
\newtheorem*{theorem*}{Theorem}
\newtheorem*{definition*}{Definition}
\newtheorem*{lemma*}{Lemma}
\newtheorem*{corollary*}{Corollary}
\newcommand{\norm}[1]{\left\| #1 \right\|}
\newcommand{\ketbra}[2]{{{\ket{#1}}\hspace{-2pt}{\bra{#2}}}}
\begin{document}
\def\bH {\mathbb{H}}
\def\bR {\mathbb{R}}
\def\bZ {\mathbb{Z}}
\def\CP {\mathbb{C P}}

\def\bC {\mathbb{C}}

\def\bI {\mathbb{I}}
\def\bx {\mathbf{x}}

\newcommand{\Tr}{{\rm{Tr~}}}
\newcommand{\tr}{{\rm{tr~}}}

\def\N {\mathcal{N}}
\def\k{\ket}
\title{Stabilizer Ranks, Barnes Wall Lattices and Magic Monotones}
\author{Amolak Ratan Kalra}
\affiliation{Institute for Quantum Computing, University of Waterloo, Waterloo, Ontario, Canada}
\affiliation{David R. Cheriton School of Computer Science, University of Waterloo, Waterloo, Ontario, Canada}
\affiliation{Perimeter Institute for Theoretical Physics, Waterloo, Ontario, Canada}
\author{Pulkit Sinha}
\affiliation{Institute for Quantum Computing, University of Waterloo, Waterloo, Ontario, Canada}
\affiliation{David R. Cheriton School of Computer Science, University of Waterloo, Waterloo, Ontario, Canada}

\maketitle

\begin{abstract}
In 2024, Kliuchnikov and Schönnenbeck showed a connection between the Barnes Wall lattices, stabilizer states and Clifford operations. In this work, we study their results and relate them to the problem of lower bounding stabilizer ranks. We show the first quantitative lower bound on stabilizer fidelity as a function of stabilizer ranks, which reproduces the linear-by-log lower bound for $\chi_{\delta}({\k{H}^{\otimes n}})$, i.e., on the approximate stabilizer rank of $\ket{H}^{\otimes n}$. In fact, we show that the lower bound holds even when the fidelity between the approximation and $\ket{H}^{\otimes n}$ is exponentially small, which is currently the best lower bound in this regime. This is shown via an intermediate result between stabilizer fidelity and stabilizer rank , which also gives us an $\Omega(\log n /\log\log n)$ lower bound on the stabilizer rank of pseudorandom quantum states.

\noindent Next, we define a new magic monotone for pure states, the \emph{Barnes Wall norm}, and its corresponding approximate variant. We upper bound these monotones by the $CS$-count of state preparation and also by the stabilizer ranks. In particular, the upper bound given by the $CS$-count is tight, in the sense that we exhibit states that achieve the bound.

\noindent Apart from these results, we give a Fidelity Amplification algorithm, which provides a trade-off between approximation error and the stabilizer rank. As a corollary, it gives us a way to compose approximate stabilizer decompositions into approximate decompositions of their tensor products. In particular this recovers the best known approximation for ${\k{H}^{\otimes n}}$. Furthermore we show that this best known approximation is effectively a Barnes Wall lattice approximation. It can also be shown that the Barnes wall norm of this approximation asymptotically matches the upper bound we present.

\noindent Finally, we provide an alternative, elementary proof of the existence and density of product states with maximal stabilizer ranks, which was first proven by Lovitz and Steffan (2022), where they used results from algebraic geometry.
\end{abstract}
\section{Introduction}
Understanding the boundary between classical and quantum computation is a problem of vital importance in the field of quantum information. One way to approach this problem is via the lens of classical simulation of quantum circuits. In this arena, we attempt to quantify the cost of simulating different families of quantum circuits by giving explicit classical algorithms to simulate them and then study how the efficiency of these algorithms scales with the number of input qubits.
The first landmark result in this approach was given by Gottesman and Knill \cite{gottesman1998heisenbergrepresentationquantumcomputers}, which showed that arbitrary quantum circuits using stabilizer states and measurements can be classically simulated efficiently (see also \cite{Aaronson_2004}). While these stabilizer circuits are not universal, they can be made to perform universal quantum computation by supplementing them with special non-stabilizer states called magic states.

In \cite{smith}, Bravyi, Smith and Smolin described a new algorithm to simulate these universal quantum circuits. They decomposed the magic states as linear combinations of a small number of stabilizer states, and then simulated the stabilizer circuits on this combination instead. The minimum number of such stabilizer states needed for a given magic state $\ket\psi$ is referred to as the \emph{stabilizer rank} of $\ket{\psi}$, denoted as $\chi(\k\psi)$. For such simulation algorithms, $\chi({\k\psi})$ is effectively the most important metric of efficiency or computational complexity. In particular, if we are able to efficiently find polynomially sized decompositions for magic states like $\ket{H}^{\otimes n}$, then this has various complexity-theoretic implications such as $BQP=BPP$ and even $P=NP$ \cite{Peleg2022lowerbounds}. Also, instead of finding an exact decomposition in terms of stabilizers, we can also hope to approximate the magic states with linear combination of stabilizer states instead. The corresponding rank in this case is called the $\delta$-approximate stabilizer rank, denoted as $\chi_{\delta}(\k\psi)$.

The problem of finding optimal bounds for $\chi(\ket{\psi})$ and $\chi_{\delta}(\k\psi)$ for tensor powers of magic states, termed the \emph{stabilizer rank} problem, has received a lot of attention, with various works chipping away at both the upper and lower bounds, with most results focusing on the ranks of $\ket{H}^{\otimes n}$ or (equivalently) $\ket{T}^{\otimes n}$. Here, $\ket H=\cos \frac \pi 8 \ket 0+\sin\frac \pi 8 \ket 1$, and $\ket T=\frac 1 {\sqrt 2}(\ket 
0 +e^{i\pi/4}\ket 1)$.

For $\chi(\k H^{\otimes n})$, the first lower bound of $\Omega(\sqrt n)$ was given by Bravyi, Smith and Smolin in \cite{smith}, with a corresponding upper bound of $O(2^{n/2})$. In a series of works, \cite{smith,Bravyi2019simulationofquantum, Bravyi_2016,Peleg2022lowerbounds,Saeed,Lovitz2022newtechniques,Qassim2021improvedupperbounds}, these lower and upper bounds were improved to $\Omega(\frac{n^2}{\mathrm{polylog}(n)})$ and $O(2^{0.4n})$ respectively. 

For $\chi_{\delta}(\k H^{\otimes n})$, the best known lower and upper bounds are $\Omega(\frac{n^2}{\mathrm{polylog}(n)})$ \cite{Saeed} and $O(2^{0.23n})$\cite{Bravyi2019simulationofquantum} for constant $\delta$.

In order to study the stabilizer rank, various other measures of stabilizer complexities have been introduced, including but not limited to: stabilizer fidelity, stabilizer extent and stabilizer alignment (see \cite{Bravyi2019simulationofquantum}), with emphasis on relating these to each other and as well as to stabilizer ranks \cite{mehraban2024improved,Bravyi2019simulationofquantum, Heimendahl2021stabilizerextentis}. In particular, we know that tensor products of magic states, such as $\ket{H}^{\otimes n}$, have exponentially large stabilizer extent, and exponentially small stabilizer fidelity.

\subsection{Overview of the Main Results}
In a recent work, Kliuchnikov and Schönnenbeck \cite{kliuchnikov2024stabilizer} demonstrated that for $n$ qubit Barnes Wall lattices, the automorphism group of the lattice, as well as the set of minimum length vectors, are upto phase exactly the Clifford unitaries and stabilizer states respectively. In our work, we study the problem of bounding stabilizer ranks and other notions of stabilizer complexities from the perspective of these recent results. Till now none of the techniques developed to study stabilizer ranks make use of this particular connection. Another reason to pursue this connections is that similar number theoretic techniques have been used to give synthesis algorithms for single qubits with close to optimal $T$-count \cite{kmm,selinger}.

We first use such a lattice interpretation in Section \ref{sec_linearlowerbound} to prove the first quantitative lower bound on stabilizer fidelity based on the stabilizer rank. Such a bound was shown to exist by Mehraban and Tahmasbi \cite{mehraban2024improved}, but the result was only existential and did not give a concrete bound. We do this via an intermediate result between the stabilizer rank and stabilizer extent:

\begin{theorem}[Upper bound on stabilizer extent]
    \label{thm_extent_vs_rank}
    Let $\ket{\phi}$ be a pure state such that $\ket\phi=\sum_{j=1}^k c_i\ket {s_j}$, where $\ket{s_j}$'s are all linearly independent stabilizer states, $c_j\in \mathbb C$. Then, we have that $\norm{\mathbf c}_1\leq \sqrt e\cdot (2k)^{\frac {2k+1}2}$. 
\end{theorem}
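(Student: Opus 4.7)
The plan is to reduce the $\ell_1$ bound on $\mathbf c$ to a lower bound on the Gram determinant of $\ket{s_1},\dots,\ket{s_k}$, and then invoke the Barnes--Wall lattice structure to estimate that determinant. Let $G$ be the $k\times k$ Gram matrix with $G_{ij}=\braket{s_i|s_j}$; linear independence of the $\ket{s_j}$ forces $G\succ 0$. I would introduce the dual family $\ket{t_j}$ in $\mathrm{span}\{\ket{s_i}\}$ characterized by $\braket{t_i|s_j}=\delta_{ij}$. A short computation gives $\ket{t_j}=\sum_l(G^{-1})_{lj}\ket{s_l}$ and $\|\ket{t_j}\|^2=(G^{-1})_{jj}$; since $c_j=\braket{t_j|\phi}$, Cauchy--Schwarz yields $|c_j|\le\sqrt{(G^{-1})_{jj}}$.

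Next I would estimate $(G^{-1})_{jj}$ by the cofactor formula $(G^{-1})_{jj}=\det(\check G_j)/\det G$, where $\check G_j$ is the $(k{-}1)\times(k{-}1)$ principal minor obtained by deleting the $j$-th row and column. Since $\check G_j$ is positive semidefinite with unit diagonal entries, Hadamard's inequality gives $\det(\check G_j)\le 1$, so $(G^{-1})_{jj}\le 1/\det G$ and
\[
\norm{\mathbf c}_1 \;\le\; \sum_{j=1}^{k} \sqrt{(G^{-1})_{jj}} \;\le\; \frac{k}{\sqrt{\det G}}.
\]
Thus the theorem reduces to proving $\det G \gtrsim (2k)^{-2k}$, independently of the number of qubits on which the $\ket{s_j}$ live.

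For the lower bound on $\det G$ I would use the Kliuchnikov--Sch\"onnenbeck correspondence between stabilizer states and minimum-norm vectors of the Barnes--Wall lattice. Rescaling each $\ket{s_j}$ by $\alpha_n:=\sqrt{M_n}$, where $M_n$ is the minimum squared norm of $\Lambda_n:=BW_n$, places $\ket{\tilde s_j}:=\alpha_n\ket{s_j}$ at the minimum of $\Lambda_n$; the $\mathbb Z[i]$-span $\tilde L:=\mathrm{span}_{\mathbb Z[i]}\{\ket{\tilde s_j}\}$ is then a rank-$k$ $\mathbb Z[i]$-sublattice of $\Lambda_n$. Its Hermitian Gram matrix is $\alpha_n^2 G$, so the covolume of $\tilde L$ viewed as a real rank-$2k$ lattice is $\alpha_n^{2k}\det G$, while every nonzero vector of $\tilde L$ has norm at least $\alpha_n$. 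Applying Minkowski's first theorem in real dimension $2k$ then gives $\alpha_n \le 2\bigl(\alpha_n^{2k}\det G/V_{2k}\bigr)^{1/(2k)}$, in which the $\alpha_n$'s cancel, producing the $n$-independent bound $\det G \ge V_{2k}/4^k$. Using $V_{2k}=\pi^k/k!$ and a crude estimate for $k!$ yields $\det G \ge (\pi/(4k))^k$, and substituting into $k/\sqrt{\det G}$ gives, after some slack, the stated $\sqrt e\,(2k)^{(2k+1)/2}$.

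The main obstacle is making the Barnes--Wall step fully rigorous: one has to fix the correct normalization of $\Lambda_n$, verify that $\tilde L$ is a genuine $\mathbb Z[i]$-sublattice (not merely a spanning subset) so the minimum-distance bound applies, and track $\alpha_n$ carefully so its contribution truly cancels between the minimum-norm lower bound and the covolume. Once the lattice-theoretic lower bound on $\det G$ is in place, the Hadamard--cofactor reduction above converts it into the claimed estimate on $\norm{\mathbf c}_1$.
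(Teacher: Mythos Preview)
Your proposal is correct and shares the paper's key insight: the $\mathbb Z[i]$-span of the $\ket{s_j}$ is a sublattice of $BW_n$ whose shortest nonzero vector has norm at least $1$, so Minkowski's theorem lower-bounds the Gram determinant independently of $n$. Your worries about normalization dissolve in the paper's convention, where stabilizer states already sit at the minimum of $BW_n$ with norm $1$, so $\alpha_n=1$ and the cancellation is trivial; and the discreteness of the $\mathbb Z[i]$-span is automatic since it is a sublattice of the discrete $BW_n$.

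Where you diverge is in the passage from the Gram-determinant bound to the bound on $\norm{\mathbf c}_1$. The paper works with the real $2k\times 2k$ Gram matrix $G$, observes that the real coefficient vector $v$ (splitting each $c_j$ into real and imaginary parts) satisfies $v^\top G v=1$ while $\norm{v}_2^2\ge \norm{\mathbf c}_1^2/(2k)$, so the smallest eigenvalue of $G$ is at most $2k/\norm{\mathbf c}_1^2$; an AM--GM argument on the eigenvalues (using $\tr G=2k$) then gives $\det G\le e\,\lambda_{\min}$, and comparing with the Minkowski lower bound $\det G\ge (2k)^{-2k}$ yields the stated estimate. Your dual-basis/cofactor/Hadamard route is more direct and in fact quantitatively sharper: with the same Minkowski input $\det G_{\mathbb C}\ge (2k)^{-k}$ it gives $\norm{\mathbf c}_1\le k/\sqrt{\det G_{\mathbb C}}\le k\,(2k)^{k/2}$, roughly the square root of the paper's $\sqrt e\,(2k)^{(2k+1)/2}$. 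The paper's eigenvalue formulation, on the other hand, is what motivates their open problem about controlling further eigenvalues of $G$ to tighten the bound.
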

This theorem is established using Minkowski's theorem for lattices. At an intuitive level, it tells us that the determinant of the Gram matrix of lattice generators is large, if the minimal vector of the lattice is long. When we consider the lattice generated by the stabilizer states in the above theorem, since this determinant is large, we can show that the stabilizer extent is small.
\begin{theorem}[Lower bound on stabilizer fidelity]\label{thm_fid_vs_rank}
    Let $\ket \phi$ be a pure state with stabilizer fidelity $F_S$. For any state $\ket{\psi}$ with $\chi(\psi)=k$, we have $\frac{|\braket{\phi|\psi}|}{\sqrt{F_S}} \leq \sqrt{e} (2k)^{\frac{2k+1}{2}}$. In particular, if $\frac{|\braket{\phi|\psi}|}{\sqrt{F_S}}=2^{\Omega(n)}$, then $k=\Omega\left(\frac n {\log n}\right)$. 
\end{theorem}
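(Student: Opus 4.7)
The plan is to obtain Theorem \ref{thm_fid_vs_rank} as an essentially immediate corollary of Theorem \ref{thm_extent_vs_rank}, by combining the triangle inequality on $\braket{\phi|\psi}$ with the definitional bound $|\braket{\phi|s}|\le \sqrt F$ that every stabilizer state $\ket s$ satisfies against $\ket\phi$.

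First I would fix a minimum-length stabilizer decomposition $\ket\psi=\sum_{j=1}^k c_j\ket{s_j}$ witnessing $\chi(\psi)=k$. A small but essential preliminary observation is that the $\{\ket{s_j}\}$ in any such minimum-length decomposition must be linearly independent; otherwise one of the stabilizer states could be expressed as a linear combination of the others and eliminated, contradicting the minimality of $k$. This is exactly the hypothesis required to invoke Theorem \ref{thm_extent_vs_rank}, which therefore yields
\[
\norm{\mathbf c}_1 \le \sqrt e\cdot (2k)^{\frac{2k+1}{2}}.
\]

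Next I would expand the overlap and apply the triangle inequality:
\[
|\braket{\phi|\psi}| = \Bigl|\sum_{j=1}^{k} c_j \braket{\phi|s_j}\Bigr| \le \norm{\mathbf c}_1 \cdot \max_{j} |\braket{\phi|s_j}| \le \norm{\mathbf c}_1 \cdot \sqrt F,
\]
where the last inequality is just the definition of stabilizer fidelity $F=\max_{\ket s \text{ stabilizer}} |\braket{\phi|s}|^2$. Dividing by $\sqrt F$ and substituting the bound from Theorem \ref{thm_extent_vs_rank} gives the main inequality. For the asymptotic consequence, assuming $\sqrt e \cdot (2k)^{(2k+1)/2} \ge 2^{\Omega(n)}$, taking base-$2$ logarithms gives $(k+\tfrac{1}{2})\log_2(2k)=\Omega(n)$, and inverting $k\log k = \Omega(n)$ yields $k = \Omega(n/\log n)$ as claimed.

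There is no real technical obstacle here, since all the work has been done in proving Theorem \ref{thm_extent_vs_rank}. The only subtlety worth spelling out carefully is the linear-independence reduction for minimum-length stabilizer decompositions, since this is precisely the bridge that allows one to transport a bound stated for linearly independent decompositions (Theorem \ref{thm_extent_vs_rank}) into a bound phrased in terms of the stabilizer rank $\chi(\psi)$.
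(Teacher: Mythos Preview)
Your proof is correct and follows the same overall strategy as the paper: both reduce the claim to Theorem~\ref{thm_extent_vs_rank} by establishing the intermediate inequality $\norm{\mathbf c}_1 \ge |\braket{\phi|\psi}|/\sqrt F$ for a linearly independent rank-$k$ stabilizer decomposition of $\ket\psi$, and then invoking the bound $\norm{\mathbf c}_1\le\sqrt e\,(2k)^{(2k+1)/2}$. Your derivation of that intermediate inequality is in fact more direct than the paper's: you apply the triangle inequality to $\braket{\phi|\psi}=\sum_j c_j\braket{\phi|s_j}$ and use $|\braket{\phi|s_j}|\le\sqrt F$ straight from the definition of stabilizer fidelity. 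The paper instead first replaces $\ket\psi$, without loss of generality, by the normalized projection $P\ket\phi/\norm{P\ket\phi}$ onto $\mathrm{span}\{\ket{s_j}\}$, shows $|\braket{\psi|s_j}|\le\sqrt F/|\braket{\phi|\psi}|$ via this projection, and then expands $1=\braket{\psi|\psi}$ to reach the same inequality. That detour is unnecessary given your argument, which handles an arbitrary $\ket\psi$ with $\chi(\ket\psi)=k$ directly; the linear-independence observation you make for a minimum-length decomposition is exactly what is needed to legitimately invoke Theorem~\ref{thm_extent_vs_rank}.
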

 This linear-by-log lower bound holds not only for the exact stabilizer rank for states with exponentially small stabilizer fidelity, such as $\ket{H}^{\otimes n}$, but it holds even when the approximate stabilizer decomposition has only some exponentially small inner product with the said states.
 \begin{corollary}[Linear-over-log lower bound]
     For any state $\ket{\psi}$ with $F(\ket\psi,\ket{H}^{\otimes n})\geq c^n$, for any constant $c>\cos^2(\frac \pi 8)$, we have that $\chi(\k\psi)=\Omega(\frac n{\log n})$.
 \end{corollary}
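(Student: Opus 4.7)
The plan is to reduce the corollary to a direct application of Theorem~\ref{thm_fid_vs_rank} with $\ket\phi = \ket{H}^{\otimes n}$. First, I would recall the standard fact that the stabilizer fidelity of $\ket{H}$ is $\cos^2(\pi/8)$, achieved by the stabilizer state $\ket{0}$ (since $\ket{H} = \cos(\pi/8)\ket{0}+\sin(\pi/8)\ket{1}$ and no stabilizer state of a single qubit can do better). Since stabilizer fidelity of a tensor product of single-qubit magic states is multiplicative (the optimal approximating stabilizer is the tensor product of the single-qubit optima, which can be verified by a short Cauchy--Schwarz / Pauli-decomposition argument), we get
\[
F\bigl(\ket{H}^{\otimes n}\bigr) = \cos^{2n}(\pi/8).
\]

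Next, I would plug $\ket\phi = \ket{H}^{\otimes n}$ into Theorem~\ref{thm_fid_vs_rank}. The hypothesis gives $|\braket{\psi|H^{\otimes n}}| \geq c^{n/2}$, and combined with the above,
\[
\frac{|\braket{\phi|\psi}|}{\sqrt{F}} \;\geq\; \frac{c^{n/2}}{\cos^{n}(\pi/8)} \;=\; \left(\frac{\sqrt{c}}{\cos(\pi/8)}\right)^{\!n}.
\]
The assumption $c > \cos^2(\pi/8)$ is exactly what makes the base strictly larger than $1$, so the ratio is $2^{\Omega(n)}$. Theorem~\ref{thm_fid_vs_rank} then yields $\chi(\ket\psi) = \Omega(n/\log n)$.

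There is no real obstacle here beyond unpacking definitions; the only subtle point is confirming that stabilizer fidelity tensorises for $\ket{H}^{\otimes n}$, which I expect to follow either from a known lemma cited in the stabilizer rank literature or from the observation that $\ket{H}^{\otimes n}$ has its amplitudes concentrated in a way that any stabilizer state achieving overlap $> \cos^{2n}(\pi/8)$ would project to contradict the single-qubit bound on some marginal. In summary, the corollary is essentially a drop-in consequence of Theorem~\ref{thm_fid_vs_rank} once the exact value of $F(\ket{H}^{\otimes n})$ is identified.
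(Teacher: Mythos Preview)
Your proposal is correct and matches the paper's intended argument exactly: apply Theorem~\ref{thm_fid_vs_rank} with $\ket\phi=\ket{H}^{\otimes n}$, use the known value $F(\ket{H}^{\otimes n})=\cos^{2n}(\pi/8)$ (which the paper records as a Fact from \cite{Bravyi2019simulationofquantum} rather than reproving), and observe that the ratio is $2^{\Omega(n)}$ precisely when $c>\cos^2(\pi/8)$. The only superfluous part is your sketch of why stabilizer fidelity tensorises, since the paper simply cites this.
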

 Our lower bounds are the best in this regime, as none of the other results, including the quadratic lower bound in \cite{Saeed}, provide any non trivial lower bounds for this. Moreover, as observed in \cite{mehraban2024improved}, such a relation between stabilizer rank and fidelity helps imply a lower bound on stabilizer ranks for pseudorandom states using Theorem 1 from \cite{Grewal2022}.
\begin{corollary}
    Given polynomially many copies of $\ket{\psi}$, we can distinguish, with high probability, the following two cases using a polynomial sized quantum circuit:
    \begin{enumerate}
        \item $\chi(\ket\psi)=O\left(\frac{\log n}{\log \log n}\right)$
        \item $\ket \psi$ is a Haar random unit vector.
    \end{enumerate}
\end{corollary}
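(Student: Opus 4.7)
The plan is to combine Theorem \ref{thm_fid_vs_rank} with the distinguishing algorithm of Grewal, Iyer, Kretschmer, and Liang (Theorem 1 of \cite{Grewal2022}), which says that a state with stabilizer fidelity at least $1/p(n)$ for some polynomial $p$ can be distinguished from a Haar-random state using $O(p(n))$ copies and a polynomial-size quantum circuit (via Bell-difference sampling). Since a Haar-random state has stabilizer fidelity $2^{-\Omega(n)}$ with overwhelming probability (there are $2^{O(n^2)}$ stabilizer states, each having expected squared overlap $2^{-n}$ with a random state, and one can union bound), it suffices to argue that any state with low stabilizer rank has polynomially lower-bounded stabilizer fidelity.

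First I would instantiate Theorem \ref{thm_fid_vs_rank} with $\ket{\phi}=\ket{\psi}$, so that $|\braket{\phi|\psi}|=1$. This yields the bound $1/\sqrt{F(\ket\psi)}\leq \sqrt{e}\,(2k)^{(2k+1)/2}$, i.e.\ $F(\ket\psi)\geq \frac{1}{e(2k)^{2k+1}}$, where $k=\chi(\ket\psi)$. Substituting $k=c\log n/\log\log n$ and taking logarithms, the exponent $(2k+1)\log(2k)$ becomes $O(\log n)$, so $(2k)^{2k+1}=n^{O(1)}$. Therefore any state in case (1) satisfies $F(\ket\psi)\geq 1/\mathrm{poly}(n)$.

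Next I would invoke Theorem 1 of \cite{Grewal2022}: there is a polynomial-time algorithm that, given $\mathrm{poly}(n)$ copies of $\ket\psi$, distinguishes with high probability between the case $F(\ket\psi)\geq 1/\mathrm{poly}(n)$ and the case that $\ket\psi$ is Haar-random. Feeding the lower bound from the previous paragraph into this subroutine yields exactly the distinguisher claimed in the corollary. (If desired, one can amplify success probability by independent repetition, since each trial consumes only polynomially many fresh copies.)

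The only real delicacy is matching the precise input/output conventions of the Grewal--Iyer--Kretschmer--Liang algorithm: their statement is phrased in terms of fidelity to the nearest stabilizer state, which is exactly $F(\ket\psi)$ as we use it, so no translation is needed. Everything else is plug-and-play; the substantive content lies in Theorem \ref{thm_fid_vs_rank}, which is why the regime of interest is pushed all the way up to $k=O(\log n/\log\log n)$ rather than anything larger.
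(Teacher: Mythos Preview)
Your proposal is correct and follows exactly the approach the paper indicates: the paper does not spell out a proof but simply notes that the corollary follows by combining Theorem~\ref{thm_fid_vs_rank} with Theorem~1 of \cite{Grewal2022}, and your write-up fills in precisely those details (instantiating $\ket\phi=\ket\psi$ to get $F(\ket\psi)\ge 1/(e(2k)^{2k+1})$, checking that $k=O(\log n/\log\log n)$ makes this $1/\mathrm{poly}(n)$, and then invoking the Bell-sampling distinguisher). Nothing further is needed.
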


This improves upon the $\omega(1)$ lower bound provided in \cite{mehraban2024improved}. Note that the only observation that we used for this result was that the size of the minimal vector was a constant, i.e, we did not use the fact that the minimum is attained by the stabilizer states. To capture this notion of minimality of stabilizer states, in Section \ref{sec_magic}, we defined a new measure on states in $\mathbb C^{2^n}$, which we call the Barnes Wall norm.

\begin{definition*}
The Barnes Wall norm of a state $\ket{\psi}$, denoted $\N(\ket\psi)$, is the length of the smallest vector on the Barnes Wall lattice proportional to $\ket \psi$.
\end{definition*}
We also define the corresponding approximate variant of this measure, denoted as $\N_\delta(\k\psi)$. This norm enables us to argue about the stabilizer rank of magic states using properties exclusive to the Barnes Wall lattice in Theorem \ref{thm_monotone_vs_rank}. In particular, we show that if a state has high approximate Barnes Wall norm, then it also has high approximate stabilizer rank. It also turns out that these norms satisfy many of the properties that we expect from magic monotones:

\begin{theorem}[Magic monotone properties]~\label{thm_magic_props} The following properties hold for $\N$ and $\N_{\delta}$:
\begin{enumerate}[(i)]
\item $\N$ and $\N_{\delta}$ is invariant under the action of the Clifford Group.
\item $\N(\k\phi)\geq1$, with equality only for stabilizer states.
\item When $\N(\ket\phi),~\N(\ket{\psi})<\infty$, we have $\N(\k{\phi}\otimes{\k\psi})=\N(\k\phi)\otimes \N(\k\psi)$. In particular, $\N(\k\phi^{\otimes n})=\N(\k\phi)^{\otimes n}$.
\item For all non-stabilizer $ \k\phi$,  $\N(\k\phi^{\otimes n})=2^{\Omega(n)}$.
\item $\N$ is non-increasing under uniform probability $1$-qubit Pauli measurements (post-selecting on any outcomes). If it indeed does decrease, it has to divide the original value. 
\item If the measurement does not have uniform probability, then we have the weaker constraint that $\N(\ket\phi)\geq \mathbb E[\N(\ket{\psi})]$, where $\ket\psi$ is the random post-measurement state. 
\item $\N(\k\phi)$ is either $\infty$ or $\N(\k{\phi})$ is of the form $\frac{c}{{(1+i)}^{n}}$ where $c\in \mathbb{Z}[i]$.
\end{enumerate}
\end{theorem}
The first two properties follow directly from the fact that stabilizer states are minimal vectors of the Barnes Wall lattice, and the Clifford group is the automorphism group of the lattice. The other properties of this norm can be shown using the dual of the Barnes Wall lattice. The dual is first used to give a characterization of finding the shortest vector on the lattice proportional to a given vector. This shortest vector for a given state is exactly the vector we use to measure the Barnes Wall norm $\mathcal{N}(\ket{\phi})$. Furthermore the divisibility properties of this norm can also be used to bound the $CS$ count:

\begin{theorem}\label{thm_monotone_CS_count}
Let $\ket{\phi}$ be an $n$-qubit state which can be prepared exactly with $m$ $CS$ gates and Clifford operations without any intermediate measurement then:
\[
\mathcal{N}(\k\phi)\,~|~\, \mathcal{N}(\k{CS})^{m}=2^m
\]
If it does need intermediate measurements then we have the weaker result of \[\mathcal{N}(\k\phi)\leq 2^m \]
\end{theorem}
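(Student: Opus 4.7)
The plan is to invoke the standard $CS$-gate injection protocol, which simulates one $CS$ gate by consuming a single $\ket{CS}$ magic state via Clifford operations and a one-qubit Pauli measurement. After this substitution, the preparation of $\ket\phi$ becomes a Clifford-plus-measurement circuit acting on the initial resource state $\ket{0}^{\otimes n}\otimes\ket{CS}^{\otimes m}$, at which point the properties of $\mathcal N$ from Theorem~\ref{thm_magic_props} can be applied step by step.

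First, by property~(iii) of Theorem~\ref{thm_magic_props}, the Barnes-Wall norm of this initial resource is
\[
\mathcal N\bigl(\ket{0}^{\otimes n}\otimes\ket{CS}^{\otimes m}\bigr)=\mathcal N(\ket{0})^{n}\cdot\mathcal N(\ket{CS})^{m}=1^{n}\cdot 2^{m}=2^{m},
\]
using property~(ii) for $\mathcal N(\ket{0})=1$ together with the direct computation $\mathcal N(\ket{CS})=2$.

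For the first assertion (no intermediate measurements in the original circuit), the only measurements in the rewritten circuit are those coming from the injection protocol, which, because $\ket{CS}$ has all amplitudes of the same magnitude, are uniform over their outcomes and one-qubit Pauli. I would then induct on the number of operations applied: property~(i) shows each Clifford step preserves $\mathcal N$, while property~(v) shows that each injection measurement either preserves $\mathcal N$ or replaces it by a proper divisor. Composing these divisibilities, $\mathcal N$ of the final joint state divides $2^{m}$. Since after all injections the ancilla register is left in a computational-basis (hence stabilizer) state of norm $1$, a final application of property~(iii) yields $\mathcal N(\ket\phi)\mid 2^{m}$.

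For the second assertion, the original circuit's intermediate measurements need not have uniform outcome probabilities, so I would instead apply property~(vi), which provides only the weaker bound $\mathcal N(\ket{\text{pre}})\geq \mathbb E[\mathcal N(\ket{\text{post}})]$. Because the preparation is exact, every branch of the measurement tree ends with the main register in $\ket\phi$ and the ancilla in a stabilizer state, so the norm of the end state equals $\mathcal N(\ket\phi)$ on every branch. Telescoping the expectation bound across the non-uniform intermediate measurements, together with property~(i) for Cliffords and property~(v) for the injection measurements, then gives $\mathcal N(\ket\phi)\leq 2^{m}$. The main obstacle is the book-keeping in the injection step: one must verify that the $CS$-injection circuit truly consists of Cliffords together with uniform one-qubit Pauli measurements on the ancillae (so that property~(v), and not merely property~(vi), applies to those steps) and that every branch leaves the consumed ancillae in stabilizer states, so that the final factorization via property~(iii) is clean.
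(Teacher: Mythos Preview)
Your proposal is correct and follows the paper's approach: replace each $CS$ gate by the injection protocol acting on $\ket{0}^{\otimes n}\otimes\ket{CS}^{\otimes m}$, then invoke properties~(i), (v), and~(vi) of Theorem~\ref{thm_magic_props} to track $\mathcal N$ through the Cliffords and measurements. Note that in the paper's formulation of property~(v) the measured qubit is discarded rather than retained, so your final factorization step via property~(iii) to strip off the ancillae is unnecessary but harmless.
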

Here $\ket{CS}$ is the magic state $\ket{CS}=\frac 1 2(\ket {00}+\ket{01}+\ket{10}+i\ket{11})$.
We also relate $\N_\delta(\ket{\psi})$ to the approximate stabilizer rank of $\ket\psi$, using a lattice approximation lemma:

\begin{theorem}\label{thm_monotone_vs_rank}
For any $\ket{\psi}\in \mathbb C^{2^n}$ and $\delta,\delta_0>0$, \[\N_{\delta+\delta_0}(\ket {\phi})\leq\frac {2\chi_\delta(\ket{\phi})}{\delta_0^2}.\] 
\end{theorem}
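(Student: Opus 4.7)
The plan is to convert a rank-$k$ approximate stabilizer decomposition of $\ket{\phi}$ directly into an explicit Barnes Wall lattice vector via a scale-and-round argument. The key structural fact I will invoke is the Kliuchnikov--Sch\"onnenbeck characterization already discussed: every stabilizer state (with an appropriate phase representative) is a minimum vector of the Barnes Wall lattice $L$, and $L$ is a $\mathbb{Z}[i]$-lattice, so every $\mathbb{Z}[i]$-linear combination of stabilizer states is again in $L$.

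Concretely, I would fix a decomposition realizing $\chi_\delta(\ket{\phi})=k$, say $\ket{\tilde\phi}=\sum_{i=1}^k c_i\ket{s_i}$ where $\ket{\tilde\phi}$ is a unit vector $\delta$-close (up to phase) to $\ket{\phi}$. For a scale parameter $N>0$ to be optimised later, round each $Nc_i$ to the nearest Gaussian integer $n_i\in\mathbb{Z}[i]$, so that $|n_i-Nc_i|\le 1/\sqrt{2}$, and set
\[ \ket{v}:=\sum_{i=1}^k n_i\ket{s_i}. \]
The structural fact ensures $\ket{v}\in L$, and the triangle inequality yields $\norm{\ket{v}-N\ket{\tilde\phi}}\le k/\sqrt{2}$ since each $\ket{s_i}$ is a unit vector.

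The remaining work is to translate this into an approximation of $\ket{\phi}$ by the normalised lattice vector $\ket{v}/\norm{\ket{v}}$. Writing $\ket{\epsilon}:=\ket{v}-N\ket{\tilde\phi}$ and decomposing it into components parallel and perpendicular to $\ket{\tilde\phi}$, the perpendicular piece has norm at most $k/\sqrt{2}$, which controls the angle between $\ket{v}$ and $\ket{\tilde\phi}$ and hence the distance of $\ket{v}/\norm{\ket{v}}$ from $\ket{\tilde\phi}$. Chaining with the $\delta$-approximation $\ket{\tilde\phi}\approx\ket{\phi}$ through the triangle inequality in whichever metric defines $\N_\delta$ gives a total error of at most $\delta+O(k/N)$. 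Taking $N$ of order $k/\delta_0^2$ then simultaneously forces the fresh error to be at most $\delta_0$ and yields $\norm{\ket{v}}\le N+k/\sqrt{2}\le 2k/\delta_0^2$, producing the claimed upper bound on $\N_{\delta+\delta_0}(\ket{\phi})$.

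The main obstacle I foresee is the bookkeeping around normalisation and conventions: converting the $\ell_2$ bound $\norm{\ket{v}-N\ket{\tilde\phi}}\le k/\sqrt{2}$ into the unit-vector approximation demanded by the definition of $\N_\delta$, and ensuring that the two error parameters $\delta,\delta_0$ compose additively in the appropriate metric. The somewhat loose-looking factor $2/\delta_0^2$, rather than the sharper $O(1/\delta_0)$ one might expect from a sine-distance calculation, likely reflects a deliberately crude estimate at the level of infidelity. A secondary subtlety is that one must pick phase representatives of the $\ket{s_i}$ that genuinely lie in $L$, so that $\sum n_i\ket{s_i}$ is a bona fide lattice vector rather than merely proportional to one.
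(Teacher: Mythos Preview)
Your scale-and-round strategy is exactly the right shape, and indeed the paper's proof also scales $\ket{\phi}$ by a large constant and rounds the coefficients to $\mathbb Z[i]$. But the deterministic nearest-Gaussian-integer rounding you propose is too crude to reach the stated bound. With $|n_i-Nc_i|\le 1/\sqrt{2}$ and the triangle inequality you only get $\norm{\ket v-N\ket{\tilde\phi}}\le k/\sqrt{2}$, i.e.\ an error that scales like $k$. Since $\N_\delta$ is defined as the \emph{squared} norm $\norm{c\ket{\phi}}_2^2$ (Definition~\ref{def_approx_bw_norm}), forcing the rescaled error $k/(N\sqrt{2})$ below $\delta_0$ requires $N\ge k/(\delta_0\sqrt{2})$, and then the quantity you actually need to bound is $N^2=\Theta(k^2/\delta_0^2)$, not $2k/\delta_0^2$. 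Your final paragraph bounds $\norm{\ket v}$ rather than $\norm{c\ket\phi}^2$ and chooses $N\sim k/\delta_0^2$, which is where the argument slips; with either correction you land at a $k^2$ bound, not the linear one.

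What the paper does instead is a \emph{randomised} rounding: round each real coefficient $c_i$ to $\lfloor c_i\rfloor$ or $\lceil c_i\rceil$ with probabilities $1-\{c_i\}$ and $\{c_i\}$, so that the rounding errors $X_i$ have mean zero and $|X_i|\le 1$. Then $\mathbb E\norm{\sum_i X_i\ket{s_i}}^2=\sum_i\mathbb E[X_i^2]\le k$ because the cross terms $\mathbb E[X_iX_j]\braket{s_i|s_j}$ vanish by independence. For a $\mathbb Z[i]$-lattice this gives a lattice point within $\sqrt{2k}$ of $c\ket\phi$, and now $c=\sqrt{2k}/\delta_0$ yields $c^2=2k/\delta_0^2$ exactly. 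So the missing idea is this $\sqrt{k}$-versus-$k$ improvement from exploiting cancellation rather than the triangle inequality. As a side remark, your worries about normalisation are unnecessary: the definition of $\N_\delta$ never asks you to normalise the lattice vector, only to exhibit some $c$ with $c(\ket\phi+\ket\psi)\in BW_n$ and $\norm{\ket\psi}\le\delta$, which is precisely the scaled-but-unnormalised setup you already have.
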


It is immediately clear that approximating magic states with stabilizer decomposition is at most as hard as providing exact decompositions. However, it is not immediate that approximation doesn't suddenly becomes hard after a certain threshold of error. That is, it is a possibility that there are some approximation constants $\delta_1,\delta_2\in (0,1)$ such that $\chi_{\delta_1}(\ket{H}^{\otimes n})$ is polynomial in $n$, while $\chi_{\delta_2}(\ket{H}^{\otimes n})$ is exponential. In Section \ref{sec_erroramplification}, we show that such scenarios are not possible, via our Fidelity Amplification result. For this, define the \emph{relative error} of pure state $\ket{\phi}$ with respect to $\ket{\psi}$ to be $\frac{1-F(\ket\phi,\ket\psi)}{F(\ket{\phi},\ket{\psi})}$, where $F(\ket{\phi}, \ket{\psi})$ is the fidelity of $\ket{\phi}, \ket{\psi}$.

\begin{theorem}[Fidelity amplification for $\ket{H}^{\otimes n}$]\label{thm_error_amp}
    Given a stabilizer decomposition of $\ket {H}^{\otimes n}$ with relative error $\epsilon$ and rank $k$, there exists another stabilizer decomposition with rank $O(\alpha k)$ and relative error $\frac {\epsilon}{\alpha}$. 
\end{theorem}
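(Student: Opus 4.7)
\begin{proofsketch}
The plan is an averaging construction over Cliffords that stabilize $\ket{H}^{\otimes n}$. Normalize the given decomposition and write it as $\ket{\tilde\psi} = \sqrt{F}\,\ket{H}^{\otimes n} + \sqrt{1-F}\,\ket{r}$ with $\ket{r}\perp \ket{H}^{\otimes n}$ and $F=1/(1+\epsilon)$. Qubit permutations $\pi \in S_n$ fix $\ket{H}^{\otimes n}$, and so does the global Hadamard $H^{\otimes n}$ (a direct verification gives $H\ket{H}=\ket{H}$). Let $G$ denote the Clifford subgroup they generate.

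For $\alpha$ elements $g_1,\ldots,g_\alpha \in G$, form $\ket{\hat\psi} := \tfrac{1}{\alpha}\sum_{i=1}^\alpha g_i\ket{\tilde\psi}$, which has stabilizer rank $\le \alpha k$ since each summand is a rank-$k$ stabilizer combination. Using $g_i\ket{H}^{\otimes n} = \ket{H}^{\otimes n}$ and $\braket{H^{\otimes n}|r}=0$, the overlap $\braket{H^{\otimes n}|\hat\psi}$ stays at $\sqrt{F}$, and the squared norm simplifies to
\[
\|\ket{\hat\psi}\|^2 \;=\; F + (1-F)\left\|\tfrac{1}{\alpha}\sum_i g_i\ket{r}\right\|^2 .
\]
If we can choose the $g_i$'s so that this right-hand norm is $\le 1/\alpha$, rearranging then yields new relative error exactly $\epsilon/\alpha$, as the theorem claims.

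The technical heart -- and where I expect the main difficulty -- is producing such $g_i$'s. Sampling them i.i.d.\ uniformly from $G$, the expectation of the squared norm above is $\tfrac{1}{\alpha}+(1-\tfrac{1}{\alpha})\|P\ket{r}\|^2$, where $P$ projects onto the $G$-invariant subspace; since $\ket{r}\perp \ket{H}^{\otimes n}$, the approach succeeds as soon as $G$'s invariant subspace equals $\mathrm{span}\{\ket{H}^{\otimes n}\}$, in which case a specific good realization can be extracted by a standard probabilistic argument. This is a representation-theoretic condition on $\langle S_n, H^{\otimes n}\rangle$ acting on $(\mathbb{C}^2)^{\otimes n}$ that must be verified, possibly after enlarging $G$ with further explicit Clifford stabilizers of $\ket{H}^{\otimes n}$. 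Alternatively, one may compose $O(\log\alpha)$ rounds of a weaker constant-factor step -- halving the relative error at the cost of doubling the rank, which needs only a single $g\in G$ with $\mathrm{Re}\braket{r|g|r}\le 0$, substantially easier to guarantee -- and the compounded iterates then yield the stated $O(\alpha k)$ vs.\ $\epsilon/\alpha$ trade-off.
\end{proofsketch}
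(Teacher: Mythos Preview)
Your overall strategy—averaging the given decomposition over Cliffords that fix $\ket{H}^{\otimes n}$ and then iterating a halving step $O(\log\alpha)$ times—is exactly what the paper does. The concrete gap is your choice of group: $G=\langle S_n, H^{\otimes n}\rangle$ does \emph{not} have one-dimensional invariant subspace, so the representation-theoretic condition you flag fails and cannot be repaired within this $G$. In the $\ket{H}/\ket{H^\perp}$ tensor basis, $H^{\otimes n}$ acts as $(-1)^{\#\,H^\perp}$, so its $+1$-eigenspace is spanned by tensors with an even number of $\ket{H^\perp}$ factors; intersecting with the $S_n$-invariant (symmetric) subspace still leaves a space of dimension $\lfloor n/2\rfloor+1$. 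If $\ket{r}$ lies there—e.g.\ $\ket{r}$ is the symmetrization of $\ket{H^\perp}^{\otimes 2}\otimes\ket{H}^{\otimes(n-2)}$—then $g\ket{r}=\ket{r}$ for every $g\in G$, so $\mathrm{Re}\braket{r|g|r}=1$ for all $g$ and neither your direct $\alpha$-term averaging nor your fallback halving step can make any progress.

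The fix, which is the paper's choice, is to replace $G$ by the abelian group $\{I,H\}^{\otimes n}$ of \emph{local} Hadamards (each single-qubit $H$ separately, not only the global $H^{\otimes n}$; permutations are never used). Every $\ket{H}/\ket{H^\perp}$ tensor string is a simultaneous eigenvector for this group, and the only common $+1$-eigenvector is $\ket{H}^{\otimes n}$ itself, so the invariant subspace really is one-dimensional. With this group the paper runs precisely your alternative route: for a uniformly random $C\in\{I,H\}^{\otimes n}$, the operator $\tfrac{I+C}{2}$ annihilates each orthogonal basis element with probability $\tfrac12$, whence $\mathbb{E}\bigl\|\tfrac{I+C}{2}\ket{\phi}\bigr\|^2=\tfrac12\|\ket{\phi}\|^2$; extract a good $C$ (rank doubles, relative error halves) and iterate $\log\alpha$ times. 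Your direct i.i.d.\ averaging over $\alpha$ samples would also go through once the group is corrected, but the paper opts for the iterated halving.
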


In particular, for proving super-polynomial lower bounds for the approximate rank, this result shows that it suffices to show those for decompositions with polynomially small error. Further, this also says that any such lower bound will also apply to decompositions with polynomially high relative error, i.e, with polynomially small fidelity. We prove this result by starting with a stabilizer decomposition $\ket{\psi}$ of $\ket{H}^{\otimes n}$, and looking at its components along this direction and in the perpendicular direction. On applying an $H$ gate with probability $\frac{1}{2}$ independently to each qubit of $\ket{\psi}$ to get a distribution over states $\ket{\phi}$, the expected relative error of $\frac{\ket{\psi} + \ket{\phi}}{2}$ is half of its original value. As such, there is a stabilizer decomposition of rank $2k$ with half the relative error. This result also lets us compose good approximate decompositions together to higher tensor powers:

\begin{corollary}\label{cor_decomp_compose}
    Given a $\delta$ relative error rank $k$ stabilizer decomposition for $\ket {H}^{\otimes m}$, and some $\alpha>1$, there is a rank $  O((k(1+\delta))^n)$ constant relative error stabilizer decomposition for $\ket{H}^{\otimes mn}$.
\end{corollary}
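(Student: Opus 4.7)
\begin{proofsketch}
The plan is to combine a naive tensor product of the given decomposition with the Fidelity Amplification of Theorem~\ref{thm_error_amp}. The tensoring step blows up the relative error exponentially, but this is precisely the regime the amplification theorem is designed to repair, and the trade-off it provides yields exactly the stated rank.

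First I would take the given rank-$k$ stabilizer decomposition $\ket{\phi}=\sum_{j=1}^k c_j \ket{s_j}$ of $\ket{H}^{\otimes m}$ with relative error $\delta$, so that $F(\ket{\phi},\ket{H}^{\otimes m})=\tfrac{1}{1+\delta}$. Forming the $n$-fold tensor product $\ket{\phi}^{\otimes n}$ and expanding multilinearly yields
\[
\ket{\phi}^{\otimes n} = \sum_{j_1,\dots,j_n} c_{j_1}\cdots c_{j_n}\, \ket{s_{j_1}}\otimes\cdots\otimes \ket{s_{j_n}},
\]
which is a stabilizer decomposition (since tensor products of stabilizer states on disjoint registers are stabilizer states) of rank at most $k^n$. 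By multiplicativity of fidelity on product states,
\[
F\bigl(\ket{\phi}^{\otimes n},\ket{H}^{\otimes mn}\bigr) = \Bigl(\tfrac{1}{1+\delta}\Bigr)^n,
\]
so the relative error of this decomposition is $(1+\delta)^n - 1$.

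Next I would apply Theorem~\ref{thm_error_amp} to this approximate decomposition of $\ket{H}^{\otimes mn}$ with amplification parameter $\alpha = C\cdot\bigl((1+\delta)^n - 1\bigr)$ for any fixed constant $C>0$. The resulting decomposition has relative error $\tfrac{(1+\delta)^n-1}{\alpha}=\tfrac{1}{C}$, a constant, and rank $O(\alpha k^n)=O((k(1+\delta))^n)$, as claimed.

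I do not anticipate a real obstacle here, since Theorem~\ref{thm_error_amp} does all the nontrivial work; the only small thing to verify is that the amplification theorem applies to approximations of arbitrary tensor powers $\ket{H}^{\otimes N}$ (in our case $N=mn$), which is immediate from its statement. In that sense the corollary is essentially a bookkeeping consequence of combining multiplicativity of fidelity with the amplification theorem.
\end{proofsketch}
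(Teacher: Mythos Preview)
Your proposal is correct and follows essentially the same approach as the paper: tensor up the given decomposition to get rank $k^n$ and relative error $(1+\delta)^n-1$ by multiplicativity of fidelity, then invoke Theorem~\ref{thm_error_amp} with $\alpha$ proportional to $(1+\delta)^n-1$ to bring the error down to a constant at the cost of rank $O((k(1+\delta))^n)$. The paper's proof is the same two-step argument, stated slightly more tersely.
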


  Applying this result to $\ket{0}^{\otimes n}$ retrieves the best known approximate stabilizer decomposition for $\ket{H}^{\otimes n}$ with rank $O(2^{0.23n})$ first proved in \cite{Bravyi_2016}. In Section \ref{sec_latticeapprox} we further show that the best known approximations for $\ket{H}^{\otimes n}$ are effectively lattice approximations which asymptotically match the bound in Theorem \ref{thm_monotone_vs_rank}.

Finally, in Section \ref{sec_product_state_exist} we also provide an alternate proof for the density of product states with stabilizer rank exactly $2^n$
\begin{theorem}\label{thm_product_state_dense}
    Product states with stabilizer rank exactly $2^n$ form a dense and open subset of the set of product states.
\end{theorem}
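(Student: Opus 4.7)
The plan is to show that the set of product states with stabilizer rank strictly less than $2^n$ is a finite union of proper closed subvarieties of the product-state manifold, hence closed and nowhere dense, which makes its complement open and dense. The first step is the setup: the computational basis consists of $2^n$ stabilizer states that span $\mathbb{C}^{2^n}$, so every state has stabilizer rank at most $2^n$, and rank exactly $2^n$ is the maximum. A product state $\bigotimes_{i=1}^n \ket{\phi_i}$ has stabilizer rank at most $k$ if and only if it lies in the linear span of some $k$ stabilizer states. Since there are only finitely many stabilizer states on $n$ qubits, the set of product states with rank below $2^n$ is the finite union
\[
\bigcup_{S} X_S,\qquad X_S := \Bigl\{(\ket{\phi_1},\ldots,\ket{\phi_n}) \,:\, \bigotimes_{i=1}^n \ket{\phi_i} \in \operatorname{span}(S)\Bigr\},
\]
where $S$ ranges over subsets of size $2^n-1$ of the set of stabilizer states.

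The main step is to show that each $X_S$ is a proper closed subvariety of the product-state space $(\mathbb{C}^2)^n$. Since $\dim \operatorname{span}(S) \leq 2^n - 1$, there exists a nonzero vector $v \in \operatorname{span}(S)^{\perp}$, and $X_S$ is contained in the zero locus of the map $(\ket{\phi_1},\ldots,\ket{\phi_n}) \mapsto \langle v, \bigotimes_{i=1}^n \ket{\phi_i}\rangle$. This map is multilinear in the amplitudes of the $\ket{\phi_i}$'s and is therefore a polynomial in the natural coordinates on $\mathbb{C}^{2n}$. The key observation is that this polynomial is not identically zero: if it vanished on every tuple of product states, then $v$ would be orthogonal to every product state; but product states already span $\mathbb{C}^{2^n}$ via the computational basis, so this would force $v = 0$, contradicting its choice. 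Hence $X_S$ sits inside the zero set of a nontrivial polynomial on the irreducible affine variety $\mathbb{C}^{2n}$, which has strictly smaller complex dimension and therefore empty Euclidean interior.

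A finite union of closed nowhere-dense sets is closed and nowhere dense, so $\bigcup_S X_S$ is closed and nowhere dense inside the product-state manifold, and its complement---the set of product states with stabilizer rank exactly $2^n$---is open and dense, as claimed. The principal obstacle to keep in mind is the non-vanishing of the pullback polynomial associated to $v$, but as shown this reduces immediately to the spanning property of the computational basis, bypassing the algebraic-geometry machinery used by Lovitz and Steffan. The same argument applies verbatim if one prefers to parameterize product states by normalized representatives in $(\mathbb{CP}^1)^n$ rather than by $(\mathbb{C}^2)^n$.
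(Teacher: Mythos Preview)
Your argument is correct. The key steps---finiteness of stabilizer states, the polynomial nature of the map $(\ket{\phi_1},\ldots,\ket{\phi_n})\mapsto \langle v,\bigotimes_i\ket{\phi_i}\rangle$, its non-vanishing via the spanning property of product states, and the fact that the zero locus of a nonzero polynomial on $\mathbb C^{2n}$ has empty Euclidean interior---are all sound, and the conclusion for openness and density follows.

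However, this is \emph{not} the route the paper takes, and in fact your approach is precisely the one the paper is advertising an alternative to: despite your closing remark about ``bypassing the algebraic-geometry machinery used by Lovitz and Steffan,'' what you have written is essentially their argument in elementary dress (proper subvarieties of an irreducible variety are nowhere dense). The paper instead introduces the auxiliary objects $\mathcal W_{k,\ket\phi}=\bigcap_{A\in\mathcal V_k,\,\ket\phi\in A}A$, picks $\ket\phi$ with $\mathcal W_{k,\ket\phi}$ maximal (locally, for the density argument), shows that a small metric ball of product states around $\ket\phi$ is then trapped inside $\mathcal W_{k,\ket\phi}$, and finally proves that any linear subspace containing an open set of product states must be all of $\mathbb C^{2^n}$. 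The trade-off: your polynomial/zero-set argument is shorter and more direct, while the paper's argument is purely metric-topological and linear-algebraic, never invoking that zero sets of polynomials are thin; the authors explicitly motivate this choice by suggesting that working with distances rather than varieties may be more amenable to extensions to \emph{approximate} stabilizer rank (see their Problem~\ref{open_prob_prod}).
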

This result had earlier been shown by Lovitz and Steffan \cite{Lovitz2022newtechniques}, where they had used techniques from algebraic geometry. In our work, we only use the vector space and metric space structure of subspaces spanned by stabilizer states to obtain the result, making these techniques more amenable to being extended to results about approximate stabilizer ranks. Note that any lower bound on the approximate stabilizer ranks of arbitrary product states implies the same lower bound (upto log factors) on $\ket{H}^{\otimes n}$, as we can approximate any product state with $O(n\cdot \mathrm{polylog}(n))$ many $T$-gates.
\section{Preliminaries}
\label{sec_prelims}
In this section we will review some background material. We begin with the recalling some basic definitions from the stabilizer rank literature see \cite{Bravyi2019simulationofquantum,Lovitz2022newtechniques,smith} for more details.
\begin{definition}[Also see \cite{Bravyi2019simulationofquantum}]
Consider a pure state $\k \psi$, then exact stabilizer rank $\chi(\ket{\psi})$ is defined to be the smallest integer $k$ for which a given state $\k\psi$ can be written as follows:
\[
\k\psi=\sum_{i=1}^{k}\alpha_{i}\k{s_{i}}
\]
where $s_{i}$ are stabilizer states and $\alpha_{i}\in \mathbb{C}$.
\end{definition}

\begin{definition}[Also see\cite{PhysRevLett.116.250501,Bravyi2019simulationofquantum}] 
The approximate stabilizer rank $\chi_{\delta}(\k\psi)$ of a state $\k \psi$ is the smallest integer $k$ such that there is pure state $\k\sigma$ with $\chi(\ket\sigma)=k$ and $\norm{\ket{\psi}-\ket{\sigma}}_2\leq \delta $
\end{definition}
\begin{definition}[Also see \cite{Bravyi2019simulationofquantum}]
The stabilizer fidelity $F_S(\k\psi)$ of a state $\psi$ is defined as follows:
\[
F_S(\k\psi)=\max_{\phi \in S_{n} }|\langle{\phi}\k{\psi}|^{2}
\]
where $S_{n}$ is the set of $n$-qubit normalized stabilizer states.
\end{definition}

\begin{definition}
The $n$-qubit Clifford group is generated by the following matrix elements:
\[
S=\begin{pmatrix}
1 & 0\\
0 & i\\
\end{pmatrix}~~~H=\frac{1}{\sqrt{2}}\begin{pmatrix}
1 & 1\\
1 & -1\\
\end{pmatrix}~~~CX=\begin{pmatrix}
1 & 0 & 0 & 0\\
0 & 1 & 0 & 0\\
0 & 0 & 0 & 1\\
0 & 0  & 1 & 0\\
\end{pmatrix}.
\]

\end{definition}
\begin{definition}
 The set of $n$-qubit \emph{stabilizer states} is $\{C\ket{0^n}:\text{ $C$ is an $n$ qubit Clifford Unitary}\}$.   
\end{definition}
Later in the paper we will also require the following magic state:
\[
\k{H}=\cos\left(\frac{\pi}{8}\right)\ket{0}+\sin\left(\frac{\pi}{8}\right)\k{1}.
\]
This is an eigenvector of the $H$ gate with eigenvalue 1. 
Recall the following result:
\begin{fact}
It was shown in \cite{Bravyi2019simulationofquantum} that $F_S(\k {H}^{\otimes n})$ = $\cos^{2n}(\frac{\pi}{8})$.
\end{fact}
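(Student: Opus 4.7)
The plan is to prove the two directions $F(\ket{H}^{\otimes n}) \geq \cos^{2n}(\pi/8)$ and $F(\ket{H}^{\otimes n}) \leq \cos^{2n}(\pi/8)$ separately. For the lower bound, I exhibit the explicit stabilizer witness $\ket{\phi} = \ket{+}^{\otimes n}$: the identity $\cos\theta + \sin\theta = \sqrt{2}\,\cos(\theta - \pi/4)$ at $\theta = \pi/8$ gives $\langle +|H\rangle = \cos(\pi/8)$, and tensorizing yields $|\langle +^{\otimes n}|H^{\otimes n}\rangle|^{2} = \cos^{2n}(\pi/8)$. The same optimum is also realized by $\ket{0}^{\otimes n}$ or by any Clifford image of these product stabilizer states.

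For the upper bound I would use the Pauli expansion of the one-qubit projector,
\[
\ket{H}\bra{H} \;=\; \frac{1}{2}\left(I + \frac{X+Z}{\sqrt{2}}\right),
\]
which tensorizes to
\[
\ket{H}^{\otimes n}\bra{H}^{\otimes n} \;=\; \frac{1}{2^{n}}\sum_{P \in \{I,X,Z\}^{n}} 2^{-|P|/2}\,P,
\]
where $|P|$ is the number of non-identity tensor factors of $P$. For any stabilizer state $\ket{\phi}$ with stabilizer group $S_{\phi}$, $\langle\phi|P|\phi\rangle \in \{-1,0,+1\}$, taking the value $\pm 1$ exactly when $\pm P \in S_{\phi}$. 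Therefore
\[
|\langle\phi|H^{\otimes n}\rangle|^{2} \;=\; \frac{1}{2^{n}}\sum_{P \in \{I,X,Z\}^{n}\cap (\pm S_{\phi})} (\pm 1)\cdot 2^{-|P|/2},
\]
and the task reduces to bounding this signed character sum by $(1+1/\sqrt{2})^{n} = (2\cos^{2}(\pi/8))^{n}$. Equality is attained, for example, by $S_{\phi} = \langle Z_{1},\dots,Z_{n}\rangle$, where the sum runs over all of $\{I,Z\}^{n}$ with positive signs.

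The main obstacle is bounding this combinatorial sum uniformly over all stabilizer groups; in symplectic language it is a weighted extremal problem for Lagrangian subspaces of $\mathbb{F}_{2}^{2n}$, where Paulis containing any $Y$ contribute $0$ and Paulis whose sign in $S_{\phi}$ is $-1$ subtract from rather than add to the total. I would attempt an induction on $n$ that exploits the uniform-Schmidt-coefficient structure of stabilizer states: bipartitioning off one qubit gives $\ket{\phi} = 2^{-k/2}\sum_{i=1}^{2^{k}}\ket{\alpha_{i}}\ket{a_{i}}$ with stabilizer-state Schmidt vectors, so one may combine the inductive bound on the $n-1$ qubit factors with the single-qubit base case. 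The subtle point is that a naive Cauchy--Schwarz loses a factor of the Schmidt rank, so the argument must be sharpened by using both the product structure of $\ket{H}^{\otimes n}$ and the $H^{\otimes n}$-invariance that absorbs the single-qubit Cliffords relating the Schmidt vectors; alternatively one can directly argue on the Lagrangian side that any departure from $\{I,Z\}^{n}$ (or $\{I,X\}^{n}$) in the $Y$-free part forces sign frustrations or $Y$-insertions that strictly decrease the weighted sum.
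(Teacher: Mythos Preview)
The paper does not prove this statement; it is recorded as a \emph{Fact} with a citation to \cite{Bravyi2019simulationofquantum}, so there is no proof in the paper to compare against.

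Assessing your attempt on its own merits: the lower bound via the witness $\ket{+}^{\otimes n}$ (equivalently $\ket{0}^{\otimes n}$) is correct and complete. The upper bound, however, is only a setup followed by an unresolved sketch. You correctly reduce the question to the extremal problem
\[
\sum_{P \in \{I,X,Z\}^{n}\cap(\pm S_{\phi})} (\pm 1)\,2^{-|P|/2}\;\le\;(1+2^{-1/2})^{n}
\]
over all stabilizer groups $S_{\phi}$, but you then explicitly flag this as ``the main obstacle'' and offer two candidate strategies without executing either. Your own remark that a naive Cauchy--Schwarz on the Schmidt decomposition loses a factor of the Schmidt rank is precisely the obstruction: if $\ket{\phi}=2^{-1/2}(\ket{0}\ket{\phi_{0}}+\ket{1}\ket{\phi_{1}})$, the straightforward bound yields only $\cos^{2(n-1)}(\pi/8)$ rather than $\cos^{2n}(\pi/8)$, and you have not shown how the ``$H^{\otimes n}$-invariance'' or the ``sign frustration'' heuristic recovers that missing factor of $\cos^{2}(\pi/8)$. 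As written this is a genuine gap, not a routine omission.

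For reference, the argument in the cited source proceeds by establishing multiplicativity of stabilizer fidelity for tensor products of states satisfying a stabilizer-alignment condition (which $\ket{H}$ meets), so that $F(\ket{H}^{\otimes n})=F(\ket{H})^{n}$ and only the single-qubit case remains, which is a finite check over six stabilizer states. If you wish to salvage the Pauli-expansion route, you would need either a sharper inequality than Cauchy--Schwarz in the inductive step, or a full execution of the symplectic extremal argument you allude to in your last sentence; neither is present here.
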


\subsection{Arithmetic for $\bZ[i]$}

Recall that $\mathbb{Q}(i)$ and $\bZ[i]$ are defined as follows:
\[\mathbb{Q}(i):=\{a+bi~|~a,b\in \mathbb{Q}\}\]
\[\mathbb{Z}[i]:=\{a+bi~|~a,b\in \mathbb{Z}\}\]

$\mathbb Q(i)$ is a field and $\bZ[i]$ is a ring which is norm-euclidean. This essentially means that one can define a variation of Euclid's algorithm for computing the greatest common divisor in this ring. To see this we briefly review some basic properties of $\bZ[i]$ (see \cite{gauss} for more details). The following properties are simply generalizations of the corresponding elementary properties of $\mathbb Z$.
\begin{definition}
    For $\alpha, \beta \in \bZ[i]$, $\beta$ is said to divide $\alpha$ iff there is some $c\in \mathbb Z[i] $ such that $c\beta=\alpha$. We denote this as $\beta \, |\,\alpha$.
\end{definition}
In particular, any $k\in \bZ$ divides $\alpha=a+bi$ iff $k|a$ and $k|b$ in $\bZ$. 

 One can perform Euclid's division algorithm in $\bZ[i]$: for every $\alpha,\beta  \in \bZ[i]$ with $\beta \ne 0$, we have $\alpha=q\beta+r$ for some $q,r\in \mathbb Z[i]$, where  $|r|<|\beta|$. Note that $r=0\iff \beta|\alpha$.

 Also, since $|\alpha|^2\in \bZ$ for all $\alpha\in \bZ[i]$, it follows directly from the multiplicativity of $|\cdot|$ that $\pm 1,\pm i$ are the only elements with multiplicative inverses. These we refer to as $\emph{units}$.
 
\begin{definition}
    Define $\gcd(a,b)$ for $a,b\in \bZ[i]$ to be the largest (in absolute value) common divisor in $\bZ[i]$ of both $a,b$. This can be extended to $\gcd$ of arbitrarily large finite subsets of $\bZ[i]$. 
\end{definition}

Note that we can only specify the $\gcd$ upto multiplication by units, in our case means that $\gcd$ is unique only upto multiplication with units $\pm 1,\pm i$.

\begin{definition}
    A prime or an irreducible element $\alpha\in \bZ[i]$ is a number such that for no two $\beta,\gamma\in \bZ[i]$, $\alpha=\beta\gamma$, unless one of them is a unit.
\end{definition}
Note that the above definition is specific to $\bZ[i]$, and may not generalize to other rings.

An interesting thing that happens in $\mathbb Z[i]$ is that elements which are prime in $\mathbb Z$ may not be prime in $\mathbb Z[i]$. For example, $2=(1+i)(1-i)$.

\begin{fact}
    For any prime $p\in \bZ[i]$, if $p|ab$ for $a,b\in \bZ[i]$, either $p|a$ or $p|b$.
\end{fact}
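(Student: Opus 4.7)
The plan is to mimic the classical proof of Euclid's lemma for $\mathbb Z$, with the only twist being that ``prime'' must be interpreted as ``irreducible'' and the standard appeal to Bezout's identity must be justified via the Euclidean structure already set up in the preliminaries.

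First, I would establish Bezout's identity for $\mathbb Z[i]$: for any $\alpha,\beta \in \mathbb Z[i]$, there exist $x,y \in \mathbb Z[i]$ such that $x\alpha + y\beta = \gcd(\alpha,\beta)$. This follows directly from the division algorithm stated in the excerpt: iterating the division $\alpha = q\beta + r$ with $|r|<|\beta|$ terminates (since $|r|^2$ is a strictly decreasing sequence of nonnegative integers), and back-substituting expresses the last nonzero remainder, which is $\gcd(\alpha,\beta)$ up to a unit, as a $\mathbb Z[i]$-linear combination of $\alpha,\beta$.

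Next, given $p$ irreducible with $p \mid ab$, I would assume $p \nmid a$ and show $p \mid b$. Consider $d := \gcd(p,a)$. Since $d \mid p$ and $p$ is irreducible, $d$ must be either a unit or an associate of $p$. The latter is excluded: if $d = up$ for a unit $u$, then from $d \mid a$ we would get $p \mid a$, contradicting our assumption. Hence $d$ is a unit, and absorbing the unit we may write $\gcd(p,a) = 1$.

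By Bezout, pick $x,y \in \mathbb Z[i]$ with $xp + ya = 1$. Multiplying through by $b$ gives $xpb + yab = b$. Now $p \mid xpb$ trivially, and $p \mid ab$ by hypothesis, so $p \mid yab$; adding, $p \mid b$, which is exactly what we wanted.

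The only real subtlety is the back-substitution step in establishing Bezout, but since the division algorithm is already in hand this is essentially mechanical. The argument that $\gcd(p,a)$ cannot be an associate of $p$ also deserves care to make sure we are using irreducibility in $\mathbb Z[i]$ correctly (units are $\pm 1, \pm i$ as noted in the excerpt), but once this is spelled out the rest of the proof is a direct transcription of Euclid's lemma from $\mathbb Z$.
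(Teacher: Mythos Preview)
Your argument is correct and is the standard Euclid's lemma proof via Bezout's identity. However, the paper does not actually prove this statement at all: it is listed as a ``Fact'' in the preliminaries (alongside Unique Factorisation and Bezout's lemma) and is simply quoted without justification, so there is no paper proof to compare against.
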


\begin{fact}[Unique Factorisation] Every element can be uniquely written as a product of prime elements and units, upto a multiplication by unit for each prime element.  
\end{fact}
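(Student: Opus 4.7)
The plan is to prove existence and uniqueness separately, using the Euclidean division already established for $\bZ[i]$ together with the prime-divides-product Fact stated immediately above.

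For existence, I would induct on the nonnegative integer $|\alpha|^2$. Units are exactly the elements with $|\alpha|^2=1$ (the empty product suffices), and primes are by definition their own factorization, so the only case to handle is when $\alpha$ is neither a unit nor a prime. In that case, the definition of irreducibility gives a nontrivial decomposition $\alpha=\beta\gamma$ with neither $\beta$ nor $\gamma$ a unit, and the multiplicativity of $|\cdot|$ then forces $|\beta|^2,|\gamma|^2<|\alpha|^2$. The inductive hypothesis applied to $\beta$ and $\gamma$ produces prime factorizations whose concatenation (after absorbing the two unit factors into one) is a prime factorization of $\alpha$.

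For uniqueness, I would proceed by a standard swap argument. Suppose $\alpha = u\,p_1 p_2\cdots p_k = v\,q_1 q_2\cdots q_\ell$ with $u,v$ units and all $p_i,q_j$ prime. Since $p_1$ divides the product $q_1 q_2\cdots q_\ell$, iterated application of the previous Fact yields some index $j$ with $p_1\mid q_j$. Because $q_j$ is irreducible and $p_1$ is not a unit, the quotient $q_j/p_1$ must itself be a unit, i.e.\ $q_j = w\, p_1$ for some $w\in\{\pm 1,\pm i\}$. Cancelling $p_1$ from both sides gives a strictly shorter equality of the same form, and a second induction on $k$ closes the argument: the multisets $\{p_i\}$ and $\{q_j\}$ are matched up in some order, each pairing contributing a unit factor, and all these unit factors combine with $u$ and $v$ into the single overall unit of ambiguity allowed by the statement.

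I do not anticipate a genuine obstacle, as both halves of the argument are forced by the tools already in place: Euclidean division supplies the strict decrease of $|\cdot|^2$ needed for the existence induction, and the prime-divisibility Fact is precisely the property required to run the uniqueness swap. The only point needing a bit of care is to be explicit about unit ambiguity, namely that each prime factor is determined only up to multiplication by an element of $\{\pm 1,\pm i\}$, which is exactly why the statement is phrased as equality ``upto a multiplication by unit for each prime element'' rather than literal equality of the factors.
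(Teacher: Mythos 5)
Your argument is correct: existence by induction on the multiplicative norm $|\alpha|^2$ and uniqueness by the swap argument via the prime-divides-product fact is the standard proof that $\bZ[i]$ is a unique factorization domain. Note that the paper itself states this as a background \emph{Fact} without proof (deferring to standard references on Gaussian integers), so there is no in-paper argument to compare against. One small imprecision in your closing summary: the strict decrease of $|\cdot|^2$ in the existence induction comes from the multiplicativity of the norm together with the fact that non-units have norm at least $2$, not from Euclidean division (which is instead what underlies Bezout's lemma and hence the prime-divisibility fact used in the uniqueness half); your argument in the body already invokes multiplicativity correctly, so this does not affect the proof.
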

\begin{fact}[Bezout's lemma]
    $\gcd(a,b)$ is the smallest (norm) element of the form $pa+qb$, where $p,q\in \mathbb Z[i]$. 
\end{fact}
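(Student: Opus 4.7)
The plan is to follow the standard argument for Bezout's identity in a Euclidean domain, adapted to the specific facts about $\bZ[i]$ already noted in the excerpt (Euclidean division, well-defined absolute value, and the description of units). First I would introduce the set $S=\{pa+qb : p,q\in \bZ[i]\}$, which is nonempty (it contains $a$ and $b$) and closed under $\bZ[i]$-linear combinations. Since $|\alpha|^{2}\in \bZ_{\geq 0}$ for every $\alpha\in\bZ[i]$, the set $\{|s|^{2}:s\in S\setminus\{0\}\}$ is a nonempty subset of $\bZ_{\geq 0}$ and therefore has a minimum by well-ordering. Let $d\in S\setminus\{0\}$ realize this minimum, and write $d=p_{0}a+q_{0}b$.

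Next I would show that $d\mid a$ and $d\mid b$. Using the Euclidean division property recalled in the excerpt, write $a=\lambda d+r$ with $\lambda,r\in\bZ[i]$ and $|r|<|d|$. Then
\[
r \;=\; a-\lambda d \;=\; a-\lambda(p_{0}a+q_{0}b) \;=\; (1-\lambda p_{0})a+(-\lambda q_{0})b \;\in\; S.
\]
If $r\neq 0$ then $|r|<|d|$ would contradict the minimality of $|d|$ in $S\setminus\{0\}$, so $r=0$ and $d\mid a$. The same argument with the roles of $a$ and $b$ swapped gives $d\mid b$, so $d$ is a common divisor.

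Finally I would verify that $d$ is a greatest common divisor. If $c\in\bZ[i]$ is any common divisor of $a$ and $b$, then $c$ divides every element of $S$, in particular $c\mid d$; hence $|c|\leq|d|$. Thus $d$ has maximal absolute value among common divisors of $a$ and $b$, which is the defining property of $\gcd(a,b)$ used in the excerpt (and the usual ambiguity up to units $\pm 1,\pm i$ is exactly the ambiguity already acknowledged there). Since $d$ itself has the form $pa+qb$ and no element of $S\setminus\{0\}$ is smaller in absolute value, $\gcd(a,b)$ is the smallest element of $\{pa+qb:p,q\in\bZ[i]\}$, as claimed.

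There is no serious obstacle: the only non-routine ingredient is the Euclidean division step, which has already been stated as a fact in the excerpt, so the proof is essentially a verification that the Euclidean-domain template specializes cleanly to $\bZ[i]$. The one place to be careful is the phrase ``smallest element,'' which should be read as smallest in absolute value (consistent with the definition of $\gcd$ given earlier), and the unit ambiguity should be flagged so that the statement $\gcd(a,b)=p_{0}a+q_{0}b$ is understood only up to multiplication by $\pm 1,\pm i$.
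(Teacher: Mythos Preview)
Your proof is correct and follows the standard Euclidean-domain argument for Bezout's identity. Note, however, that the paper itself does not prove this statement: it is listed as a preliminary \textbf{Fact} without proof, so there is no paper argument to compare against. Your write-up would serve perfectly well as a self-contained justification of the fact, and your caveat about interpreting ``smallest'' as smallest in absolute value (up to units) is exactly the right way to reconcile the statement with the paper's conventions.
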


\subsection{Background on Barnes Wall Lattices}\label{sec_barnes_wall_prelims}

We now review the basic background on Barnes Wall (BW) lattices based on \cite{kliuchnikov2024stabilizer}, which the reader should consult for more details.
\begin{definition}
    A subset $L$ of an $\mathbb R$-vector space $V$ is said to be a $\mathbb Z$ lattice if each element of $L$ can be written as a $\mathbb Z$ linear combination of some finite subset of $V$.
\end{definition}
A simple example is the lattice $\mathbb Z^n$ inside $\mathbb R^n$. This can be generalised to the notion of a $\mathbb Z[i]$-lattice.
\begin{definition}
    A subset $L$ of a $\mathbb C$-vector space $V$ is said to be a $\mathbb Z[i]$ lattice if each element of $L$ can be written as a $\mathbb Z[i]$ linear combination of some finite subset of $V$.
\end{definition}
The set which we take $\mathbb Z$ or $\mathbb Z[i]$ linear combinations of are referred to as the generators. Note, that for any lattice $L$, we can create a \emph{sublattice} 
$L'\subseteq L$ by choosing some elements of $L$ as generators. This sublattice may even lie in a subspace of $V$. We now define the specific $\mathbb Z[i]$ lattices we will work with. Define the state $\ket{\tilde+}=\frac  1 {1+i}(\ket0+\ket1)=e^{-\frac {i\pi} 4} \ket +$.  For each $\mathbf x\in \{0,1\}^n$, define $\ket{\tilde {\mathbf x}}= \bigotimes_{i=1}^n \ket{\tilde{x}_i}$, where $\ket{\tilde x_i}=\ket 0$ if $x_i=0$, and $\ket{\tilde +}$ otherwise.

\begin{definition}[See \cite{kliuchnikov2024stabilizer,Nebe_2002,bwdefn}]
The $n$ qubit Barnes Wall lattice, denoted as $BW_n$, is generated by $\bZ[i]$-linear combinations of all $\k {\tilde{\mathbf x}}$ for $\mathbf x\in \{0,1\}^n$:
\[
BW_n=
\{\sum_{\mathbf x\in \{0,1\}^n} a_{\mathbf x}\k{\tilde{\mathbf x}}:~ a_{\mathbf x}\in \bZ[i]\}
\]
\end{definition}

Note that we have defined a scaled down version of the lattice as compared to \cite{kliuchnikov2024stabilizer}, and the results should hold with the corresponding scaling. We did so to ensure that the stabilizer states we work with have norm exactly 1.\footnote{In \cite{kliuchnikov2024stabilizer}, for $\mathbb Q(i)$, the lattice that they work with is created by $\mathbb Z[i]$ linear combinations of $(1+i)^n \ket{\tilde x}$} Furthermore, the usual presentation of the lattice is in terms of generator matrices, but the formulation here suffices for us. Also, \cite{kliuchnikov2024stabilizer} works more generally with Barnes Wall lattices defined over a larger class of cyclotomic fields, but for our case we restrict to $\mathbb Q(i)$.

The Barnes Wall lattice $BW_n$ also admits a \emph{dual lattice} $BW_n^*=(1+i)^nBW_n$, which provides an alternative characterization of $BW_n$, as follows:

\begin{fact}
     For $v\in \mathbb C^{2^n}$: \[v\in BW_n \iff a^\dagger v \in \mathbb Z[i]~\forall~ a\in BW^*_n\]
\end{fact}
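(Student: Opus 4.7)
The plan is to prove the two directions of the biconditional separately, exploiting the tensor product structure $BW_n = BW_1^{\otimes n}$ (visible from the definition) and the corresponding identity $BW_n^* = (1+i)^n BW_n = BW_1^{*\otimes n}$ for the dual lattice (since distributing one factor of $(1+i)$ into each tensor slot converts the generators $(1+i)^n\ket{\tilde{\mathbf x}}$ of $BW_n^*$ into tensor products of generators of $BW_1^*$).

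For the forward direction ($\Rightarrow$), let $v\in BW_n$ and $a\in BW_n^*$, so that $a=(1+i)^n b$ for some $b\in BW_n$. Writing $v=\sum_{\mathbf x}d_{\mathbf x}\ket{\tilde{\mathbf x}}$ and $b=\sum_{\mathbf y}c_{\mathbf y}\ket{\tilde{\mathbf y}}$ with $c_{\mathbf y},d_{\mathbf x}\in\bZ[i]$, and using $(1-i)^n=(-i)^n(1+i)^n$, I get
\[
a^\dagger v \;=\; (-i)^n\sum_{\mathbf x,\mathbf y}\overline{c_{\mathbf y}}\,d_{\mathbf x}\,\bigl[(1+i)^n\braket{\tilde{\mathbf y}|\tilde{\mathbf x}}\bigr].
\]
Since $\braket{\tilde{\mathbf y}|\tilde{\mathbf x}}=\prod_i\braket{\tilde y_i|\tilde x_i}$, the statement reduces to the single-qubit verification that $(1+i)\braket{\tilde a|\tilde b}\in\bZ[i]$ for all $a,b\in\{0,1\}$, which follows from the explicit table of values $\{1+i,\,1,\,1+i,\,i\}$ obtained by direct calculation.

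For the reverse direction ($\Leftarrow$), I use that $\{\ket{\tilde{\mathbf x}}\}_{\mathbf x\in\{0,1\}^n}$ is a $\bC$-basis of $\bC^{2^n}$, so every $v$ expands as $v=\sum_{\mathbf x}\braket{\tilde{\mathbf x}^*|v}\,\ket{\tilde{\mathbf x}}$, where $\{\ket{\tilde{\mathbf x}^*}\}$ is the dual basis characterized by $\braket{\tilde{\mathbf x}^*|\tilde{\mathbf y}}=\delta_{\mathbf x\mathbf y}$. The key structural claim is that $\ket{\tilde{\mathbf x}^*}\in BW_n^*$ for every $\mathbf x$; granting this, the hypothesis applied to $a=\ket{\tilde{\mathbf x}^*}$ forces $\braket{\tilde{\mathbf x}^*|v}\in\bZ[i]$ for all $\mathbf x$, whence $v\in BW_n$. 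Because $\ket{\tilde{\mathbf x}^*}=\bigotimes_i\ket{\tilde{x_i}^*}$ and $BW_n^*=BW_1^{*\otimes n}$, it suffices to verify the single-qubit statement $\ket{\tilde 0^*},\ket{\tilde +^*}\in BW_1^*$. Solving the $2\times 2$ dual-basis equations gives $\ket{\tilde 0^*}=\ket 0-\ket 1$ and $\ket{\tilde +^*}=(1-i)\ket 1$, and each can be written as an explicit $\bZ[i]$-combination of the generators $(1+i)\ket 0$ and $\ket 0+\ket 1$ of $BW_1^*$ (for instance, $\ket 0-\ket 1=(1-i)\cdot(1+i)\ket 0-(\ket 0+\ket 1)$).

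The main conceptual hurdle is in the reverse direction: a naive Gram-matrix approach, setting $G=G_1^{\otimes n}$ and trying to invert the system $(1-i)^n G c\in\bZ[i]^{2^n}$, only yields the weaker conclusion $(1-i)^n c\in\bZ[i]^{2^n}$ and thereby loses a factor of $(1-i)^n$. Routing instead through the dual basis and observing that the tensor factorization of $BW_n^*$ respects this dual basis is precisely what recovers the lost factor. Once this structural observation is in place, each remaining step is a brief calculation, and the two inclusions together give the Fact.
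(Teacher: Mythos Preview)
The paper does not actually prove this Fact; it states it and remarks only that the reverse inclusion is ``a non trivial exercise,'' deferring to the lattice-theory background. Your proof is correct and fills in this exercise.

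Your forward direction is the straightforward computation the paper alludes to. For the reverse direction, your dual-basis route is clean and genuinely does the work: the observation that the dual basis vectors $\ket{\tilde 0^*}=\ket 0-\ket 1$ and $\ket{\tilde +^*}=(1-i)\ket 1$ lie in $BW_1^*$ (both checked by explicit $\bZ[i]$-combinations of the generators $(1+i)\ket 0$ and $\ket 0+\ket 1$), together with the tensor factorization $BW_n^*=(BW_1^*)^{\otimes n}$, places every $\ket{\tilde{\mathbf x}^*}$ inside $BW_n^*$. The hypothesis then forces each coordinate $\braket{\tilde{\mathbf x}^*|v}$ into $\bZ[i]$, so $v\in BW_n$. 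Your diagnosis of why the naive Gram-matrix inversion loses a factor of $(1-i)^n$ is also correct and explains why the dual-basis observation (equivalently, that $BW_n$ is \emph{self-dual} up to the scaling $(1+i)^n$) is exactly what is needed.
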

It can easily be seen that the set of elements satisfying the latter condition is indeed closed under addition, however, showing that the set is exactly $BW_n$ is a non trivial exercise. Also, from the above, since $\{\ket{\tilde{\mathbf x}}\}$ for $\mathbf x\in \{0,1\}^{n}$ generate $BW_n$, we have that $\{(1+i)^n\ket{\tilde{\mathbf x}}\}$ generate the lattice $BW_n^*$.

\begin{fact}\label{fact_barnes_wall_automorphism}
The Barnes Wall lattice is mapped to itself under the group generated by following operators:
\[
S=\begin{pmatrix}
1 & 0\\
0 & i\\
\end{pmatrix}~~~e^{-i\frac\pi4}H=\frac{1}{1+i}\begin{pmatrix}
1 & 1\\
1 & -1\\
\end{pmatrix}~~~CX=\begin{pmatrix}
1 & 0 & 0 & 0\\
0 & 1 & 0 & 0\\
0 & 0 & 0 & 1\\
0 & 0  & 1 & 0\\
\end{pmatrix}
\]
Upto multiplication by phases, this is equivalent to the Clifford group.
\end{fact}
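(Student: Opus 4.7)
The plan is to verify directly that each of the three named generators maps the $\mathbb Z[i]$-spanning set $\{\ket{\tilde{\mathbf x}}:\mathbf x \in \{0,1\}^n\}$ into $BW_n$; since $BW_n$ is by definition the $\mathbb Z[i]$-module generated by this set, closure on the generators forces closure on the whole lattice. Because $S$ and $e^{-i\pi/4}H$ act nontrivially only on a single qubit, and because the generator states are tensor products of single-qubit states drawn from $\{\ket 0,\ket{\tilde +}\}$, it is enough to verify these two one-qubit generators on $\{\ket 0,\ket{\tilde +}\}$. Similarly for $CX$ it suffices to check the action on the four two-qubit states $\{\ket{00},\ket{0\tilde +},\ket{\tilde + 0},\ket{\tilde +\tilde +}\}$ and then extend by tensoring with identities on the remaining qubits.

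For $S$: clearly $S\ket 0=\ket 0$, and a short calculation using $\frac{1-i}{1+i}=-i$ gives $S\ket{\tilde +}=-i\ket 0+i\ket{\tilde +}$, with $\mathbb Z[i]$ coefficients. For $e^{-i\pi/4}H$: direct multiplication yields $e^{-i\pi/4}H\ket 0=\ket{\tilde +}$ and $e^{-i\pi/4}H\ket{\tilde +}=\frac{2}{(1+i)^2}\ket 0=-i\ket 0$, so this generator just swaps $\ket 0$ and $\ket{\tilde +}$ up to a unit $\pm i$. For $CX$: the actions on $\ket{00}$, $\ket{0\tilde +}$, and $\ket{\tilde +\tilde +}$ are trivial (the latter because $CX$ permutes the four computational basis vectors summed in $\ket{\tilde+\tilde+}$), while the only nontrivial case $CX\ket{\tilde + 0}=\frac 1{1+i}(\ket{00}+\ket{11})$ can be solved by inverting the $4\times 4$ change-of-basis from $\{\ket{00},\ket{01},\ket{10},\ket{11}\}$ to $\{\ket{00},\ket{0\tilde +},\ket{\tilde + 0},\ket{\tilde +\tilde +}\}$ to yield explicit coefficients $(1-i,-1,-1,1+i)$, all in $\mathbb Z[i]$. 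Since each of $S$, $e^{-i\pi/4}H$, $CX$ has an inverse that is a product of the same generators (e.g.\ $S^{-1}=S^3$, $(e^{-i\pi/4}H)^{-1}=(e^{-i\pi/4}H)^3$ up to phase, and $CX$ is involutive), each acts as a lattice automorphism.

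For the second assertion, observe that $S$, $e^{-i\pi/4}H$, and $CX$ differ from the standard Clifford generators $S$, $H$, $CX$ only by the scalar $e^{-i\pi/4}$ on the $H$ generator. The subgroup of global phases generated by $e^{-i\pi/4}$ is central, so modding out by scalars identifies the group generated by $\{S,e^{-i\pi/4}H,CX\}$ with the projective Clifford group; equivalently, every Clifford unitary equals one of our generated automorphisms up to an overall phase. The main technical obstacle is the $CX$ case, which boils down to the explicit $4\times 4$ inversion; the remainder of the proof is single-qubit bookkeeping together with the observation that both the unit group of $\mathbb Z[i]$ and the pattern $(1+i)^{-1}(\ket 0+\ket 1)$ conspire to keep all expansion coefficients Gaussian integers.
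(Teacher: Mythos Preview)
The paper does not prove this statement: it is recorded as a \emph{Fact} in the preliminaries and attributed to the literature (Kliuchnikov--Sch\"onnenbeck). Your direct verification is correct and fills in what the paper omits. Reducing to the action on the tensor-product generating set $\{\ket{\tilde{\mathbf x}}\}$ is exactly the right move, and all of your explicit coefficients check out; in particular the only nontrivial case indeed gives
\[
CX\ket{\tilde + 0}=(1-i)\ket{00}-\ket{0\tilde +}-\ket{\tilde + 0}+(1+i)\ket{\tilde +\tilde +}.
\]
Your bijectivity argument is also fine: each generator has a power equal to a unit scalar in $\mathbb Z[i]$ (e.g.\ $(e^{-i\pi/4}H)^2=-iI$), and multiplication by a unit of $\mathbb Z[i]$ is a lattice automorphism, so each generator is invertible on $BW_n$. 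The only cosmetic point is that the Clifford group is generated by these gates acting on \emph{arbitrary} qubits, but your tensor reduction is symmetric in the choice of qubit(s), so nothing further is needed.
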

\begin{corollary}
    All stabilizer states lie on $BW_n$ upto phase.    
\end{corollary}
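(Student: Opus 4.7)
The plan is to unpack the definition of stabilizer states and combine it directly with Fact \ref{fact_barnes_wall_automorphism}. Recall that a stabilizer state is any state of the form $C\ket{0^n}$ for some Clifford unitary $C$. So to prove the corollary, it suffices to exhibit, for any such $C$, a global phase $e^{i\theta}$ such that $e^{-i\theta} C\ket{0^n} \in BW_n$.

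First I would observe that $\ket{0^n}$ itself lies in $BW_n$. Indeed, taking $\mathbf x = \mathbf 0 \in \{0,1\}^n$, the generator $\ket{\tilde{\mathbf 0}} = \ket 0 ^{\otimes n} = \ket{0^n}$ appears in the generating set of $BW_n$, so $\ket{0^n}\in BW_n$ trivially (with coefficient $1 \in \bZ[i]$ and all other coefficients $0$).

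Next, by Fact \ref{fact_barnes_wall_automorphism}, the group $G$ generated by $S$, $e^{-i\pi/4} H$, and $CX$ maps $BW_n$ into itself, and this group is equal to the Clifford group up to multiplication by global phases. Hence, for any Clifford $C$ there exist $U \in G$ and $\theta \in \bR$ with $C = e^{i\theta} U$. Then
\[
C\ket{0^n} \;=\; e^{i\theta}\, U\ket{0^n},
\]
and since $U$ preserves $BW_n$ and $\ket{0^n}\in BW_n$, we have $U\ket{0^n} \in BW_n$. Therefore $C\ket{0^n}$ equals an element of $BW_n$ up to the global phase $e^{i\theta}$, which is precisely the claim.

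There is no real obstacle here; the content is entirely packaged into the preceding fact that the Clifford group, up to phase, is the automorphism group of $BW_n$, together with the observation that the computational basis state $\ket{0^n}$ is literally one of the listed generators of the lattice. The only thing one has to be slightly careful about is that an honest Clifford unitary need not itself preserve the lattice (e.g.\ $H$ does not, only $e^{-i\pi/4}H$ does), which is why the statement is up to phase rather than on the nose.
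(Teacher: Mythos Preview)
Your proof is correct and is exactly the intended argument: the paper states this as an immediate corollary of Fact~\ref{fact_barnes_wall_automorphism} without further proof, and your unpacking---that $\ket{0^n}=\ket{\tilde{\mathbf 0}}\in BW_n$ and that any Clifford equals a lattice-preserving operator up to a global phase---is precisely how one reads off the corollary from the fact.
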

\begin{fact}[Minimal vectors\cite{kliuchnikov2024stabilizer}]
    For any vector $v\in BW_n$, we have $\norm{v}\geq 1$, with $\norm{v}=1$ iff $v$ is a stabilizer state upto phase.\footnote{In \cite{kliuchnikov2024stabilizer}, since they work with a scaled up lattice, the norm of the minimal vectors in that lattice is $2^{n/2}$. Moreover, in \cite{kliuchnikov2024stabilizer}, they work with a more general norm of $\Tr_{K/Q}[\norm{v}_2^2]$, where $K=\mathbb Q(\zeta)\cap \mathbb R$, for the case when the lattice is defined over $\mathbb Z[\zeta]$. For our case, since we work with $\mathbb Z[i]$, we have $K=Q$, and thus the norm reduces to just $\norm{v}^2$.}
\end{fact}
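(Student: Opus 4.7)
The plan is to argue by induction on $n$, using a Plotkin-style recursive decomposition of $BW_n$ along the first qubit. For the base case $n = 1$, the lattice is generated over $\mathbb Z[i]$ by $\ket 0$ and $\ket{\tilde +}$, so any element takes the form $v = (a + \frac{b}{1+i})\ket 0 + \frac{b}{1+i}\ket 1$ with $a, b \in \mathbb Z[i]$. Then $\|v\|^2 = \tfrac{1}{2}(|a(1+i)+b|^2 + |b|^2)$, so $\|v\|^2 \geq 1$ reduces to the elementary fact that two non-negative integers summing to less than $2$ must both vanish. A direct case split on whether $b$ and $a(1+i) + b$ vanish produces exactly three equality-families ($v$ proportional to $\ket 0$, to $\ket 1$, or to $\tfrac{c\ket 0+b\ket 1}{1+i}$ for $b,c\in\{\pm1,\pm i\}$), each of which can be checked to be a single-qubit stabilizer state times a phase from $\{e^{ik\pi/4}\}_{k=0}^{7}$ (these are the lattice phases admissible on minimal vectors).

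For the inductive step, the generator set immediately gives every $v \in BW_n$ the form $v = \ket 0 \otimes A + \ket{\tilde +} \otimes B$ for some $A, B \in BW_{n-1}$. Substituting $\ket{\tilde +} = \frac{1}{1+i}(\ket 0 + \ket 1)$ rewrites this as $v = \ket 0 \otimes u + \ket 1 \otimes w$ with $w = \frac{B}{1+i}$ and $u = A + w$, so that $(1+i)w \in BW_{n-1}$ and $u - w = A \in BW_{n-1}$; in particular both $u, w \in \frac{1}{1+i}BW_{n-1}$. Using $\|v\|^2 = \|u\|^2 + \|w\|^2$ together with the inductive hypothesis (every nonzero element of $\frac{1}{1+i}BW_{n-1}$ has squared norm at least $\tfrac 1 2$), a short case split gives $\|v\|^2 \geq 1$: if one of $u, w$ vanishes, the other lies in $BW_{n-1}$ itself and so has squared norm at least $1$; if neither vanishes, each contributes at least $\tfrac 1 2$.

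For the equality characterisation: when one of $u, w$ vanishes, the other is a minimal vector of $BW_{n-1}$ and hence a stabilizer state by induction, so $v$ is too. When both are nonzero, $\|u\| = \|w\| = \tfrac{1}{\sqrt 2}$, so $(1+i)u = c_1\ket{s_1}$ and $(1+i)w = c_2\ket{s_2}$ for stabilizer states $\ket{s_1}, \ket{s_2}$ and lattice-compatible phases $c_1, c_2$, and the residual constraint $u - w \in BW_{n-1}$ becomes $c_1\ket{s_1} - c_2\ket{s_2} \in (1+i)BW_{n-1}$. The main obstacle is to conclude from this last constraint that $v$ itself is a stabilizer state up to phase. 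My approach is to apply the Clifford automorphism (Fact \ref{fact_barnes_wall_automorphism}) to normalise $\ket{s_1} = \ket 0^{\otimes n-1}$; since $(1+i)BW_{n-1}$ has minimum squared norm $2$ while $\|c_1\ket{s_1} - c_2\ket{s_2}\|^2 \leq 4$, the difference is either zero (forcing $\ket{s_2}$ proportional to $\ket 0^{\otimes n-1}$ and making $v$ a product stabilizer state on the first qubit) or is a minimal vector of $(1+i)BW_{n-1}$, equal to $(1+i)c\ket t$ for a stabilizer state $\ket t$ and lattice phase $c$. In this second subcase, solving for $\ket{s_2} = \tfrac{c_1}{c_2}\ket 0^{\otimes n-1} - \tfrac{(1+i)c}{c_2}\ket t$ and imposing $\|\ket{s_2}\|=1$ forces a rigid condition on $\braket{t \,|\, 0^{\otimes n-1}}$; combined with the discreteness of stabilizer inner products (values of the form $2^{-k/2}e^{i\theta}$ for specific $\theta$) and the allowed lattice phases, this reduces to a short list of configurations, each matching the action of one of the Clifford generators $H$, $S$, or $CX$ on $\ket 0^{\otimes n}$, so that $v$ is indeed a stabilizer state up to phase, closing the induction.
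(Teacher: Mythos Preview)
The paper does not give its own proof of this statement: it is recorded as a Fact and attributed to \cite{kliuchnikov2024stabilizer}, so there is no in-paper argument to compare against. What follows is therefore an assessment of the correctness of your proposal on its own.

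Your inductive argument for the inequality $\|v\|\geq 1$ is sound, with one small slip in the base case: the assertion ``two non-negative integers summing to less than $2$ must both vanish'' is false as stated (consider $0+1$). What actually makes the base case work is the parity observation that $|a(1+i)+b|^2\equiv |b|^2\pmod 2$ in $\mathbb Z[i]$, so the sum $|a(1+i)+b|^2+|b|^2$ is always even and hence is $0$ or $\geq 2$. Once this is patched, the Plotkin-style split $v=\ket 0\otimes u+\ket 1\otimes w$ with $u-w\in BW_{n-1}$ and $(1+i)w\in BW_{n-1}$ cleanly gives $\|v\|^2\geq 1$.

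The equality case, however, has a genuine gap. After normalising $(1+i)u=\ket{0}^{\otimes (n-1)}$ you argue that the difference $c_1\ket{s_1}-c_2\ket{s_2}\in (1+i)BW_{n-1}$ is ``either zero or a minimal vector of $(1+i)BW_{n-1}$''. This dichotomy is not justified: you have only shown the squared norm is at most $4$, and you would need to know that $(1+i)BW_{n-1}$ has no vectors of squared norm strictly between $2$ and $4$ (equivalently, that $BW_{n-1}$ has no vectors with $1<\|\cdot\|^2<2$), which is a nontrivial statement about the Barnes--Wall theta series that you have not established inductively. You also silently drop the squared-norm-$4$ case (which forces $c_2\ket{s_2}=-c_1\ket{s_1}$ and is easy, but should be stated). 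More seriously, in the squared-norm-$2$ subcase your plan degenerates into ``imposing $\|\ket{s_2}\|=1$ forces a rigid condition \ldots\ this reduces to a short list of configurations, each matching the action of one of the Clifford generators''. This is an assertion, not an argument: one must classify all minimal vectors of $BW_{n-1}$ lying in the coset $\ket{0}^{\otimes(n-1)}+(1+i)BW_{n-1}$ and verify that each resulting $v$ is a stabilizer state. That classification (essentially identifying the ``frame'' of minimal vectors in a fixed coset) is exactly the nontrivial structural input here, and your sketch does not supply it.
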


We also recall the well-known Minkowski theorem about lattices:
\begin{theorem}[Minkowski's Theorem for lattices \cite{olds2000geometry}]\label{minkowski}
Let $L$ be a $\mathbb Z$ lattice in $\mathbb R^n$ with $n$ linearly independent generators. Then there is $x\in L \setminus \{0\}$
with $\norm x_{2} \leq  \sqrt{n}\,|\det(L)|^{\frac{1}{n}}$.
\end{theorem}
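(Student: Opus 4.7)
The plan is to reduce the theorem to Minkowski's convex body theorem, and then specialise to an axis-aligned cube to obtain the claimed $\ell^2$ bound. As the paper does not prove this (merely citing \cite{olds2000geometry}), I outline the standard two-step argument.

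First I would establish Blichfeldt's lemma: if $S\subseteq \mathbb R^n$ is measurable with $\mathrm{vol}(S)>|\det(L)|$, then there exist distinct $x,y\in S$ with $x-y\in L$. The argument is a pigeonhole on a fundamental domain $F$ of $L$ (whose volume equals $|\det(L)|$): translating each piece $S\cap(F+v)$, $v\in L$, back into $F$ by subtracting $v$ must produce overlapping images, since the sum of their volumes strictly exceeds $\mathrm{vol}(F)$. Any point in such an overlap, pulled back to two distinct preimages in $S$, gives the required $x,y$.

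Next I would derive Minkowski's convex body theorem in the form I need: if $K\subseteq \mathbb R^n$ is convex, centrally symmetric, compact, and $\mathrm{vol}(K)\geq 2^n|\det(L)|$, then $K$ contains a nonzero lattice point. For the strict inequality case, apply Blichfeldt to $\tfrac12 K$, which has volume strictly greater than $|\det(L)|$, obtaining distinct $x,y\in \tfrac12 K$ with $x-y\in L\setminus\{0\}$; symmetry gives $-2y\in K$, and convexity gives $x-y=\tfrac12(2x)+\tfrac12(-2y)\in K$. To cover the boundary case $\mathrm{vol}(K)=2^n|\det(L)|$ needed for a closed cube, apply the strict version to the dilated bodies $(1+\epsilon)K$, obtaining nonzero lattice points $v_\epsilon\in(1+\epsilon)K$; since these lie in a fixed compact set and $L$ is discrete, a subsequence is constant, yielding a nonzero lattice point in $K$ itself.

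Finally, specialise to the closed cube $C_r=[-r,r]^n$ with $r=|\det(L)|^{1/n}$, so $\mathrm{vol}(C_r)=(2r)^n=2^n|\det(L)|$. The boundary version above produces a nonzero $x\in L\cap C_r$, which satisfies $\|x\|_\infty\leq r$, and hence $\|x\|_2\leq \sqrt n\,\|x\|_\infty\leq \sqrt n\,|\det(L)|^{1/n}$. The one subtlety is the compactness/limit argument needed for the non-strict volume inequality; if one is willing to settle for $\|x\|_2\leq \sqrt n(1+\epsilon)|\det(L)|^{1/n}$ for arbitrary $\epsilon>0$, this step can be avoided entirely and the strict Blichfeldt-style argument suffices. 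Since everything else is just volume arithmetic and the Cauchy--Schwarz relation between $\|\cdot\|_\infty$ and $\|\cdot\|_2$, this compactness step is the only nontrivial ingredient.
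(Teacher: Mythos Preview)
Your argument is the standard and correct derivation: Blichfeldt's pigeonhole on a fundamental domain, then Minkowski's convex body theorem (with the compactness limit for the non-strict case), then the specialisation to the cube $[-r,r]^n$ with $r=|\det(L)|^{1/n}$ and the $\|\cdot\|_2\leq\sqrt n\,\|\cdot\|_\infty$ inequality. There is nothing to compare against, since the paper does not prove this statement at all---it is quoted as a classical fact with a citation to \cite{olds2000geometry} and then used as a black box in the proof of Theorem~\ref{thm_extent_vs_rank_restated}.
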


Here, $\det L$ stands for the volume of \emph{fundamental parallelopiped} of the lattice, which equals the square root of the determinant of the gram matrix $G$ of the generators, which is defined as $G_{ij}=v_i^{\dagger}v_j$ for the set of generators $\{v_j\}$.

\section{Linear Lower Bound on Stabilizer Rank}
\label{sec_linearlowerbound}

Here, we give proofs for Theorem \ref{thm_extent_vs_rank} and Theorem \ref{thm_fid_vs_rank}.
\begin{theorem}\label{thm_extent_vs_rank_restated}
    Let $\ket{\phi}$ be a pure state such that $\ket\phi=\sum_{j=1}^k c_i\ket {s_j}$, where $\ket{s_j}$'s are all linearly independent stabilizer states, $c_j\in \mathbb C$. Then we have that $\norm{\mathbf c}_1\leq \sqrt e\cdot (2k)^{\frac {2k+1}2}$. 
\end{theorem}
\begin{proof}
     Consider the $\mathbb Z[i]$ lattice generated by $\ket{s_j}$'s. This forms a sub-lattice of the $n$ qubit Barnes Wall lattice. This lattice is the same as the $\mathbb Z$-lattice generated by $\{\ket {s_j}\}$ and $\{i\ket{s_j}\}$. 

    We also know that the $\mathbb Z$-lattice generated by $\{\ket {s_j}\}$ and $\{i\ket{s_j}\}$ has minimal vectors of size $1$. Let $G$ be the Gram matrix for these $2k$ vectors (here, we consider the space $\mathbb C^{d}$ these vectors lie in as $\mathbb R^{2d}$, and this change does not alter length's of vectors). By Minkowski's Theorem (Theorem \ref{minkowski}),  we have that $\mathrm {det}(G)\geq ({2k})^{-2k}$.

     In the given decomposition for $\ket{\phi}$, we rewrite each $c_j$ as $f_j+ig_j$, using which we write $\ket\phi$ as an $\mathbb R$-linear combination of $\{\ket{s_j}\}$ and $\{i\ket{s_j}\}$. Then we have that $\norm{\mathbf f}_1+\norm{\mathbf g}_1 \geq \norm{\mathbf c}_1$. Applying the power mean inequality, we have that $\norm{\mathbf f}^2_2+\norm{\mathbf g}^2_2\geq \frac {\norm{\mathbf c}_1^2}{{2k}}$. We can now write:
     \begin{equation}  
\begin{pmatrix}
\ket{s_{1}}& \hdots & \ket{s_{k}}&i\ket{s_1}& \hdots i\ket{s_k}\\
\vdots & \vdots &\vdots&\vdots&\vdots&\vdots\\
\end{pmatrix}\begin{pmatrix}
f_{1}\\
f_{2}\\
\vdots\\
f_{k}\\
g_{1}\\
g_{2}\\
\vdots\\
g_{k}\\
\end{pmatrix}=\ket \phi\label{eqn_stab_matrix_vec_coeff}
\end{equation}

Here, the left multiplicand, which we can define as $M$, is the matrix with the vectors $\ket{s_j}$ and $i\ket{s_j}$ for each $j=1$ to $k$ as column vectors written in a basis for the $\mathbb R^{2d}$ vector space they lie in. Call the right multiplicand, which is vector of coefficients $f_j$ and $g_j$ for each $j$, as $\ket v$. With the above notation, the Equation \ref{eqn_stab_matrix_vec_coeff} can be rewritten as $M\ket v=\ket \phi$.  With this, we have that:
\[\bra v M^{\dagger} M\ket v=\bra v G \ket v=1 \]

as $\braket{\phi|\phi}=1$. Let $\ket{\Tilde v}=\frac{1}{\sqrt{\braket{v|v}}}\ket v$, i.e, a unit vector proportional to $\ket v$. Plugging it into the above expression, we get that $\bra {\Tilde v} G \ket {\Tilde v}=\braket{v|v}^{-1}=\left(\norm{\mathbf f}^2_2+\norm{\mathbf g}^2_2\right)^{-1}\leq \frac{2k}{\norm{\mathbf c}_1^2}$. Let the smallest eigenvalue of $G$ be $\lambda_{2k}$. Then, we know that $\lambda_{2k}\leq \frac {2k}{\norm{\mathbf c}_1^2}$.Now, we use this to upper bound the determinant.
\begin{claim}
    $\mathrm{det}(G)\leq e \cdot \lambda_{2k}$.
\end{claim}
\begin{proof}
    If $\lambda_1,\lambda_2\dots \lambda_{2k}$ are the eigenvalues of $G$ in non increasing order, then, we have that (by the AM-GM inequality and that $\sum_{j=1}^{2k} \lambda_j=2k$)
    \[\det(G)=\prod_{j=1}^{2k}\lambda_j=\lambda _{2k}\cdot \prod_{i=1}^{2k-1} \lambda_j\leq \lambda_{2k}\cdot \left(\frac {1}{2k-1}\sum_{j=1}^{2k-1}\lambda_j\right)^{2k-1}\leq \lambda_{2k}\left(1+\frac {1}{2k-1}\right)^{2k-1}\leq e\cdot \lambda_{2k}\]
\end{proof}

Comparing the upper and lower bound for $\det(G)$, we get:

\[(2k)^{-2k}\leq \frac{2ek}{\norm{\mathbf c}_1^2} \implies \norm{\mathbf c}_1 \leq \sqrt e\cdot (2k)^{\frac {2k+1}2} \]

\end{proof}
\begin{theorem}
    Let $\ket \phi$ be a pure state with stabilizer fidelity $F_S$. Suppose there is a state $\ket {\psi}$ with stabilizer rank $k$. Then we have that $\frac{|\braket{\phi|\psi}|}{\sqrt{F_S}} \leq \sqrt{e} (2k)^{\frac{2k+1}{2}}$. In particular, if $\frac{|\braket{\phi|\psi}|}{\sqrt{F_S}}=2^{\Omega(n)}$, then $k=\Omega\left(\frac n {\log n}\right)$.
\end{theorem}
\begin{proof}
Let $\ket{\psi}$ have an exact stabilizer decomposition with $k$ linearly independent stabilizer states $\ket{s_1}\dots \ket{s_k}$. Let $V$ be the space spanned by these states, and let $P$ be the orthogonal projector onto this space. Without loss of generality we can assume $\ket{\psi}=\frac{P\ket{\phi}}{\norm{P\ket{\phi}}_2}$, as it is the vector with the largest inner product (in magnitude) with $\ket{\phi}$ in $V$. With this, we have that $\norm{P\ket{\phi}}_2 =|\braket{\psi|\phi}|$. Now, we know that $\Tr[\ketbra{\phi}{\phi}\ketbra{s_j}{s_j}]=|\braket{\phi|s_j}|^2\leq F_S$ for each $j$. Therefore,
\begin{align}
    |\braket{\psi|s_j}|^2&=\Tr[\ketbra{\psi}{\psi} \ketbra{s_j}{s_j}]\\
    &=\Tr[(P\ketbra{\phi}{\phi}P) \ketbra{s_j}{s_j}] \frac{1}{\norm{P\ket\phi}_2^2}\\
    &= \Tr[\ketbra{\phi}{\phi}(P \ketbra{s_j}{s_j}P)] \frac{1}{|\braket{\psi|\phi}|^2}\\
    &= \Tr[\ketbra{\phi}{\phi} \ketbra{s_j}{s_j})] \frac{1}{|\braket{\psi|\phi}|^2}\\
    &\leq \frac{F_S}{|\braket{\psi|\phi}|^2}
\end{align}

Now, consider the decomposition of $\ket{\psi}$ in the $\ket{s_j}$ basis as $\ket\psi= \sum_{j=1}^{k} c_j\ket{s_j}$. Multiplying by $\bra{\psi}$ on the left, we get:

\[1=\braket{\psi|\psi}=|\sum_{j=1}^{k}c_j\braket{\psi|s_j}|\leq \norm{\mathbf c}_1\cdot \sqrt{\frac{F_S}{|\braket{\psi|\phi}|^2}}\]
\[\implies \norm{\mathbf c}_1\geq \frac{|\braket{\phi|\psi}|}{\sqrt{F_S}}\]
By Theorem \ref{thm_extent_vs_rank}, we get that $\frac{|\braket{\phi|\psi}|}{\sqrt{F_S}} \leq \sqrt e \cdot (2k)^{\frac{2k+1}{2}}$. Now, if $\frac{|\braket{\phi|\psi}|}{\sqrt{F_S}}=2^{\Omega(n)}$, then $(2k+1)\log (2k)=\Omega(n)\implies k=\Omega\left(\frac n{\log n}\right)$
\end{proof}

\section{Magic Monotones and Barnes Wall Lattices}
\label{sec_magic}
In this section we will show that the norms of vectors on the Barnes Wall lattice are closely connected to magic. In particular, we will prove Theorem \ref{thm_magic_props}, Theorem \ref{thm_monotone_CS_count} and Theorem \ref{thm_monotone_vs_rank}. We will use notation defined in Section \ref{sec_barnes_wall_prelims}. We define the following quantity:
\begin{definition} \label{def_bw_norm}
    For any vector (possibly unnormalized) $\ket {\phi}$, define
    \[
\N(\ket{\phi})=\min\{||c\ket{\phi}||_{2}^{2}: c\k{\phi}\in BW_{n},~c\in \mathbb{C^{*}} \}
\]This quantity is $\infty$ if there is no such vector on $BW_n$.
\end{definition}
where $\mathbb{C^{*}}$ is the set of non-zero complex numbers. The approximate version of this quantity $N_{\delta}$ can be defined as follows:
\begin{definition}\label{def_approx_bw_norm}
    For normalized $\ket{\phi}$, define:
    \[
\N_{\delta}(\ket{\phi})=\min\{\norm{c\ket{\phi}}_2^2: c(\ket\phi+\ket{\psi})\in BW_n,\norm{\ket{\psi}}_2\leq \delta\}
\]

Extend it to all non zero vectors by $\N_{\delta}(c\ket{\phi})=\N_\delta(\ket{\phi})$ for all $c\in \mathbb C^*$.
\end{definition}

Note that the $2$-norm is equivalent to the norm specified in \cite{kliuchnikov2024stabilizer} on $\mathbb{Q}[i]$. In our results we will restrict to $\mathbb{Q}[i]$.

\subsection{Properties of $\N$ and $\N_\delta$}
\begin{definition}
    A vector $\ket{\phi}$ is proportionally minimal in $BW_n$ if $\ket{\phi}\in BW_n$ and for all $0<|c|<1$, $c\ket{\phi}\notin BW_n$. 
\end{definition}
We will simply refer to such vectors as proportionally minimal. 
\begin{lemma}
    $\N(\ket{\phi})<\infty$ if and only if the ratios of the components along any two stabilizer states lies in $\mathbb Q(i)$.
\end{lemma}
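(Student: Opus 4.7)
The plan hinges on the identification of $BW_n$ with the $\mathbb{Z}[i]$-module generated by the basis $\{\ket{\tilde{\mathbf{x}}}\}_{\mathbf{x}\in\{0,1\}^n}$. These $2^n$ vectors are linearly independent over $\mathbb{C}$ (being a tensor product of the bases $\{\ket 0,\ket{\tilde +}\}$ of $\mathbb{C}^2$), and so they form a full $\mathbb{C}$-basis of $\mathbb{C}^{2^n}$. Thus every $\ket{\phi}$ admits a unique expansion $\ket{\phi}=\sum_{\mathbf{x}}\phi_{\mathbf{x}}\ket{\tilde{\mathbf{x}}}$, and membership $c\ket{\phi}\in BW_n$ is equivalent to $c\phi_{\mathbf{x}}\in\mathbb{Z}[i]$ for every $\mathbf{x}$. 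This reduces the lemma to a statement about the coordinates $\phi_{\mathbf{x}}$.

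For the forward direction, if there is $c\neq 0$ with all $c\phi_{\mathbf{x}}\in\mathbb{Z}[i]$, then for any $\mathbf{x},\mathbf{y}$ with $\phi_{\mathbf{y}}\neq 0$ the ratio $\phi_{\mathbf{x}}/\phi_{\mathbf{y}}=(c\phi_{\mathbf{x}})/(c\phi_{\mathbf{y}})$ is a quotient of Gaussian integers, hence lies in $\mathbb{Q}(i)$. Since each $\ket{\tilde{\mathbf{x}}}$ equals a genuine stabilizer state times a phase of the form $e^{-iw(\mathbf{x})\pi/4}$ (with $w(\mathbf{x})$ the Hamming weight), the $\phi_{\mathbf{x}}$'s are, up to these unimodular factors, the components of $\ket{\phi}$ along stabilizer states.

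For the reverse direction, suppose the components of $\ket{\phi}$ along stabilizer states have all pairwise ratios in $\mathbb{Q}(i)$; in particular this holds in the $\{\ket{\tilde{\mathbf{x}}}\}$ expansion. Fix any $\phi_{\mathbf{y}_0}\neq 0$, write $\phi_{\mathbf{x}}=q_{\mathbf{x}}\phi_{\mathbf{y}_0}$ with $q_{\mathbf{x}}\in\mathbb{Q}(i)$, and clear denominators by choosing a positive integer $d$ with $dq_{\mathbf{x}}\in\mathbb{Z}[i]$ for every $\mathbf{x}$ (possible since there are only $2^n$ denominators to kill). Setting $c:=d/\phi_{\mathbf{y}_0}$ gives $c\ket{\phi}=\sum_{\mathbf{x}}(dq_{\mathbf{x}})\ket{\tilde{\mathbf{x}}}\in BW_n$, so $\N(\ket{\phi})\leq \norm{c\ket{\phi}}_2^2 <\infty$.

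The one genuinely technical step will be upgrading the argument from the fixed basis $\{\ket{\tilde{\mathbf{x}}}\}$ to the statement about \emph{any} two stabilizer states, because a stabilizer state need not lie in $BW_n$ itself but only in $e^{ik\pi/4}BW_n$ for some integer $k$ (a discrepancy between the Clifford generators and the lattice-preserving generators $\{S,e^{-i\pi/4}H,CX\}$ of Fact \ref{fact_barnes_wall_automorphism}). The plan is to invoke Fact \ref{fact_barnes_wall_automorphism}: any other stabilizer basis is, up to $e^{ik\pi/4}$ phases, a $BW_n$-automorphic image of $\{\ket{\tilde{\mathbf{x}}}\}$, and the $\mathbb{Q}(i)$-ratio condition is invariant under such automorphisms provided one tracks the $e^{ik\pi/4}$ phases consistently. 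Once this bookkeeping is done, the equivalence follows from the basis case treated above.
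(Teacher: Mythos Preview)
Your core argument (first three paragraphs) is correct and essentially matches the paper's. The paper's forward direction invokes the dual lattice to get $(1+i)^n\braket{s|\phi}\in\mathbb Z[i]$ rather than reading off primal coordinates in the $\{\ket{\tilde{\mathbf x}}\}$ basis, and its reverse direction clears denominators in the computational basis $\{\ket 0,\ket 1\}^{\otimes n}$ instead of $\{\ket{\tilde{\mathbf x}}\}$; these are cosmetic differences, since both bases sit inside $BW_n$.

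Your final paragraph, however, cannot be carried out as planned, and the issue you spot is genuine. With the paper's own definition of stabilizer states (vectors $C\ket{0^n}$ with $C$ generated by $S,H,CX$), the lemma is literally false: for $\ket\phi=\ket 0\in BW_1$ one has $\braket{0|\phi}/\braket{+|\phi}=\sqrt 2\notin\mathbb Q(i)$. No amount of $e^{ik\pi/4}$ bookkeeping will prove a false statement. The paper's proof silently sidesteps this by effectively reading ``stabilizer state'' as ``minimal vector of $BW_n$'' (the lattice-phase representative), for which the dual-lattice inner product really does land in $\mathbb Z[i]$. Under that reading your first three paragraphs already constitute a complete proof, and the last paragraph should be replaced by a one-line remark clarifying the intended interpretation rather than an attempt to upgrade the argument to arbitrary Clifford phases.
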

\begin{proof}
    First we prove $\implies$: WLOG assume $\ket{\phi}$ is proportionally minimal (this does not alter the ratios).
    Then by the property of $BW_n$, we have that for any stabilizer state $\ket{s}$, $(1-i)^n\braket{s|\phi}\in \mathbb Z[i]$, which suffices. For $\impliedby$, WLOG assume that $\braket{0^n|\phi}=1$.

    Then, in the basis of tensor products of $\ket0,\ket1$, all components are in $\mathbb Q(i)$. Multiply the vector by a suitable number in $\mathbb Z[i]$ so that all of these are now in $\mathbb Z[i]$ as well. Since all tensor products of $\ket0$ and $\ket 1$ lie in $BW_n$, this vector does too, and thus $\N(\phi)<\infty$.

\end{proof}

We now characterize the proportionally minimal vectors:
\begin{lemma}\label{lem_gcd_prop_min}
$\gcd_{\mathbf x\in \{0,1\}^n}\{(1-i)^{n}\braket{\tilde{\mathbf x}|\phi}\}=1$ is a necessary and sufficient condition for proportional minimality for $\ket{\phi}\in BW_{n}$.
\end{lemma}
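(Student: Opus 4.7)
The plan is to reformulate both the lattice condition and the gcd condition in terms of the coefficients $\alpha_{\mathbf x}:=(1-i)^n\braket{\tilde{\mathbf x}|\phi}$, and then use Bezout's lemma in $\mathbb Z[i]$. The key observation is the dual lattice fact from Section \ref{sec_barnes_wall_prelims}: since $BW_n^*=(1+i)^n BW_n$ is generated by $(1+i)^n\ket{\tilde{\mathbf x}}$, a vector $v\in\mathbb C^{2^n}$ lies in $BW_n$ if and only if $(1-i)^n\braket{\tilde{\mathbf x}|v}\in \mathbb Z[i]$ for every $\mathbf x\in\{0,1\}^n$. Applying this to $c\ket\phi$ gives: $c\ket\phi\in BW_n \iff c\alpha_{\mathbf x}\in\mathbb Z[i]$ for all $\mathbf x$. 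In particular, since $\ket\phi\in BW_n$ by assumption, each $\alpha_{\mathbf x}\in\mathbb Z[i]$, so the gcd $d:=\gcd_{\mathbf x}\{\alpha_{\mathbf x}\}$ lies in $\mathbb Z[i]$.

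For the sufficiency direction ($\gcd=1$ implies proportional minimality), I would suppose $c\ket\phi\in BW_n$ for some nonzero $c\in\mathbb C^*$, and show $|c|\geq 1$. By the reformulation, $c\alpha_{\mathbf x}\in\mathbb Z[i]$ for every $\mathbf x$. By Bezout's lemma applied to the finite set $\{\alpha_{\mathbf x}\}_{\mathbf x\in\{0,1\}^n}$, there exist $p_{\mathbf x}\in\mathbb Z[i]$ with $\sum_{\mathbf x}p_{\mathbf x}\alpha_{\mathbf x}=1$ (up to a unit). Multiplying by $c$ yields $c=\sum_{\mathbf x}p_{\mathbf x}(c\alpha_{\mathbf x})\in\mathbb Z[i]$, so $|c|\geq 1$, as required.

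For the necessity direction, I would prove the contrapositive: if $\gcd\{\alpha_{\mathbf x}\}$ is not a unit, then $|d|>1$, and I would exhibit a witness $c$ with $0<|c|<1$ and $c\ket\phi\in BW_n$. The natural choice is $c:=1/d$: since $d$ divides each $\alpha_{\mathbf x}$ in $\mathbb Z[i]$, we have $c\alpha_{\mathbf x}=\alpha_{\mathbf x}/d\in\mathbb Z[i]$ for every $\mathbf x$, so $c\ket\phi\in BW_n$. But $|c|=1/|d|<1$, violating proportional minimality.

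There is no real obstacle here since the argument is essentially an unpacking of definitions combined with Bezout; the only thing to be careful about is that ``$\gcd=1$'' is only determined up to multiplication by units $\{\pm 1,\pm i\}$ in $\mathbb Z[i]$, but this ambiguity does not affect the absolute value $|d|$, which is what actually controls proportional minimality. I should also note briefly that the case $\ket\phi=0$ is excluded (otherwise all $\alpha_{\mathbf x}$ vanish and the gcd is undefined), and that proportional minimality is implicitly a statement about $\ket\phi\neq 0$, which is consistent with Definition \ref{def_bw_norm} requiring $c\in\mathbb C^*$.
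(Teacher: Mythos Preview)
Your proof is correct and rests on the same two ingredients as the paper's: the dual--lattice characterization $c\ket\phi\in BW_n\iff c\alpha_{\mathbf x}\in\mathbb Z[i]$ for all $\mathbf x$, and Bezout's lemma in $\mathbb Z[i]$. The necessity direction (divide by $d$) is identical to the paper's. For sufficiency, the paper argues by contradiction: assuming some $c\ket\phi$ with $0<|c|<1$ is on the lattice, it first observes $c\in\mathbb Q(i)$, writes $c=p/q$ in lowest terms, and uses Bezout on $p,q$ to show $\tfrac1q\ket\phi\in BW_n$, whence $q$ divides every $\alpha_{\mathbf x}$, contradicting $\gcd=1$. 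Your version is a slight streamlining: by applying (the finite--set extension of) Bezout directly to the $\alpha_{\mathbf x}$, you get $c\in\mathbb Z[i]$ in one step, avoiding the intermediate reduction to $\mathbb Q(i)$ and the $p/q$ bookkeeping. Note that the paper only states Bezout for two elements, so strictly speaking your use of it for $2^n$ elements requires a routine induction, but this is standard in a Euclidean domain.
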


\begin{proof}
We know that $(1+i)^n\ket{\Tilde{\mathbf x}}$ generate the dual lattice $BW_n^*$ for $BW_n$, so $(1-i)^{n}\braket{\tilde{\mathbf x}|\phi}\in \mathbb Z[i]$ $\forall \mathbf x\in \{0,1\}^n$ is a necessary and sufficient condition for $\ket{\phi}$ to lie in $BW_n$.

Now,if the $\gcd$ is bigger than 1, then we can divide by the $\gcd$ to get a smaller vector in $BW_n$.

For the other direction, suppose that the $\gcd$ is 1, and yet the some other $c\ket{\phi}$ is proportionally minimal. Since $\ket{\phi}\in BW_n$, we have that $c\in \mathbb Q(i)$. We can find coprime elements $p,q\in \mathbb Z[i]$ such that $\frac p q=c$. Then, since $c\ket{\phi}\in BW_n$, $\ket{\phi}\in BW_n$, we have 

\[\forall a,b\in \mathbb Z[i]: ac\ket{\phi}+b\ket{\phi}\in BW_n\]
\[\implies\forall a,b\in \mathbb Z[i]: \frac{ap+bq}{q}\ket{\phi}\in BW_n\]
\[\implies \frac {\gcd(p,q)}{q}\ket {\phi}=\frac 1 q \ket{\phi}\in BW_n \]
Here, we used Bezout's lemma for $\gcd$.
However, this implies that each $(1-i)^n\braket{\tilde{\mathbf x}|\phi}$ is divisible by $q$, which contradicts the assumption that the $\gcd$ is 1. This suffices for the result.
\end{proof}
\begin{theorem}\label{propertiesofN}
~ The following properties hold for $\N$ and $\N_{\delta}$:
\begin{enumerate}[(i)]
\item $\N$ and $\N_{\delta}$ is invariant under the action of the Clifford group (Also see \cite{kliuchnikov2024stabilizer}).
\item $\N(\k\phi)\geq1$, with equality only for stabilizer states.
\item When $\N(\ket\phi),\N(\ket{\psi})<\infty$, we have $\N(\k{\phi}\otimes{\k\psi})=\N(\k\phi)\N(\k\psi)$. 

In particular, $\N(\k\phi^{\otimes n})=\N(\k\phi)^{n}$
\item For all non-stabilizer $ \k\phi$,  $\N(\k\phi^{\otimes n})=2^{\Omega(n)}$.
\item $\N$ is non-increasing under uniform probability $1$-qubit Pauli measurements (post-selecting on any outcome). If it indeed does decrease, it has to divide the original value. 
\item If the measurement does not have have uniform probability, then we have the weaker constraint that $\N(\ket\phi)\geq \mathbb E[\N(\ket{\psi})]$, where $\ket\psi$ is the random post-measurement state. 
\item $\N(\k\phi)$ is either $\infty$ or $\N(\k{\phi})$ is of the form $\frac{c}{{(1+i)}^{n}}$ where $c\in \mathbb{Z}[i]$.
\end{enumerate}
\end{theorem}
\begin{proof}
(i) and (ii) follow directly from the invariance of $BW_n$ under the action of the Clifford group. \cite{kliuchnikov2024stabilizer}.

So we will begin by proving (iii). Let $\k\phi$ be an $n$-qubit state and $\k\psi$ be an $m$-qubit state.

WLOG, suppose $\ket{\phi}$ and $\ket{\psi}$ are proportionally minimal. Clearly, we have that $\N(\k\phi\otimes\k\psi)\leq\N(\k\phi)\cdot \N(\k\psi) $, as $\k\phi\otimes \k\psi\in BW_{m+n}$, since the lattice is closed under tensor products. It remains to show that this vector is proportionally minimal. We do this by the use of Lemma \ref{lem_gcd_prop_min}.  We know that
\[
\text{gcd}_{\mathbf x_1\in \{0,1\}^{n}}
\{(1-i)^{n}\braket{\tilde{\mathbf x_1}|\phi}\}=1\]
\[\text{gcd}_{\mathbf x_2\in \{0,1\}^{m}}
\{(1-i)^{m}\braket{\tilde{\mathbf x_2}|\psi}\}=1
\]
We want to prove that $\text{gcd}_{\mathbf x_1\in \{0,1\}^n, \mathbf x_2\in \{0,1\}^m}\{(1-i)^{n}\braket{\tilde{\mathbf x_1}|\phi}(1-i)^{m}\braket{\tilde{\mathbf x_2}|\psi}\}=1.$

For this suppose $\text{gcd} \neq 1$. Then there is a $p\in \bZ[i]$ irreducible such that \[p~|~(1-i)^{n}\braket{\tilde{\mathbf x}_{1}|\phi}(1-i)^{m}\braket{\tilde{\mathbf x}_{2}|\psi}\quad \forall \mathbf x_1\in \{0,1\}^n,\mathbf x_2\in \{0,1\}^m.\]
 Choose $\mathbf y_{2}\in \{0,1\}^{m}$ such that $p \nmid (1-i)^{m}\braket{\tilde{\mathbf y}_{2}|\psi}$. Now, we know that $p~|~(1-i)^{n}\braket{\tilde{\mathbf x_1}|\phi}(1-i)^{m}\braket{\tilde{\mathbf y_2}|\psi}$ for all $\tilde{x}_{1}\in \{0,1\}^{n}$. Since $p$ is irreducible, we must have that $p$ divides $(1-i)^n\braket{\tilde{\mathbf x}_{1}|\phi} \forall \tilde{\mathbf x_1}$ which contradicts the assumption that $\ket{\phi}$ was minimal. This proves (iii).

 Now, (iv) follows directly from (ii) and (iii). For (v), we use a technique from \cite{kliuchnikov2024stabilizer}. For any state $\ket{\phi}$, let $\ket{\phi'}$ be its proportionally minimal vector. Now, we can write it as, for some vectors $\ket v, \ket w$,
 \begin{equation} 
\ket{\phi'}=\frac{\ket0\ket v+\ket1\ket w}{1+i}\label{eqn_comp_basis_qubit_split}.
\end{equation}
 Clearly, $\ket v$ and $\ket w$ will have coefficients in $\mathbb Q(i)$. We now prove that these are indeed in $BW_{n-1}$. It suffices to prove that $\braket{{\alpha}|v}, \braket{{ \alpha}|w}\in \mathbb Z[i]$ for all elements $\alpha$ from the dual lattice for $BW_{n-1}$. 

 For each such $\alpha$, notice that $(1+i)\ket 0 \ket \alpha$ and $(1+i)\ket 1 \ket \alpha$ both lie in the dual of $BW_n$. Taking their inner products with $\ket{\phi'}$, we get $-i\braket{\alpha|v}$ and $-i\braket{\alpha|w}$ respectively, which are in $\bZ[i]$ proving that $\ket v,\ket w\in BW_{n-1}$.

 Now, if $\ket{\phi}$, on computational basis measurement on the first qubit, had equal probability for both the outcomes, then we have that $\norm{\ket v}_2^2=\norm{\ket w}_2^2=\norm{\ket{\phi'}}_2^2$. From Equation \ref{eqn_comp_basis_qubit_split}, we get that $\ket{v}$ and $\ket{w}$ are proportional to the corresponding post measurement states. So $\N(\k v)=|c|^2\norm{\k v}_{2}^{2}$ for some $c\in \mathbb{Q}[i]$. From this, we get that:
 \[\N(\ket v)\, |\quad \N(\ket{\phi})=\norm{\ket{\phi'}}_2^2=\norm{\ket{v}}_2^2 \]

 and the corresponding result for $\ket w$ as well. Since other Pauli basis measurements can be implemented by using Clifford operations, we have proven (v). For (vi), note that even when the probabilities are not equal, we still have 
 \[\N(\ket v)+\N(\ket w)\leq \norm{\ket v}_2^2+\norm{\ket{w}}_2^2=2\N(\ket{\phi})\]

 To prove (vii) we first recall that $BW^{*}_{n}=(1+i)^nBW_{n}$. This implies that for all  $v\in BW_{n}$, we have $(1+i)^n{v}\in BW^{*}_{n}$, and we get $(1-i)^{n}\braket{v|v} \in \bZ[i]$, which suffices for (vii).
\end{proof}
Note that a similar property to Theorem \ref{propertiesofN} $(v)$ holds for the dyadic monotone \cite{dyadic}.
\subsection{Connection Between $\mathcal{N}_{\delta}$ and Stabilizer Ranks}
We will first start by recalling a basic lemma for $\mathbb{Z}-$lattices:
\begin{lemma}
Let $L$ be a $\mathbb Z$-lattice  generated by $k$ unit vectors $v_1,v_2\dots v_k$ in some $\mathbb R^d$. Let $V$ be the $\mathbb R$ span of these vectors. Then,  
\[
\forall \k\phi \in V~ \exists~ \k \psi \in L ~\text{s.t}~||\k \phi-\k\psi||_{2}\leq \sqrt{k}
\]
\end{lemma}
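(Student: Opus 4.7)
The plan is to prove the lemma by a greedy, coordinate-wise rounding in a Gram--Schmidt basis, i.e.\ essentially Babai's nearest plane algorithm. The naive approach of expanding $\ket\phi = \sum_{i=1}^k \alpha_i v_i$ and rounding each $\alpha_i$ to its nearest integer only yields a bound of order $k$ via the triangle inequality, which is much weaker than $\sqrt k$ unless the $v_i$ happen to be orthogonal. To improve it, I would work in a Gram--Schmidt orthogonalization $\tilde v_1,\dots,\tilde v_k$ of $v_1,\dots,v_k$, so that $\tilde v_i = v_i - \sum_{j<i}\mu_{ij}\tilde v_j$ for suitable $\mu_{ij}\in\mathbb R$, the $\tilde v_i$ are pairwise orthogonal, and $\norm{\tilde v_i}_2\leq\norm{v_i}_2=1$.

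I would then run the following procedure. Initialize a residual $r=\ket\phi$, and for $i=k,k-1,\dots,1$ choose $c_i\in\mathbb Z$ to be the nearest integer to $\langle r,\tilde v_i\rangle/\norm{\tilde v_i}_2^2$ (and take $c_i=0$ if $\tilde v_i=0$), then update $r\leftarrow r-c_iv_i$. Setting $\ket\psi = \sum_{i=1}^k c_iv_i\in L$, the difference $\ket\phi-\ket\psi$ is exactly the final residual. The key invariant is: immediately after step $i$, the component of $r$ along $\tilde v_i$ has absolute value at most $\norm{\tilde v_i}_2/2$, and it is not changed by any later step. The second half of the invariant holds because at a subsequent step $j<i$ one subtracts $c_jv_j$, and $v_j\in\mathrm{span}(\tilde v_1,\dots,\tilde v_j)$ is orthogonal to $\tilde v_i$. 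Summing the squared contributions along the orthogonal directions,
\[
\norm{\ket\phi-\ket\psi}_2^2 \;\leq\; \sum_{i=1}^k \frac{\norm{\tilde v_i}_2^2}{4} \;\leq\; \frac{k}{4},
\]
so $\norm{\ket\phi-\ket\psi}_2\leq \tfrac12\sqrt k\leq \sqrt k$, as required.

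The only subtlety is that $v_1,\dots,v_k$ need not be $\mathbb R$-linearly independent, so some $\tilde v_i$ may vanish. In that case that Gram--Schmidt vector contributes no direction to the residual and the step is simply skipped by taking $c_i=0$; the distance estimate is unaffected. I do not anticipate a serious obstacle here: the argument is entirely self-contained and the bound $\sqrt k$ is in fact loose by a factor of $2$. An equivalent framing, if one prefers an inductive writeup, is to peel off $\tilde v_k$ with a single integer rounding and recurse on the sublattice generated by $v_1,\dots,v_{k-1}$, combining the two squared errors via the orthogonality of $\tilde v_k$ to $\mathrm{span}(v_1,\dots,v_{k-1})$.
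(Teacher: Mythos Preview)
Your proof is correct, but it takes a genuinely different route from the paper. The paper uses a \emph{randomized rounding} argument: expand $\ket\phi=\sum_i c_i v_i$ and independently round each $c_i$ to $\lfloor c_i\rfloor$ or $\lfloor c_i\rfloor+1$ with probabilities $1-\{c_i\}$ and $\{c_i\}$ respectively, so that the rounding error $X_i$ has mean zero. Then in $\mathbb E\bigl[\norm{\sum_i X_i v_i}_2^2\bigr]$ the cross terms $\mathbb E[X_iX_j]\,v_i\cdot v_j$ vanish by independence, leaving $\sum_i \mathbb E[X_i^2]\,\norm{v_i}_2^2\leq k$, and one takes a below-average realization. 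Your approach instead runs Babai's nearest plane algorithm on a Gram--Schmidt basis, which is deterministic and yields the sharper $\norm{\ket\phi-\ket\psi}_2\leq \tfrac12\sqrt{k}$. What the paper's method buys is simplicity: no orthogonalization is needed, and the same one-line expectation computation handles arbitrary (possibly dependent) generators. What your method buys is a constructive, deterministic procedure and a better constant; incidentally, the paper's argument would also reach $k/4$ had it used the sharper bound $\mathbb E[X_i^2]=\{c_i\}(1-\{c_i\})\leq 1/4$.
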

\begin{proof}

Suppose $\ket{\phi}$ has the form

\[
\k \phi=\sum_{i}c_{i}\k v_{i}=\sum_{i}\lfloor{c_{i}}\rfloor v_i+\{c_i\}v_i.
\]

Here the $\{\}$ braces represent the fractional part of $c_{i}$, and $\lfloor{c_{i}}\rfloor$ refers to the greatest integer function. Now, define $k$ pairs of random variables as follows: for each $i\in [k]$, $(A_i,X_i)$ is defined as:

\[ (A_{i},X_{i})=
\begin{cases} 
(\lfloor{c_{i}}\rfloor,\{c_i\}), & \text{wp}~ 1-\{c_{i}\} \\
(\lfloor{c_{i}}+1\rfloor,\{c_i\}-1)\, & \text{wp}~ \{c_i\}
\end{cases}
\]
Here, we have that $A_i$ are $\in \bZ$ , $|X_i|\leq 1$ and $A_i+X_i=c_i$ always. Also, $\mathbb E[X]=0$. We now split $\ket{\phi}$ as follows:
\[
\k \phi=\sum c_i\k v_{i}=\underbrace{\sum^k_{i=1}A_{i}v_{i}}_{\text{random vector on}~L}+\underbrace{\sum_{i}X_{i}v_{i}}_{{w}}
\]

Take the random vector on $L$ as $\ket {\psi}$. It suffices to show that for some choice of $(A_i,X_i)$s, we have that $\norm{\vec w}_2^2\leq k$. This works out by simply taking the expectation:
\[
\mathbb E||\sum_{i}X_{i}v_{i}||^{2}_{2}=\mathbb E[\sum_{i}X_{i}^{2}v_{i}.v_{i}]+\underbrace{E[2\sum_{i<j}X_{i}X_{j}(v_{i}v_{j})]}_{=0}\leq \sum v_i\cdot v_i=k
\]

Since the expectation is at most $k$, there is going to be a random choice for which $\norm{\vec w}_2^2\leq k$. 

\end{proof}
\begin{corollary}
\label{cor_vec_latt_approx}

Let $L$ be a $\mathbb Z[i]$-lattice  generated by $k$ unit vectors $v_1,v_2\dots v_k$ in some $\mathbb C^d$. Let $V$ be the $\mathbb C$ span of these vectors. Then,  
\[
\forall \k\phi \in V~ \exists~ \k \psi \in L ~\text{s.t}~||\k \phi-\k\psi||_{2}\leq \sqrt{2k}
\]

\end{corollary}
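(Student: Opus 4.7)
The plan is to reduce this corollary to the preceding lemma by a standard realification argument. Observe that the $\mathbb Z[i]$-lattice $L$ with generators $v_1, \ldots, v_k$ is identical as a set to the $\mathbb Z$-lattice generated by the $2k$ vectors $v_1, \ldots, v_k, i v_1, \ldots, i v_k$, since any $\mathbb Z[i]$-coefficient $a_j + i b_j$ expands into the $\mathbb Z$-combination $a_j v_j + b_j (i v_j)$. Likewise, the $\mathbb C$-span $V$ equals the $\mathbb R$-span of this enlarged generating set.

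Next, I would identify $\mathbb C^d$ with $\mathbb R^{2d}$ via the canonical real-linear isometry $a + ib \mapsto (a,b)$, which preserves the Euclidean $2$-norm; in particular each $v_j$ and each $i v_j$ remains a unit vector in $\mathbb R^{2d}$. Under this identification, $L$ is a $\mathbb Z$-lattice in $\mathbb R^{2d}$ generated by $2k$ unit vectors spanning the real subspace corresponding to $V$.

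Now I would invoke the preceding $\mathbb Z$-lattice approximation lemma (the one just proved, with $k$ generators giving error at most $\sqrt k$) applied to this enlarged system with $2k$ generators. This immediately yields, for every $\ket\phi \in V$, some $\ket\psi \in L$ with $\|\ket\phi - \ket\psi\|_2 \leq \sqrt{2k}$, which is exactly the desired bound.

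There is no real obstacle here; the only thing to check carefully is that (a) the realification preserves both the lattice structure (as a subset of the ambient real vector space) and the $2$-norm, and (b) unit vectors remain unit vectors after multiplication by $i$, so the hypothesis of the previous lemma is satisfied with parameter $2k$. Once these points are noted, the corollary follows directly.
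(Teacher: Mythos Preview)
Your proposal is correct and follows exactly the same approach as the paper: view the $\mathbb Z[i]$-lattice as a $\mathbb Z$-lattice generated by the $2k$ unit vectors $v_1,iv_1,\ldots,v_k,iv_k$ (identifying $\mathbb C^d$ with $\mathbb R^{2d}$ isometrically), and then apply the preceding lemma with parameter $2k$. The paper's proof is a one-line version of precisely this argument.
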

\begin{proof}
Consider $L$ as a $\bZ$ lattice with generators $v_{1}, iv_{i},v_{2}, iv_{2},...v_{k},iv_{k}$. Since all of these are of unit length, the previous result applies.
\end{proof}
We now give the explicit relation between $\mathcal{N}$ and the approximate rank:
\begin{theorem}
Suppose $\chi_\delta(\ket{\phi})\leq k$. Then, we have that \[\N_{\delta+\delta_0}(\ket {\phi})\leq\frac {2k}{\delta_0^2}\]  
\end{theorem}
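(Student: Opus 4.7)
The plan is to use the approximate stabilizer decomposition of $\ket\phi$ to construct an explicit point of $BW_n$ close to $\alpha\ket\phi$ for a suitable scalar $\alpha$, and then invoke the lattice rounding bound of Corollary~\ref{cor_vec_latt_approx}. The scalar $\alpha$ will be chosen so that, after rescaling, the rounding error contributes at most $\delta_0$ to the $\ket\psi$ in Definition~\ref{def_approx_bw_norm}, leaving room for the original $\delta$-error from the stabilizer approximation.

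Concretely, let $\ket\sigma$ realize $\chi_\delta(\ket\phi)\leq k$: we have $\chi(\ket\sigma)=k$ and $\|\ket\phi-\ket\sigma\|_2\leq\delta$, so we may write $\ket\sigma=\sum_{j=1}^{k}c_j\ket{s_j}$ for stabilizer states $\ket{s_j}$. Because every stabilizer state is proportional (by a unit complex number) to a vector of $BW_n$, we may, after absorbing these phases into the coefficients $c_j$, assume that each $\ket{s_j}$ is a unit vector lying in $BW_n$. Let $L$ be the $\mathbb Z[i]$-sublattice of $BW_n$ generated by $\ket{s_1},\dots,\ket{s_k}$; by construction these $k$ generators are unit vectors, so Corollary~\ref{cor_vec_latt_approx} applies to $L$.

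Now fix the scaling $\alpha=\sqrt{2k}/\delta_0$. The vector $\alpha\ket\sigma=\sum_j(\alpha c_j)\ket{s_j}$ lies in the $\mathbb C$-span of the generators of $L$, so by Corollary~\ref{cor_vec_latt_approx} there exists $\ket\tau\in L\subseteq BW_n$ with $\|\alpha\ket\sigma-\ket\tau\|_2\leq\sqrt{2k}$. Define $\ket\psi:=\alpha^{-1}\ket\tau-\ket\phi$, so that $\alpha(\ket\phi+\ket\psi)=\ket\tau\in BW_n$. The triangle inequality gives
\[
\|\ket\psi\|_2\;\leq\;\|\alpha^{-1}\ket\tau-\ket\sigma\|_2+\|\ket\sigma-\ket\phi\|_2\;\leq\;\frac{\sqrt{2k}}{\alpha}+\delta\;=\;\delta_0+\delta,
\]
so $(\alpha,\ket\psi)$ is admissible in Definition~\ref{def_approx_bw_norm} at tolerance $\delta+\delta_0$, yielding $\N_{\delta+\delta_0}(\ket\phi)\leq\|\alpha\ket\phi\|_2^2=\alpha^2=2k/\delta_0^2$.

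The only mild subtlety is the phase adjustment: a stabilizer state is only in $BW_n$ up to a global phase (which need not lie in $\mathbb Z[i]$). This causes no trouble because we only need \emph{some} lattice inside $BW_n$ containing a good approximation to $\ket\sigma$, and the rounding corollary applies to any $\mathbb Z[i]$-lattice generated by unit vectors. Everything else is a direct application of Corollary~\ref{cor_vec_latt_approx} followed by a triangle-inequality split.
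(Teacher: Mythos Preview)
Your proof is correct and follows essentially the same approach as the paper: both apply Corollary~\ref{cor_vec_latt_approx} to the sublattice of $BW_n$ generated by the stabilizer states in a rank-$k$ decomposition, with the same scaling $\alpha=\sqrt{2k}/\delta_0$. The only cosmetic difference is that the paper first reduces to the case $\delta=0$ and then implicitly uses the triangle inequality, whereas you carry the intermediate state $\ket\sigma$ through and make the triangle-inequality split explicit.
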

\begin{proof}
It suffices to prove the result for $\delta=0$. Let $\ket{\phi}=\sum_{i=1}^{k}c_i\ket{s_i}$, where $\ket{s_i}$ are stabilizer states lying on $BW_n$. Choose an arbitrary $c>0$. By Corollary \ref{cor_vec_latt_approx}, we know that there is a vector $\ket\psi$ on the $\mathbb Z[i]$ lattice generated by the $\ket{s_i}$s, which is $\sqrt{2k}$ close to $c\ket{\phi}$. Clearly, this lattice is a sublattice of $BW_n$, so we have a vector on $\ket{\psi}$ on the $BW_n$ such that:

    \[\norm{c\ket{\phi}-\ket{\psi}}_2^2\leq 2k \]
    \[\implies \norm{\ket\phi-\frac 1 c\ket{\psi}}_2^2\leq \frac {2k}{c^2}\]
    \[\implies \N_{\frac{\sqrt{2k}}{c}}(\ket{\phi})\leq {c^2}\]

    Fix $c=\sqrt{{2k}}/\delta_0$, and we get $\N_{\delta_0}\leq \frac{2k}{\delta_0^2}$ 
    
\end{proof}

\subsection{Relating $\N$ and $CS$-count for exact state preparation}

Consider the magic state $\ket{CS}=\frac 12\left({\ket {00}+\ket{01}+\ket{10}+i\ket{11}}\right)$. Using this state, we can apply the $CS$ gate with the protocol given in Figure \ref{fig_CS_magic} (Also see \cite{dyadic}). There, the label ``Clifford" refers to a Clifford circuit chosen based on the measurement outcome of the measured qubits. It is easy to see that for the measurement outcome $\ket{00}$, choosing the Clifford circuit to just be $I$ works, and this suffices for our use. Note that in this protocol, the measurements have uniform probability.
\begin{figure}
    \centering
\begin{tikzpicture}[
    line width=1.1pt,
    wire/.style={line width=1.1pt},
    leftbox/.style={
        draw,
        fill=white,
        line width=1.1pt,
        minimum width=0.55cm,   
        minimum height=1.60cm   
    },
    bigbox/.style={
        draw,
        fill=white,
        line width=1.1pt,
        minimum width=1.55cm,
        minimum height=1.55cm
    },
    meas/.style={
        draw,
        fill=white,
        line width=1.1pt,
        minimum width=0.65cm,
        minimum height=0.55cm
    },
    ctrl/.style={circle, fill=black, inner sep=0pt, minimum size=5pt},
    targ/.style={circle, draw, line width=1.1pt, inner sep=0pt, minimum size=17pt},
    every node/.style={font=\Large}
]

\node[anchor=east] at (-0.55,-0.5) {$|\psi\rangle$};

\draw[decorate, decoration={brace, amplitude=6pt}]
    (-0.30,-1.12) -- (-0.30,0.12);

\node[leftbox] (CSleft) at (1.45,-0.5) {CS};

\draw[wire] (0.05,0) -- (CSleft.west |- 0,0);
\draw[wire] (0.05,-1) -- (CSleft.west |- 0,-1);

\draw[wire] (CSleft.east |- 0,0) -- (2.35,0);
\draw[wire] (CSleft.east |- 0,-1) -- (2.35,-1);

\node at (2.95,-0.5) {$=$};

\def\xstart{5.35}
\def\xend{10.65}

\def\ytop{1.35}
\def\ysecond{0.55}
\def\ythird{-0.75}
\def\yfourth{-1.55}

\node[anchor=east] at (4.55,0.95) {$|\psi\rangle$};
\draw[decorate, decoration={brace, amplitude=6pt}]
    (4.80,0.35) -- (4.80,1.55);

\node[anchor=east] at (4.55,-1.15) {$|CS\rangle$};
\draw[decorate, decoration={brace, amplitude=6pt}]
    (4.80,-1.75) -- (4.80,-0.55);

\node[bigbox] (cliff) at (8.70,0.95) {Clifford};

\node[meas] (m1) at (8.70,\ythird) {};
\node[meas] (m2) at (8.70,\yfourth) {};

\draw[wire] (\xstart,\ytop) -- (cliff.west |- 0,\ytop);
\draw[wire] (cliff.east |- 0,\ytop) -- (\xend,\ytop);

\draw[wire] (\xstart,\ysecond) -- (cliff.west |- 0,\ysecond);
\draw[wire] (cliff.east |- 0,\ysecond) -- (\xend,\ysecond);

\draw[wire] (\xstart,\ythird) -- (m1.west);
\draw[wire] (m1.east) -- (\xend,\ythird);

\draw[wire] (\xstart,\yfourth) -- (m2.west);
\draw[wire] (m2.east) -- (\xend,\yfourth);

\node[ctrl] at (6.10,\ytop) {};
\draw[wire] (6.10,\ytop) -- (6.10,\ythird);
\node[targ] at (6.10,\ythird) {};
\draw[wire] (6.10,\ythird-0.16) -- (6.10,\ythird+0.16);
\draw[wire] (6.10-0.16,\ythird) -- (6.10+0.16,\ythird);

\node[ctrl] at (7.05,\ysecond) {};
\draw[wire] (7.05,\ysecond) -- (7.05,\yfourth);
\node[targ] at (7.05,\yfourth) {};
\draw[wire] (7.05,\yfourth-0.16) -- (7.05,\yfourth+0.16);
\draw[wire] (7.05-0.16,\yfourth) -- (7.05+0.16,\yfourth);

\draw[wire]
    (8.48,\ythird-0.08)
    .. controls (8.58,\ythird+0.10) and (8.77,\ythird+0.10)
    .. (8.92,\ythird-0.08);
\draw[->, line width=1pt]
    (8.73,\ythird+0.02) -- (8.92,\ythird+0.24);

\draw[wire]
    (8.48,\yfourth-0.08)
    .. controls (8.58,\yfourth+0.10) and (8.77,\yfourth+0.10)
    .. (8.92,\yfourth-0.08);
\draw[->, line width=1pt]
    (8.73,\yfourth+0.02) -- (8.92,\yfourth+0.24);

\draw[wire] (8.63,0.175) -- (8.63,\ythird+0.275);
\draw[wire] (8.77,0.175) -- (8.77,\ythird+0.275);

\draw[wire] (8.63,\ythird-0.275) -- (8.63,\yfourth+0.275);
\draw[wire] (8.77,\ythird-0.275) -- (8.77,\yfourth+0.275);
\end{tikzpicture}
\caption{$CS$-gate injection using the magic state $\k{CS}$. A similar protocol is given in \cite{dyadic}.}
\label{fig_CS_magic}
\end{figure}

Also, it is easy to verify that $\N(\ket{CS})=2$.
\begin{theorem}
Let $\ket{\phi}$ be an $n$-qubit state which can be prepared exactly with $m$ CS gates and Clifford operations without any intermediate measurement then:
\[
\mathcal{N}(\k\psi)\,{\big |}2^m
\]
If it does need intermediate measurements then we have the weaker result of \[\mathcal{N}(\k\psi)\leq 2^m \]
\end{theorem}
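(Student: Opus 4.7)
The plan is to convert the preparation circuit for $\ket\phi$, which uses Clifford gates plus $m$ applications of $CS$, into an equivalent circuit built only from Cliffords and $CS$ magic-state injection, and then to track $\N$ along the rewritten circuit using the properties collected in Theorem \ref{propertiesofN}. Using the gadget of Figure \ref{fig_CS_magic}, I would replace each of the $m$ $CS$ gates by a sub-circuit that consumes a fresh copy of $\ket{CS}$, applies Cliffords, performs a two-qubit Pauli measurement on the magic-state ancilla whose four outcomes are equiprobable, and applies a Clifford correction conditioned on the outcome, so that the net effect on the data qubits is precisely a $CS$.

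Starting from $\ket{\psi_0} = \ket 0^{\otimes n} \otimes \ket{CS}^{\otimes m}$, the tensor product property (item (iii)) together with $\N(\ket 0)=1$ and $\N(\ket{CS})=2$ gives $\N(\ket{\psi_0}) = 2^m$. Next I would evolve the state through the rewritten circuit. Every Clifford, including each outcome-dependent correction, preserves $\N$ by item (i). Every uniform-probability one-qubit Pauli measurement replaces $\N$ by a divisor of itself by item (v); the four-outcome uniform two-qubit Pauli measurement at the end of each injection gadget factors as two one-qubit Pauli measurements, each remaining uniform conditioned on the previous outcome, so item (v) can be applied to each. At the end of the rewritten circuit, the $n$ data qubits carry $\ket\phi$ and the $2m$ ancilla qubits lie in a joint Pauli eigenstate, which is a stabilizer state and hence has $\N = 1$; by item (iii) the total final state has the same $\N$ as $\ket\phi$. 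Chaining the divisibilities gives $\N(\ket\phi) \, | \, 2^m$, proving the first claim.

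For the weaker bound when intermediate measurements are present in the original preparation, the user-chosen measurements need not have uniform probability, so item (v) is not available. Item (vi) still supplies $\N(\ket{\mathrm{pre}}) \geq \mathbb E[\N(\ket{\mathrm{post}})]$; carrying this through the rewritten circuit in place of divisibility degrades the conclusion from $\N(\ket\phi)\,|\,2^m$ to $\N(\ket\phi) \leq 2^m$.

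The main obstacle I anticipate is justifying the factorization claim: I need to confirm that the two-qubit measurement inside each $\ket{CS}$ injection gadget, composed with the preceding entangling Cliffords on an arbitrary incoming data state, genuinely factors as two one-qubit Pauli measurements that are each uniform conditioned on the preceding outcome. This reduces to checking that the reduced two-qubit density matrix of the ancilla just before measurement is maximally mixed, which should follow from $\ket{CS}$ having all four computational-basis amplitudes of equal modulus together with the structure of the $CNOT$s wiring the gadget to the data, but it must be verified carefully in order for each application of item (v) to be legitimate.
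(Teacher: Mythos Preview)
Your proposal is correct and follows essentially the same strategy as the paper: start from $\ket{CS}^{\otimes m}\otimes\ket 0^{\otimes n}$, replace each $CS$ by the magic-state injection gadget, and track $\N$ through the circuit via properties (i), (iii), (v), (vi) of Theorem~\ref{propertiesofN}. The uniformity concern you flag is exactly what the paper handles by the assertion, placed just before the theorem, that ``in this protocol, the measurements have uniform probability''; the paper then simply postselects on the all-$\ket 0$ outcome and invokes (v), leaving the per-qubit factorization and conditional-uniformity check you describe implicit rather than spelled out.
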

\begin{proof}

For the case when $\ket{\phi}$ does not require intermediate measurements,
we can also create $\ket{\phi}$ with $\k{CS}^{\otimes {m}}\otimes{\k{0}^{n}}$, we can create $\ket{\phi}$ exactly with Clifford gates, and the magic state protocol in Figure \ref{fig_CS_magic}, postselecting on all measurement outcomes being $\ket{0}$. By property (v) of Theorem \ref{thm_magic_props}, the result follows.

If there are indeed other measurements performed, which are possibly non-uniform, then the weaker result follows from property (vi) of Theorem \ref{thm_magic_props}.
\end{proof}

\section{Fidelity Amplification}
\label{sec_erroramplification}

In this section, we prove Theorem \ref{thm_error_amp}.

Define $\ket{H^\perp}=\sin\left(\frac \pi 8\right)\ket 0- \cos\left(\frac \pi 8\right)\ket 1$. Then, we know that $\ket {H}$ and $\ket{H^\perp}$ form an orthonormal basis for $\mathbb C^2$, and that $H\ket H=\ket H$, and $H\ket{H^\perp}=-\ket{H^\perp}$. We use this to ``cancel" out components orthogonal to $\ket{H}^{\otimes n}$ in any of its approximate stabilizer decomposition. First, we define a different notion of error that is easier to work with when states are not normalized. 

\begin{definition}
    Define relative error of $\ket{\phi'}$ for state $\ket{\phi}$ as \begin{equation}
    \frac{\Tr[(I-\ketbra \phi\phi )\,\ketbra{\phi'}{\phi'}]}{\Tr[\ketbra\phi\phi  \,\ketbra{\phi'}{\phi'}]}\label{eqn_error}
\end{equation}
\end{definition}
This can very easily be related to fidelity when $\norm{\ket{\phi'}}_2=1$, and the expression turns out to be $\frac{1-F(\ket\phi,\ket{\phi'})}{F(\ket\phi,\ket{\phi'})}$.

\begin{lemma}\label{lem_fid_amp}
    Given a rank $k$ decomposition of $\ket {H}^{\otimes n}$ with relative error $\epsilon$, there is a rank $2k$ decomposition with relative error $\frac \epsilon 2$.
\end{lemma}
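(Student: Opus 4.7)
The approach is to exploit the Hadamard symmetry of $\ket H^{\otimes n}$. Since $H\ket H=\ket H$, for any subset $S\subseteq[n]$ the operator $H_S:=\bigotimes_{i\in S}H_i$ fixes $\ket H^{\otimes n}$. Writing the given decomposition as $\ket{\phi'}=c\,\ket H^{\otimes n}+\ket{\phi^\perp}$ with $\braket{H^{\otimes n}|\phi^\perp}=0$ (so that $\epsilon=\norm{\phi^\perp}_2^2/|c|^2$), I would form the candidate
\[
\ket{\phi''_S}:=\tfrac12\bigl(\ket{\phi'}+H_S\ket{\phi'}\bigr).
\]
Because Hadamard is Clifford, $H_S$ maps every stabilizer state to a stabilizer state, so $\ket{\phi''_S}$ has stabilizer rank at most $2k$. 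Also, since $H_S$ fixes $\ket H^{\otimes n}$, the coefficient of $\ket H^{\otimes n}$ in $\ket{\phi''_S}$ is still $c$, while the new orthogonal part is $\tfrac12(I+H_S)\ket{\phi^\perp}$. So the task reduces to locating a single $S$ for which $\norm{\tfrac12(I+H_S)\ket{\phi^\perp}}_2^2\le\tfrac12\norm{\phi^\perp}_2^2$.

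\textbf{Choosing $S$.} To do this, expand $\ket{\phi^\perp}=\sum_{s\neq H^n}\beta_s\ket s$ in the Hadamard eigenbasis $\{\ket H,\ket{H^\perp}\}^{\otimes n}$ (the $s=H^n$ coefficient is $0$ by orthogonality). For each basis vector $\ket s$, let $T_s\subseteq[n]$ denote the positions where $s_i=\ket{H^\perp}$. Then $H_S\ket s=(-1)^{|S\cap T_s|}\ket s$, and the projector $\tfrac12(I+H_S)$ preserves $\ket s$ exactly when $|S\cap T_s|$ is even and annihilates it otherwise. Consequently
\[
\bigl\lVert\tfrac12(I+H_S)\ket{\phi^\perp}\bigr\rVert_2^2=\sum_{s\neq H^n,\ |S\cap T_s|\text{ even}}|\beta_s|^2.
\]

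\textbf{Averaging.} The key observation is that for each $s\neq H^n$ we have $T_s\neq\emptyset$, so if $S$ is drawn uniformly from $2^{[n]}$, then the parity $|S\cap T_s|\bmod 2$ is uniform on $\{0,1\}$. Hence the expected new orthogonal norm-squared is exactly $\tfrac12\sum_{s\neq H^n}|\beta_s|^2=\tfrac12\norm{\phi^\perp}_2^2$, so some particular $S\subseteq[n]$ witnesses new orthogonal error $\le\tfrac12\norm{\phi^\perp}_2^2$. For that $S$, the new relative error is at most $\tfrac12\norm{\phi^\perp}_2^2/|c|^2=\epsilon/2$, while the rank remains $\le 2k$, which yields the claim.

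\textbf{Where the obstacle lies.} The first thing one might try — averaging only with the global $H^{\otimes n}$ — merely kills the component of $\ket{\phi^\perp}$ in the $-1$ eigenspace of $H^{\otimes n}$, which could be arbitrarily small, giving no universal factor-$2$ reduction. Similarly, averaging with a single $H_i$ kills only the part of $\ket{\phi^\perp}$ with $\ket{H^\perp}$ on qubit $i$, and one checks that for $n\ge 3$ no fixed choice of $i$ need reduce the error by a factor $2$. The probabilistic argument over all $2^n$ subsets sidesteps this: every nonzero component of $\ket{\phi^\perp}$ is eliminated with probability exactly $1/2$, independently of how the error is distributed.
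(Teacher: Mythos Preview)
Your proof is correct and is essentially identical to the paper's: both average the given decomposition with $H_S\ket{\phi'}$ for a uniformly random subset $S\subseteq[n]$ (the paper calls this a ``uniformly random tensor product of $H$ and $I$''), expand the orthogonal part in the $\{\ket H,\ket{H^\perp}\}^{\otimes n}$ basis, and use that each nonzero component survives with probability exactly $1/2$ to conclude by expectation. Your added discussion of why a single fixed $H_i$ or the global $H^{\otimes n}$ fails is a nice clarification not present in the paper, but the core argument is the same.
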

\begin{proof}

For simplicity, we rewrite the form for relative error of any vector $\ket{\psi}$ with $\ket{H}^{\otimes n}$:
\begin{equation}
    \frac{\Tr[(I-(\ketbra HH)^{\otimes n} )\,\ketbra{\psi}{\psi}]}{\Tr[(\ketbra HH)^{\otimes n}  \,\ketbra{\psi}{\psi}]}\label{eqn_error_Htensorn}
\end{equation}
    We have a vector $\ket {\psi'}$ with relative error $\epsilon$ with respect to $\ket{H}^{\otimes{n}}$ with exact stabilizer rank $k$. Since relative error is invariant under scaling, without loss of generality $\ket{\psi'}$ can be taken to be of the following form:
     \[ \ket {\psi'} =  \ket {H}^{\otimes n} +\ket \phi  \]
 with $\Tr[\ketbra \phi \phi]=\epsilon$, and $\bra \phi H^{\otimes n}\rangle=0$. Note that $\ket{\psi'}$ need not be a normalized vector which allows us to get this decomposition, and plugging $\ket{\psi'}$ in place of $\ket{\psi}$ in Equation \ref{eqn_error_Htensorn} gives $\Tr[\ketbra \phi \phi]=\epsilon$.
 \\

Let $C$ denote a uniformly random chosen tensor product of $H$ and $I$. Then, we also have
\[ C\ket {\psi'} =  \ket {H}^{\otimes n} +C\ket \phi  ,\]
 with $C\ket{\phi}$ also orthogonal to $\ket{H}^{\otimes n}$. Thus,
\[ \frac{I+C}{2}\ket {\psi'} =  \ket {H}^{\otimes n} +\frac {I+C}{2}\ket \phi ,\]

with $\frac {I+C}{2}\ket \phi$ orthogonal to $\ket H^{\otimes n}$. Define the $LHS$ above to be $\ket{\psi''}$. Note that $\chi(\ket{\psi''})\leq 2\chi(\ket {\psi'})\leq 2k$.
Also, note that $\Tr[\ketbra{\psi'}{\psi'}\, (\ketbra H H)^{\otimes n}]=\Tr[\ketbra{\psi''}{\psi''}\, (\ketbra H H)^{\otimes n}]$. Thus, the denominator of Equation \ref{eqn_error_Htensorn} is the same for both $\ket{\psi'}$ and $\ket{\psi''}$.

Now, what we expect to happen is that $\frac{I+C}2 \ket {\phi}$ will have a lot of cancellations if written out in the $\ket H, \ket {H^{\perp}}$ basis. This is because, for each such basis element, except $\ket H ^{\otimes n}$ (which by orthogonality has coefficient $0$ in the expansion of $\ket \phi$), it is an eigenvector of $\frac {I+C}2$, with the corresponding eigenvalue being $1$ or $0$ with equal probability. We also know that for any vector $\ket x$, we have that $\braket{x|x}$ is the sum of squares of the coefficients of these basis elements. Thus, we have that for $ \ket {\phi'}=\frac {I+C}2 \ket \phi$, $\mathbb E[\braket{\phi'|\phi'}]=\frac 1 2 \braket {\phi|\phi}=\frac \epsilon 2$, where the randomness is over the choices of $C$. Therefore, there exists a $C$ for which $\braket{\phi'|\phi'}$ is at most $\frac \epsilon 2$.
This gives us our result.
\end{proof}
Repeating this $\log \alpha$ times, we get the following theorem.
\begin{theorem}
    Given a stabilizer decomposition of $\ket {H}^{\otimes n}$ with relative error $\epsilon$ and rank $k$, there exists another stabilizer decomposition with rank $O(\alpha k)$ and relative error $\frac {\epsilon}{\alpha}$. 
\end{theorem}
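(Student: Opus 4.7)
The plan is a straightforward induction, using Lemma \ref{lem_fid_amp} as the inductive step. Start from the given rank-$k$ decomposition of $\ket{H}^{\otimes n}$ with relative error $\epsilon_0 = \epsilon$, and apply Lemma \ref{lem_fid_amp} once to obtain a rank-$2k$ decomposition with relative error $\epsilon_0/2$. Since this output is again a stabilizer decomposition of $\ket{H}^{\otimes n}$, it can be fed back into the lemma to yield a rank-$4k$ decomposition with relative error $\epsilon_0/4$, and so on. After $t$ iterations, we obtain a decomposition of rank at most $2^t k$ and relative error at most $\epsilon / 2^t$.

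To achieve relative error $\epsilon/\alpha$, choose $t = \lceil \log_2 \alpha \rceil$. Then the final relative error is at most $\epsilon/2^t \leq \epsilon/\alpha$, and the final rank is at most $2^t k \leq 2\alpha k = O(\alpha k)$, as desired.

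The one point worth verifying is that the output of Lemma \ref{lem_fid_amp} is genuinely a stabilizer decomposition whose rank doubles at each step (so the recursion is legal). This follows immediately from the construction there: the amplified vector is $\ket{\psi''} = \tfrac12 \ket{\psi'} + \tfrac12 C\ket{\psi'}$ for some fixed tensor product $C$ of single-qubit $H$ and $I$ gates, which is a Clifford. Applying $C$ to each stabilizer state in the rank-$k$ decomposition of $\ket{\psi'}$ gives a rank-$k$ decomposition of $C\ket{\psi'}$, and concatenating the two produces a rank-$2k$ decomposition of $\ket{\psi''}$, matching the hypotheses needed for the next call to Lemma \ref{lem_fid_amp}. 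No real obstacle is expected; all the analytic content is in Lemma \ref{lem_fid_amp}, and the only bookkeeping here is to absorb the factor $2^{\lceil \log_2 \alpha\rceil} \leq 2\alpha$ into the $O(\alpha k)$ bound.
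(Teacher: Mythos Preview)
Your proposal is correct and matches the paper's approach exactly: the paper simply says ``Repeating this $\log \alpha$ times, we get the following theorem,'' and your write-up is a careful spelling-out of that iteration, including the bookkeeping for $t=\lceil \log_2 \alpha\rceil$ and the verification that each application of Lemma \ref{lem_fid_amp} yields a valid input for the next.
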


The original motive to find such a result was to assume existence of low rank decompositions with polynomially small error, given one for constant error. However, this also gives us a way to compose low rank approximate stabilizer decompositions together.
\begin{corollary}
    Given a $\delta$ relative error rank $k$ stabilizer decomposition for $\ket {H}^{\otimes m}$, there is a rank $  O((k(1+\delta))^n)$ constant relative error stabilizer decomposition for $\ket{H}^{\otimes mn}$.
\end{corollary}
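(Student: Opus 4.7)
The plan is to chain together two ingredients: (a) tensor powers of an approximate stabilizer decomposition, for which the relative error grows in a controlled multiplicative way, followed by (b) the Fidelity Amplification result (Theorem \ref{thm_error_amp}) to bring the resulting large relative error down to a constant while paying only a proportional blowup in rank.

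Concretely, let $\ket{\psi'}$ be the given rank-$k$ approximation of $\ket H^{\otimes m}$ with relative error $\delta$. After rescaling (which affects neither the rank nor the relative error, since the latter is scale invariant by inspection of Equation \ref{eqn_error}), we may assume $\braket{H^{\otimes m}|\psi'} = 1$, so that $\ket{\psi'} = \ket H^{\otimes m} + \ket\phi$ with $\ket\phi \perp \ket H^{\otimes m}$ and $\braket{\phi|\phi} = \delta$. Consider the tensor power $\ket{\Psi} := (\ket{\psi'})^{\otimes n}$. A stabilizer decomposition of $\ket\Psi$ is obtained by distributing the tensor product across the sums; this gives rank at most $k^n$ and produces a decomposition of $\ket H^{\otimes mn}$ by the same argument as in Lemma \ref{lem_fid_amp} (i.e., the target state is one of the expansion terms).

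Next I would compute the relative error of $\ket\Psi$ with respect to $\ket H^{\otimes mn}$ directly. Since $\braket{H^{\otimes mn}|\Psi} = \braket{H^{\otimes m}|\psi'}^n = 1$ and $\braket{\Psi|\Psi} = \braket{\psi'|\psi'}^n = (1+\delta)^n$, the numerator of Equation \ref{eqn_error_Htensorn} becomes
\[
\braket{\Psi|\Psi} - |\braket{H^{\otimes mn}|\Psi}|^2 = (1+\delta)^n - 1,
\]
while the denominator is $1$. Hence $\ket\Psi$ is a rank $k^n$ decomposition of $\ket H^{\otimes mn}$ with relative error exactly $\epsilon_0 := (1+\delta)^n - 1$.

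Finally I would feed this decomposition into Theorem \ref{thm_error_amp} with amplification factor $\alpha = \epsilon_0 = (1+\delta)^n - 1$. This yields a stabilizer decomposition of $\ket H^{\otimes mn}$ of rank
\[
O(\alpha \cdot k^n) = O\bigl(((1+\delta)^n - 1)\, k^n\bigr) = O\bigl((k(1+\delta))^n\bigr)
\]
and relative error $\epsilon_0 / \alpha = 1$, which is a constant, as desired. The only nontrivial point is really step two — verifying that the relative error of a tensor power has the clean closed form $(1+\delta)^n - 1$ — but once the normalization $\braket{H^{\otimes m}|\psi'}=1$ is made, this is a one-line computation, so I do not expect any genuine obstacle; the result follows essentially by combining tensoring with Theorem \ref{thm_error_amp}.
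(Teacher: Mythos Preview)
Your proposal is correct and follows essentially the same route as the paper: take the $n$-fold tensor power to get a rank-$k^n$ decomposition with relative error $(1+\delta)^n-1$, then apply Theorem~\ref{thm_error_amp} with $\alpha=(1+\delta)^n-1$. The only cosmetic difference is that the paper justifies the relative-error formula by noting that for normalized states the numerator and denominator of Equation~\ref{eqn_error} sum to $1$ and the denominator is multiplicative under tensor products, whereas you compute it directly after rescaling so that $\braket{H^{\otimes m}|\psi'}=1$; both arguments arrive at the same value.
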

\begin{proof}
    Suppose $\ket{\phi}$ has relative error $\delta$ with $\ket{H}^{\otimes m}$ and has stabilizer rank $k$. 

    Then, $\ket{\phi}^{\otimes n}$ has stabilizer rank $k^n$, and the relative error with respect to $\ket H ^{\otimes m n}$ is $(1+\delta)^n-1$. This is because, in Equation \ref{eqn_error}, for normalised states, the numerator and denominator sum to 1, and the denominator is multiplicative with tensor products of the states. Then, we get the result by applying Theorem \ref{thm_error_amp} with $\alpha=(1+\delta)^n-1$
\end{proof}

Note that if we apply the above result to $\ket{0}^{\otimes n}$, we retrieve back the $O(2^{0.23n})$ rank approximate stabilizer decomposition of $\ket{H}^{\otimes n}$ from \cite{Bravyi_2016}. In fact, our technique can be thought of as generalization for their construction.

\subsection{Relation to Lattice Approximations}\label{sec_latticeapprox}

Here, we discuss the relation of $BW_n$ lattice approximation of quantum states, where we say that a lattice vector $v$ on $BW_n$ approximates $\ket{\psi}$ if for some $c\in \mathbb C^*$, $\norm{cv-\ket{\psi}}_2$ is small. This is consistent with the notion of approximation used in Definition \ref{def_approx_bw_norm}.

In the proof of Lemma \ref{lem_fid_amp}, we are summing up $I \ket {\psi'}$ and $C\ket{\psi'}$ where $C$ is a uniformly random $n$-fold tensor product of $H$ and $I$.\footnote{We can ignore the factor of 1/2 as we effectively ignore scalar multiplication while computing the Barnes wall norm $\mathcal N$} Now, by Fact $\ref{fact_barnes_wall_automorphism}$, we know that whenever $C$ has even number of $H$s, $C$ also maps $BW_n$ to itself . Therefore, in such instances, if $\ket{\psi'}$ lies on $BW_n$, then so does $(I+C)\ket{\psi'}$. Therefore, if we restrict to $C$ having an even number of $H$s, the proof of Lemma \ref{lem_fid_amp} also acts as a way to jump from one lattice approximation of $\ket{H}^{\otimes n}$ to another.

It is easily seen that if $C$ is chosen uniformly at random from the set of $n$-fold tensor products of $H$ and $I$ with an even number of $I$'s, then the probability of cancelling out components orthogonal to $\ket{H}^{\otimes n}$ in the expansion of $\ket{\psi'}$, in the basis of tensor products of $\ket{H},\ket{H^{\perp}}$, is $\frac 1 2$ for each of the components except for the component along $\ket{H^{\perp}}^{\otimes n}$. Because of this final caveat, we can not in general say that the new vector on average has half the relative error with respect to $\ket{H^{\otimes n}}$, what we can say is the the relative error is halved with respect to the state $\ket{\psi'''}$, which is the normalised projection of $\ket{\psi'}$ onto the space spanned by $\{\ket{H}^{\otimes n},\ket{H^{\perp}}^{\otimes n}\}$. Also note that $\ket{\psi'''}$ does not change after sequencial application of this procedure, i.e, the projection of $(I+C)\ket{\psi'}$ on to $\mathrm{span}\{\ket{H}^{\otimes n},\ket{H^{\perp}}^{\otimes n}\}$ is also aligned along $\ket{\psi'''}$. Therefore, for the special case when $\ket{\psi'''}$ is a very good approximation for $\ket{H}^{\otimes n}$, this modified fidelity amplification procedure still works quite well.

To see why this special case is important, notice that for the setup in Corollary \ref{cor_decomp_compose}, if $\ket{\psi}$ is a good approximation for $\ket{H}^{\otimes m}$, then $\ket{\psi}^{\otimes n}$ will naturally have a very small component along $\ket{H^{\perp}}^{\otimes mn}$, relative to the component along $\ket{H}^{\otimes mn}$.

In particular, the component of the lattice vector $\ket{0}^{\otimes n}$ on to $\ket{H^{\perp}}^{\otimes n}$ is exponentially small relative to it's component along $\ket{H}^{\otimes n}$. Thus, $O(2^{0.23n})$-rank approximate decomposition for $\ket H^{\otimes n}$ obtained from Corolllary \ref{cor_decomp_compose} or from \cite{Bravyi_2016} can be specialized as a $BW_n$ vector approximation for $\ket{H}^{\otimes n}$. Moreover, it can be checked that on average, the square of the length of the lattice vector obtained (i.e, its Barnes wall norm $\mathcal N$) scales linearly in the rank. This shows that best known approximations for $\ket{H}^{\otimes n}$ are effectively lattice approximations which asymptotically match the bound in Theorem \ref{thm_monotone_vs_rank}.

\begin{lemma}[Informal]
Restricting the randomized $O(2^{0.23n})$-rank decomposition for $\ket{H}^{\otimes n}$ obtained by applying the Fidelility amplification procedure $0.23n$ times to only lattice approximations leads to lattice vector approximations for $\ket{H}^{\otimes n}$ with average squared-length (i.e, the Barnes wall norm $\mathcal N$) $O(2^{0.23n})$, 
\end{lemma}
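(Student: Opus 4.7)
The plan is to track the random lattice vector $\ket{\psi_k}$ produced by $k = 0.23n$ iterations of the restricted fidelity amplification procedure, started from the stabilizer state $\ket{\psi_0} = \ket{0}^{\otimes n} \in BW_n$. First I would define $\ket{\psi_j} = (I + C_j)\ket{\psi_{j-1}}$, where $C_j$ is drawn uniformly from tensor products of $H$ and $I$ containing an even number of $H$-factors. By Fact \ref{fact_barnes_wall_automorphism} each such $C_j$ preserves $BW_n$, so by induction $\ket{\psi_j} \in BW_n$ and $\chi(\ket{\psi_j}) \le 2^j$. After $k = 0.23n$ steps the rank is $O(2^{0.23n})$, and since $\ket{\psi_k} \in BW_n$, Definition \ref{def_bw_norm} (taking $c = 1$) gives $\N(\ket{\psi_k}) \le \norm{\ket{\psi_k}}_2^2$. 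It therefore suffices to show $\mathbb E\bigl[\norm{\ket{\psi_k}}_2^2\bigr] = O(2^{0.23n})$.

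Next, I would analyze the squared-norm recursion in the orthonormal basis $\{\ket H,\ket{H^\perp}\}^{\otimes n}$. Each $C_j$ is diagonal in this basis with eigenvalue $(-1)^{|S_{C_j} \cap T(b)|}$ on $\ket b$, where $S_{C_j}$ is the set of $H$-positions of $C_j$ and $T(b)$ is the set of $H^\perp$-positions of $b$. A direct character-sum computation (using the generating identity $(1+x)^{n-|T|}(1-x)^{|T|}$ evaluated at $x = \pm 1$) shows that, over even-weight $S_{C_j}$,
\[
\mathbb E\bigl[(-1)^{|S_{C_j} \cap T(b)|}\bigr] \;=\; \begin{cases} 1, & T(b) = \emptyset \text{ or } T(b) = [n], \\ 0, & \text{otherwise}. \end{cases}
\]
Consequently, $(I + C_j)$ doubles the amplitudes on $\ket{H}^{\otimes n}$ and $\ket{H^\perp}^{\otimes n}$ deterministically (quadrupling their squared norms), while on every other basis direction the factor $1 + (-1)^{|S_{C_j} \cap T(b)|}$ equals $0$ or $2$ with equal probability, so the squared norm on that direction only doubles in expectation.

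Writing $\ket{\psi_0} = \cos^n\!\tfrac{\pi}{8}\,\ket{H}^{\otimes n} + \sin^n\!\tfrac{\pi}{8}\,\ket{H^\perp}^{\otimes n} + \ket{\phi^\perp_0}$ with $\norm{\ket{\phi^\perp_0}}_2^2 \le 1$, the recursion iterates to
\[
\mathbb E\bigl[\norm{\ket{\psi_k}}_2^2\bigr] \;=\; 4^k \cos^{2n}\!\tfrac{\pi}{8} \;+\; 4^k \sin^{2n}\!\tfrac{\pi}{8} \;+\; 2^k\,\norm{\ket{\phi^\perp_0}}_2^2.
\]
The choice $k = 0.23n$ is calibrated precisely so that $4^k \cos^{2n}(\pi/8) = 2^{(0.46 + 2\log_2\cos(\pi/8))n} = O(2^{0.23n})$; the middle term is exponentially smaller (since $\sin(\pi/8) < \cos(\pi/8)$), and the third term is at most $2^{0.23n}$. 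Combining these yields $\mathbb E[\N(\ket{\psi_k})] = O(2^{0.23n})$.

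The main obstacle I anticipate is the character-sum computation verifying that restricting $C_j$ to even-weight tensor products does not accidentally preserve basis elements beyond $\ket{H}^{\otimes n}$ and $\ket{H^\perp}^{\otimes n}$; once this is in hand, the squared-norm recursion closes cleanly and the bound follows from the calibration of the exponent $0.23$ against $-\log_2\cos(\pi/8)$.
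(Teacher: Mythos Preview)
Your proposal is correct and takes essentially the same approach as the paper's proof sketch: both decompose the random lattice vector in the $\{\ket H,\ket{H^\perp}\}^{\otimes n}$ basis, observe that the components along $\ket H^{\otimes n}$ and $\ket{H^\perp}^{\otimes n}$ are deterministically doubled while every other component has its squared length doubled only in expectation, and then sum the three contributions with $k=0.23n$. The only cosmetic difference is that the paper bounds the $\ket H^{\otimes n}$-component by counting the $O(2^{0.23n})$ terms in the stabilizer decomposition (each contributing $\cos^n(\pi/8)$), whereas you track the exact recursion $4^k\cos^{2n}(\pi/8)$; and you make the even-weight character-sum computation explicit where the paper leaves it implicit.
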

\begin{proof}[Proof (Sketch)]
    We split up the problem in to the problem of finding the square of lengths of components along the orthonormal basis formed by tensor products of $\ket{H}$ and $\ket{H^{\perp}}$. This suffices as the square of the lengths of the components sum up to the square of the length of the vector itself.
    
    Suppose $\ket{\psi}$ is the random vector on $BW_n$ obtained at the end (after applying 0.23n iterations of the fidelity amplification procedure).

    Now, $\ket{\psi}$ decomposes as sums of tensor products of $\ket{0}$ and $\ket{+}$, with $O(2^{0.23n})$ terms. From this, we can see that the component along $\ket{H}^{\otimes n}$ is $O(2^{0.23n}\cdot \cos(\frac \pi 8))=O(2^{0.115n})$, and therefore the length-squared of the projection along $\ket{H}^{\otimes n}$ is $O(2^{0.23n})$.

    The length-squared of the component along $\ket{H^{\perp}}^{\otimes n}$ is negligible compared to the above. 

    For the component orthogonal to the space spanned by $\ket{H}^{\otimes n}$ and $\ket{H^{\perp}}^{\otimes n}$, we note that a basis for this subspace is the set of tensor products of $\ket{H}$ and $\ket{H^{\perp}}$, with $\ket{H}^{\otimes n}$ and $\ket{H^\perp}^{\otimes n}$ removed.

    For each such basis element, the length squared of the component along $\ket{\psi}$ is quadrupled with probability 1/2, and sent to 0 otherwise. So, on average, the length squared is doubled.

    Hence, overall, the length squared component along the space orthogonal to $\ket{H}^{\otimes n}$ and $\ket{H^{\perp}}^{\otimes n}$ is on average $O(2^{0.23n})$ times the initial value , i.e, the length squared of the component of $\ket{0}^{\otimes n}$ on to this space, which is clearly atmost 1. Therefore, the length squared of the lattice vector $\ket{\psi}$ on average is $O(2^{0.23n})$.

\end{proof}

\section{Exponential Exact Stabilizer Rank for a Product State}\label{sec_product_state_exist}
In this section, we prove the existence of pure states with maximal ranks. The argument for openness and density of these states, i.e, the proof of Theorem \ref{thm_product_state_dense}, is pushed to Section \ref{sec_product_state_dense}, which is a slight tightening of the argument presented in this section. Note that for both the arguments in this section and Section \ref{sec_product_state_dense}, we do not use any properties of the stabilizer states other than that there are only finitely many such states, and the argument below easily extends to decompositions with respect to any finite collection of states. 

Consider the space $\mathbb{C}^{2^{n}}$. Any subspace of it is called a $V_{k}$ space, if it can be written as the space spanned by some $k$ linearly independent stabilizer states $\{\ket{s_{i}}\}_{i=1}^{k}$. A subspace is called a $W_{k}$ subspace if it can be written as a non-zero intersection of $V_k$ subspaces.

Denote the set of $V_k$ subspaces as $\mathcal V_k$ and those of $W_k$ subspaces as $\mathcal W_k$. Note that both $\mathcal W_k$ and $\mathcal V_k$ are finite.
\begin{observation}
    For any $W_k$ subspace $B$, 

    \[B=\bigcap_{A\in \mathcal V_k, B\subseteq A} A\]
\end{observation}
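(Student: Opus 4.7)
The plan is to prove the two inclusions separately, both of which follow almost directly from the definition of a $W_k$ subspace. The observation is essentially saying that the representation of $B$ as an intersection of $V_k$ subspaces can be ``saturated'' by throwing in every $V_k$ subspace that happens to contain $B$, without changing the intersection. The definition of $W_k$ gives us a starting intersection, and the remaining $V_k$ subspaces we add can only shrink the intersection down to something containing $B$, so everything is forced to coincide with $B$.

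First, I would handle the inclusion $B \subseteq \bigcap_{A \in \mathcal{V}_k,\, B \subseteq A} A$. This is immediate from the indexing: every $A$ appearing in the intersection on the right satisfies $B \subseteq A$ by definition of the indexing set, so $B$ is contained in their intersection.

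For the reverse inclusion $\bigcap_{A \in \mathcal{V}_k,\, B \subseteq A} A \subseteq B$, I would invoke the definition of a $W_k$ subspace: there exist $V_k$ subspaces $A_1, \dots, A_m \in \mathcal{V}_k$ with $B = A_1 \cap \cdots \cap A_m$, and by construction each $A_i$ satisfies $B \subseteq A_i$. Hence each $A_i$ is one of the subspaces appearing in the family defining the right-hand side, so
\[
\bigcap_{A \in \mathcal{V}_k,\, B \subseteq A} A \;\subseteq\; A_1 \cap \cdots \cap A_m \;=\; B.
\]
Combining the two inclusions yields the claimed equality.

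There is no real obstacle here; the statement is a purely set-theoretic observation about closure under intersection, and it does not use any structural property of stabilizer states beyond the definitions of $\mathcal{V}_k$ and $\mathcal{W}_k$. The only mild subtlety is remembering that $B \neq \{0\}$ is built into the definition of $W_k$, so the intersection on the right is guaranteed to be non-empty (in fact equal to $B$) and the statement is non-vacuous; this plays no role in the argument itself but is worth flagging so the reader sees that the indexing set $\{A \in \mathcal{V}_k : B \subseteq A\}$ is non-empty.
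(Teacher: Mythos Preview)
Your proof is correct and follows essentially the same approach as the paper: both start from the defining representation $B=\bigcap_{A\in S}A$ for some $S\subseteq\mathcal V_k$, observe that every $A\in S$ satisfies $B\subseteq A$, and conclude that enlarging $S$ to all of $\{A\in\mathcal V_k:B\subseteq A\}$ cannot shrink the intersection below $B$. The only cosmetic difference is that you phrase it as two inclusions while the paper phrases it as adding superfluous terms to the intersection one at a time.
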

\begin{proof}
    We know that 
     \[B=\bigcap_{A\in S} A\]

     for some subcollection $S$ of $\mathcal V_k$ (by definition of a $W_k$ subspace). Now, clearly, for each such $A$, we have $B\subseteq A$. Now, for any $A\notin S$ for which $B\subseteq A$, we can add $A$ to $S$ and the above expression will still be true, as it is equivalent to taking an intersection with $A$ on both sides.

\end{proof}

\begin{observation}
    For any $v\in \mathbb C^{2^n}$, the minimal $W_k$ subspace containing $v$ (ordered by containment) is 

    \[\bigcap_{A\in \mathcal V_k, v\in A} A \]

    Denote this $W_k$ subspace by $\mathcal W_{k,v}$.
\end{observation}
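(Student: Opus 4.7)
The plan is to verify three claims in sequence: that the indexed intersection is a $W_k$ subspace, that it contains $v$, and that it is contained in every $W_k$ subspace containing $v$. Together these show it is the (unique) minimal $W_k$ subspace containing $v$, justifying the notation $\mathcal{W}_{k,v}$ used in the statement.

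For the first two points, I would first assume there exists at least one $V_k$ subspace containing $v$; otherwise no $W_k$ subspace contains $v$ at all and the observation is vacuous. Under this assumption the indexing family $\{A \in \mathcal V_k : v \in A\}$ is non-empty, so $\bigcap_{A \in \mathcal V_k,\, v \in A} A$ is an intersection of $V_k$ subspaces, and it is non-zero since it contains the non-zero vector $v$. By the definition of a $W_k$ subspace, it is therefore a $W_k$ subspace. That $v$ lies in this intersection is immediate from the indexing condition.

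The key step is minimality, and here I would rely directly on the preceding observation. Given any $W_k$ subspace $B$ with $v \in B$, that observation lets us write $B = \bigcap_{A \in \mathcal V_k,\, B \subseteq A} A$. Every $A$ appearing in this expression satisfies $B \subseteq A$ and therefore $v \in A$, so every such $A$ also appears in the family indexing $\mathcal W_{k,v}$. Since an intersection taken over a larger index family yields a smaller (or equal) subspace, we conclude $\mathcal W_{k,v} \subseteq B$, which is exactly minimality.

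The argument has no real technical obstacle beyond the well-definedness caveat above; once the preceding observation is in hand, the whole proof reduces to the elementary set-theoretic fact that intersections are monotonically decreasing in their index families, combined with the characterization of a $W_k$ subspace as the intersection of the $V_k$ subspaces containing it.
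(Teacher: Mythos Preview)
Your proof is correct and follows essentially the same approach as the paper: show the intersection is a $W_k$ subspace containing $v$, then use monotonicity of intersections to get minimality. The only cosmetic difference is that you invoke the preceding observation to write $B = \bigcap_{A \in \mathcal V_k,\, B \subseteq A} A$, whereas the paper argues directly from the definition of a $W_k$ subspace (writing $B = \bigcap_{A \in S} A$ for some $S \subseteq \mathcal V_k$ and noting each $A \in S$ contains $v$); your added caveat about the vacuous case is a reasonable clarification the paper omits.
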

\begin{proof}
    Clearly, the above space is indeed a $W_k$ subspace containing $v$. Call this $B$ Now, any $W_k$ space containing $v$ can be written as 
    \[\bigcap_{A\in S} A \]
    where $S$ is some collection of $V_k$ spaces containing $v$. Written in this form, it is immediate that $B$ is contained in all such $W_k$ spaces.
\end{proof}

\begin{theorem}\label{thm_product_state_exist}
    There is a product state for which the exact stabilizer rank is $2^n$. 
\end{theorem}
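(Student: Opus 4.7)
\begin{proofsketch}
The plan is to argue by contradiction: suppose every product state $\k{\phi_1}\otimes\cdots\otimes\k{\phi_n}$ has stabilizer rank at most $2^n-1$. Then every such product state lies in some $V\in\mathcal V_{2^n-1}$, and since $\mathcal V_{2^n-1}$ is finite, the image of the tensor product map
\[
f:(\mathbb C^2)^n\to \mathbb C^{2^n},\qquad (\k{\phi_1},\ldots,\k{\phi_n})\mapsto \k{\phi_1}\otimes\cdots\otimes\k{\phi_n},
\]
is contained in the finite union $\bigcup_{V\in\mathcal V_{2^n-1}} V$. The goal is to exhibit a single tuple $(\k{\phi_1},\ldots,\k{\phi_n})$ escaping this union, which will yield a product state of rank exactly $2^n$ (since $2^n$ is also an upper bound on the rank).

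For each $V\in\mathcal V_{2^n-1}$, which is a proper subspace of $\mathbb C^{2^n}$, pick a non-zero linear functional $\ell_V\in(\mathbb C^{2^n})^*$ vanishing on $V$. Writing $\k{\phi_i}=\alpha_i\k0+\beta_i\k1$ and $\ell_V=\sum_{x\in\{0,1\}^n} c_x\bra x$, the composition $p_V:=\ell_V\circ f$ equals
\[
p_V(\alpha,\beta)=\sum_{x\in\{0,1\}^n} c_x\prod_{i=1}^n \alpha_i^{1-x_i}\beta_i^{x_i}.
\]
The monomials on the right-hand side are distinct monomials in the $2n$ variables $\alpha_1,\beta_1,\ldots,\alpha_n,\beta_n$, hence linearly independent, so $\ell_V\ne 0$ forces $p_V$ to be a non-zero polynomial. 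Form the finite product $P:=\prod_{V\in\mathcal V_{2^n-1}}p_V$, which is again a non-zero polynomial on $\mathbb C^{2n}$.

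By the elementary fact that a non-zero polynomial on $\mathbb C^N$ does not vanish identically (a straightforward induction on $N$), there exist values of $(\alpha,\beta)$ at which $P\ne 0$, hence at which every $p_V$ is non-zero simultaneously. For such a choice, $\k{\phi_1}\otimes\cdots\otimes\k{\phi_n}\notin V$ for every $V\in\mathcal V_{2^n-1}$, contradicting the assumption. The only real work is the two elementary observations above; no use is made of special properties of stabilizer states beyond finiteness of $\mathcal V_{2^n-1}$, which matches the paper's stated aim of avoiding algebraic geometry. The main obstacle, if any, is purely expository: formulating the ``non-zero polynomial has a non-vanishing point'' step so that it interfaces cleanly with the $V_k/W_k$ language of the preceding observations, and arranging the argument so it can be refined later (in Section \ref{sec_product_state_dense}) to give openness and density rather than just existence, by passing from non-vanishing at a point to non-vanishing on an open dense set.
\end{proofsketch}
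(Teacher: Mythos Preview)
Your argument is correct, but it takes a genuinely different route from the paper. You work directly with polynomials: for each proper subspace $V\in\mathcal V_{2^n-1}$ you produce a non-zero multilinear form $p_V$ on $(\mathbb C^2)^n$, multiply these finitely many forms together, and invoke the elementary fact that a non-zero polynomial on $\mathbb C^{2n}$ does not vanish identically. The paper instead runs a purely topological/metric argument built on the $W_k$ machinery set up in the preceding observations: it picks a product state $\ket\phi$ whose minimal $W_k$-space $\mathcal W_{k,\ket\phi}$ is maximal, shows that a small open ball of product states around $\ket\phi$ must lie inside $\mathcal W_{k,\ket\phi}$, and then proves the separate lemma that any subspace containing an open set of product states is all of $\mathbb C^{2^n}$.

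In terms of what each buys: your approach is shorter and more direct, and it immediately upgrades to openness and density (the non-vanishing locus of $P$ is open with dense complement of the zero set). However, it is essentially an elementary version of the Lovitz--Steffan variety argument that the paper is deliberately trying to avoid; the paper's stated motivation for the $W_k$/metric approach is that working with distances and neighborhoods, rather than vanishing of polynomials, should interface better with \emph{approximate} stabilizer rank questions (cf.\ Problem~\ref{open_prob_prod}). So your sketch solves the theorem cleanly but bypasses the $V_k/W_k$ infrastructure the section was set up to showcase, and does not quite ``interface with the $V_k/W_k$ language of the preceding observations'' in the way you suggest it will.
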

\begin{proof}
    Let $k$ be the maximum possible exact stabilizer rank for $n$ qubit product states. Then, we know that all product states lie in corresponding $V_k$ spaces. 

    Let $\ket \phi$ be a product state such that $\mathcal W_{k,\ket \phi}$ is maximal, i.e, there is no other $\ket \psi$ for which $\mathcal W_{k,\phi} \subsetneq \mathcal W_{k,\psi}$.

    Consider $\mathcal B$ to be a small neighborhood around $\ket \phi$ (in the product states), such that for any $A\in \mathcal V_k, \ket{\phi}\notin A$, we have $\mathcal B\in A^C$. This is possible because there are only finitely many such $A$'s, and each of them has non zero distance from $\ket \phi$, so we can just take the ball around $\ket \phi$ with small enough radius.

        \begin{lemma}
            $\mathcal B\subseteq \mathcal W_{k,\ket \phi}$.
        \end{lemma}
        \begin{proof}
            Pick any $b\in \mathcal B$. We know that $\mathcal W_{k,b}=\bigcap_{A\in \mathcal V_k,b\in A} A\supseteq \bigcap_{A\in \mathcal V_k,\phi \in A}= \mathcal W_{k,\ket \phi} $ by the choice of $\mathcal B$. However, by the choice of $\ket \phi$, we know that this containment cannot be strict. Therefore, these two subspaces are equal. The lemma follows directly from this.
        \end{proof}
            \begin{lemma}
                If a subspace $V$ of $\mathbb C^{2^n}$ contains an open set $\mathcal B$ (in the product states), then $V=\mathbb C^{2^n}$.
            \end{lemma}
            \begin{proof}
                WLOG, suppose the open set contains $\ket 0^{\otimes n}$. We can assume this as we can shift the product state by application of product unitary operators.

                Consider the state $s(\epsilon)=\cos \epsilon \ket 0 +\sin \epsilon \ket 1$.
                Then, for a small enough $\epsilon$, all $n$-fold tensor products of $\ket 0$ and $s(\epsilon)$ are going to lie in $\mathcal B$. This is because $\mathcal B$ will contain small enough open balls around $\ket 0^{\otimes n}$.

                However, these $2^n$ different tensor products form a basis for $\mathbb C^{2^n}$.
            \end{proof} 

            Combining both of these results tell us that $\mathcal W_{k,\phi}=\mathbb C^{2^n}$. Since the dimension of any $W_k$ space is at most $k$ (as they are intersections of $V_k$ spaces), we get $k\geq 2^n$.
\end{proof}

The idea of using these $V_k$ spaces originates from their use in Mehraban and Tahmasbi's proof of the quadratic lower bound\cite{Saeed}. However, it turns out these are the very objects that even Lovitz and Steffan \cite{Lovitz2022newtechniques} used in their proof of these results, but instead treated them as algebraic structures called varieties. Their proof technique also directly gives us the result that the set of maximal rank states has full measure. However, as we directly make use of the notion of distances, we believe our techniques might be more suitable for working on the problem of approximate stabilizer ranks (see Problem \ref{open_prob_prod}).

\subsection{Density and Openness of Maximum Rank States}
\label{sec_product_state_dense}
In the proof Theorem \ref{thm_product_state_exist}, we assumed $\ket{\phi}$ to have maximal $\mathcal W_{k,\ket{\phi}}$ among all product states. However, looking at the argument carefully, we can see that it only has to be maximal in a neighborhood. We use this observation to extend the proof to the following result:
\begin{theorem}
The set of product states with stabilizer rank $2^n$ is open and dense in the set of product states.
\end{theorem}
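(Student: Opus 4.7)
The plan is to recycle the $V_k$/$W_k$ machinery from the proof of Theorem \ref{thm_product_state_exist}, replacing global maximality of $\mathcal W_{k,\ket\phi}$ with maximality within an arbitrary neighborhood. Openness is immediate: the set of product states with stabilizer rank at most $2^n-1$ is the intersection of the product states with $\bigcup_{A \in \mathcal V_{2^n-1}} A$, a finite union of linear subspaces, hence closed in the product states. Its complement, the set of product states with stabilizer rank exactly $2^n$, is therefore open.

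For density, fix any product state $\ket{\phi_0}$ and any open neighborhood $U$ of $\ket{\phi_0}$ in the product states; the goal is to exhibit a product state in $U$ of stabilizer rank $2^n$. Assume for contradiction that every product state in $U$ has stabilizer rank at most $k := 2^n - 1$. Then every product state in $U$ lies in some $V_k$ subspace, so $\mathcal W_{k,\psi}$ is well-defined for every product state $\psi \in U$. Because $\mathcal V_k$ is finite, the collection $\{\mathcal W_{k,\psi} : \psi \in U \text{ a product state}\}$ is finite, so we may pick a product state $\ket{\phi} \in U$ such that $\mathcal W_{k,\ket{\phi}}$ is maximal (under containment) within this collection. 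Exactly as in Theorem \ref{thm_product_state_exist}, take a neighborhood $\mathcal B \subseteq U$ of $\ket{\phi}$ small enough that $\mathcal B \cap A = \emptyset$ for every $A \in \mathcal V_k$ with $\ket{\phi} \notin A$; this is possible because there are only finitely many such $A$ and each lies at positive distance from $\ket{\phi}$.

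For any $b \in \mathcal B$, this choice forces $\{A \in \mathcal V_k : b \in A\} \subseteq \{A \in \mathcal V_k : \ket{\phi} \in A\}$, whence $\mathcal W_{k,b} \supseteq \mathcal W_{k,\ket{\phi}}$; by the local maximality of $\ket{\phi}$ in $U$, equality holds, and in particular $b \in \mathcal W_{k,\ket{\phi}}$. Thus the linear subspace $\mathcal W_{k,\ket{\phi}}$ contains an entire open neighborhood of $\ket{\phi}$ in the product states. Applying the second lemma from the proof of Theorem \ref{thm_product_state_exist} (after conjugating by a product unitary $U_1 \otimes \cdots \otimes U_n$ that sends $\ket{\phi}$ to $\ket{0}^{\otimes n}$, which preserves both product states and subspaces) forces $\mathcal W_{k,\ket{\phi}} = \mathbb C^{2^n}$. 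This contradicts $\dim \mathcal W_{k,\ket{\phi}} \leq k = 2^n - 1$, so some product state in $U$ must have stabilizer rank $2^n$, proving density.

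The main subtlety relative to the existence argument is that $\mathcal W_{k,\ket{\phi}}$ now has to be chosen maximal only within $U$ rather than globally; the finiteness of $\mathcal V_k$ is exactly what rescues this and is the key structural input. Once that is in place, the rest of the proof is a direct transcription of the existence argument into the local setting, so I do not expect any substantive new obstacle.
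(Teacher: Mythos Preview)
Your proposal is correct and follows essentially the same approach as the paper: openness via the finite union of $V_k$ subspaces being closed, and density by localizing the maximality argument from Theorem \ref{thm_product_state_exist} to an arbitrary neighborhood and deriving a contradiction. The paper's proof is terser (it simply invokes ``the argument in proof of Theorem \ref{thm_product_state_exist}'' after choosing a locally maximal $\mathcal W_{k,\ket{\phi'}}$), while you spell out those steps explicitly, but the content is the same.
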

\begin{proof}
    Fix $k=2^n-1$. For openness, we use a simple topological argument.  Observe that each $V_k$ space is closed in $\mathbb C^{2^n}$, and thus so is their union.

    Now, in $\mathbb C^{2^n}$, the set of vectors which have stabilizer rank exactly $2^n$ is $\mathbb C^{2^n}\setminus \bigcup_{A\in \mathcal V_k} A$, which is open in $\mathbb C^{2^n}$. Since the topology on the product states is the subspace topology when considering the product states as a subset of $\mathbb C^{2^n}$, the openness follows.

    For density, fix any $\ket{\phi}$ and $\epsilon$. Let the $\epsilon$ ball around $\ket {\phi}$ in the product states be $B_i$. If there any product state in $B_i$ which has stabilizer rank $2^n$, then we are done.
    
    Suppose this is not the case. Then, choose $\ket{\phi'}$ with the maximal $\mathcal W_{k,\ket{\phi'}}$ in $B_1$. Now, choose a ball $B_2$ centered at $\ket{\phi'}$ such that $B_2\subseteq B_1$. 
    
    From this, we have that all states in $B_2$ have stabilizer rank at most $k=2^n-1$, and that $\mathcal W_{k,\ket{\phi'}}$ is maximal in $B_2$. Then, by the argument in proof of Theorem \ref{thm_product_state_exist}, we have that $k=2^n$, which is a contradiction. 
\end{proof}

\section{Directions for Future work}

Here, we outline some open problems that arise naturally due to our work, which we hope will help towards solving the stabilizer rank problem:

First, in the proof of Theorem \ref{thm_extent_vs_rank}, we related the smallest eigenvalue of the Gram Matrix to the Stabilizer extent and the rank of the corresponding decomposition. Is it possible to use the structure of the lattice to bound more eigenvalues?

\begin{openproblem}
    In proof of Theorem \ref{thm_extent_vs_rank}, find more relations between eigenvalues and the extent.
\end{openproblem}

\noindent If it were possible to do this, it would strengthen the results in Theorem \ref{thm_extent_vs_rank}. Next, we believe trying to bound $\N_\delta(\ket{H}^{\otimes n})$ is a viable strategy to understand $\chi_{\delta}(\ket H^{\otimes n})$.

\begin{openproblem}
    Find bounds on $\N_{\delta}(\ket H^{\otimes n})$.
\end{openproblem}
\noindent In particular, lower bounds for this would imply corresponding lower bounds on $\chi_{\delta}(\k H^{\otimes n})$. 

\noindent Our Fidelity Amplification result, Theorem \ref{thm_error_amp}, trades off stabilizer rank and relative error. Is it possible to create other such trade offs, such as between rank and number of qubits? For example, could we prove that existence of low stabilizer rank decompositions of $\ket{H}^{\otimes n}$ imply even lower rank decompositions of $\ket{H}^{\otimes m}$ for $m<n$?
\begin{openproblem}
Find analogues to Fidelity Amplification for other trade offs.
\end{openproblem}

For the next problem, note that the existence of exact stabilizer rank $2^n$ product states directly implies that there is some function $\delta: \mathbb N \to (0,1)$ such that for each $n$, $\chi_{\delta(n)}(\ket\psi)=2^n$ for some product state $\ket{\psi}$. One can see this by a limiting argument. However, the limiting argument does not tell us what values of $\delta(n)$ work. So, a natural question would be to narrow this down.

\begin{openproblem}\label{open_prob_prod}
    Find examples of $\delta(n)$ for which $\chi_{\delta(n)}(\ket\psi)=2^{\Omega(n)}$ for some product state $\ket \psi$. 
\end{openproblem}

Note that if the above holds for any $\delta(n)=2^{-O(n)}$, then $\chi(\ket H^{\otimes n})$ will have to be super polynomial. This is because we can get exponentially small error for product state preparation with only polynomially many $T$ gates.

 Finally, it can be useful generalize the notions of the monotones $\N$ and $\N_\delta$ using the Barnes Wall lattices corresponding to $\mathbb Z[e^{i\pi/4}]$, as defined in \cite{kliuchnikov2024stabilizer}. This is because then there would be some vector proportional to $\ket{T}^{\otimes n}$ on the lattice, which is not the case for $BW_n$. Using that, we can get a corresponding lower bound on the $T$-counts of state preparation analogous to Theorem \ref{thm_monotone_CS_count}.

 Note that simply generalizing the definition of $\N$ and $\N_\delta$ to correspond to the usual euclidean lengths is not useful, as $\mathbb Z[e^{i\pi/4}]$ is dense in $\mathbb C$.

 To counteract this, instead of directly working with the $l_2$ norm of the vectors, we can work with some norm function $F$ applied to $\braket{\psi|\psi}$. The norm function defined in \cite{kliuchnikov2024stabilizer} is a candidate, but since it isn't multiplicative, most of the arguments in Section \ref{sec_magic} fail.

 Next, we can consider the field  norm for the extension $\mathbb Q(\sqrt 2)/\mathbb Q$. Since it can be negative, we may further take the absolute value. Using this norm, almost all the arguments in Section \ref{sec_magic} can be recreated, except, we have no knowledge about the minimal vectors with respect to this norm. It would still be the case that the stabilizer states have norm 1, but the arguments in \cite{kliuchnikov2024stabilizer} do not exclude the possibility of even smaller vectors with respect to this norm. In case there are indeed even smaller vectors, then this would lose any significance as a measure of magic. 

 \begin{openproblem}
     Find a natural generalization of $\N$ and $\N_\delta$ for the Barnes Wall lattice for $\mathbb Z[e^{i\pi/4}]$, such that the results in Section \ref{sec_magic} still apply.
 \end{openproblem}

Beyond $\mathbb Z[e^{i\pi/4}]$, most of the arguments fail as the higher cyclotomic fields are no longer Euclidean Domains.

\section{Acknowledgments}
ARK and PS are very grateful to David Gosset for suggesting the connection between Barnes Wall Lattices and Stabilizer Ranks and for subsequent related discussions. We also thank Graeme Smith, Benjamin Lovitz and Yash Totani for discussions. PS is supported by Mike and Ophelia Lazaridis Fellowship. ARK was supported in part by Canada’s NSERC, NTT Research, and the Royal Bank of Canada. IQC and the Perimeter Institute (PI) are supported in part by the Government of Canada through ISED and the Province of Ontario.

\section*{Author Contribution}
The authors jointly discussed all the technical results and wrote the paper collaboratively. LLMs were used to resolve typesetting and formatting problems and to generate Tikz code for Figure \ref{fig_CS_magic}. The outputs have been verified by the authors.

\bibliographystyle{quantum}
\bibliography{rank}
\end{document}